\newtheorem{theorem}{Theorem}[section]
\newtheorem{lemma}[theorem]{Lemma}
\newtheorem{corollary}[theorem]{Corollary}
\newtheorem{observation}[theorem]{Observation}
\newtheorem{conjecture}[theorem]{Conjecture}
\newtheorem{claim}[theorem]{Claim}
\theoremstyle{definition}
\newtheorem{definition}[theorem]{Definition}
\newtheorem{example}[theorem]{Example}
\theoremstyle{remark}
\newtheorem{remark}[theorem]{Remark}
\newenvironment{claimproof}{\begin{proof}}{\end{proof}}
\newcommand{\NN}{{\mathbb N}}
\newcommand{\paths}[3]{\mathcal{P}_{#1}(#2#3)}
\newcommand{\numindsub}[2]{\#\mathrm{IndSub}(#1 \to #2)}
\newcommand{\numsub}[2]{\#\mathrm{Sub}(#1 \to #2)}
\newcommand{\numsubstar}[1]{\#\mathrm{Sub}(#1 \to \,\star\,)}
\newcommand{\pcliquedecision}{\textnormal{\textsc{Clique}}}
\newcommand{\sub}[1]{\textnormal{\textsc{Sub}}(#1)}
\newcommand{\colsub}[1]{\textnormal{\textsc{ColSub}}(#1)}
\newcommand{\csub}[1]{\#\textnormal{\textsc{Sub}}(#1)}
\newcommand{\ccolsub}[1]{\#\textnormal{\textsc{ColSub}}(#1)}
\newcommand{\sindsub}[1]{\#\textnormal{\textsc{IndSub}}(#1)}
\newcommand{\threecol}{\textnormal{\textsc{3-Coloring}}}
\newcommand{\threeass}{\textnormal{\textsc{3-Assignment}}}
\newcommand{\countthreeass}{\#\textnormal{\textsc{3-Assignment}}}
\newcommand{\countthreecol}{\#\textnormal{\textsc{3-Coloring}}}
\newcommand{\can}[1]{#1^{\textnormal{\textrm{id}}}}
\DeclareMathOperator{\tw}{tw}
\DeclareMathOperator{\supp}{supp}
\newcommand{\uparrowsm}{{\scriptscriptstyle\uparrow}}
\newcommand{\downarrowsm}{{\scriptscriptstyle\downarrow}}
\newcommand{\circleuv}{%
    \begin{tikzpicture}[baseline=(uv.base)]
        \draw[fill=gray!30, draw=none] (0,0.07) circle [radius=0.2cm];
        \node at (0,0.04) (uv) {\scriptsize $uv$};
    \end{tikzpicture}%
}
\newcommand*\circlednum[1]{\tikz[baseline=(char.base)]{
            \node[shape=circle,draw,inner sep=1.2pt] (char) {#1};}}
\definecolor{cbfp1}{RGB}{120,94,240}
\definecolor{cbfp2}{RGB}{220,38,127}
\definecolor{cbfp3}{RGB}{254,97,0}
\definecolor{cbfp4}{RGB}{255,176,0}
\newcommand{\orcid}[1]{\href{https://orcid.org/#1}{\includegraphics[height=1.8ex]{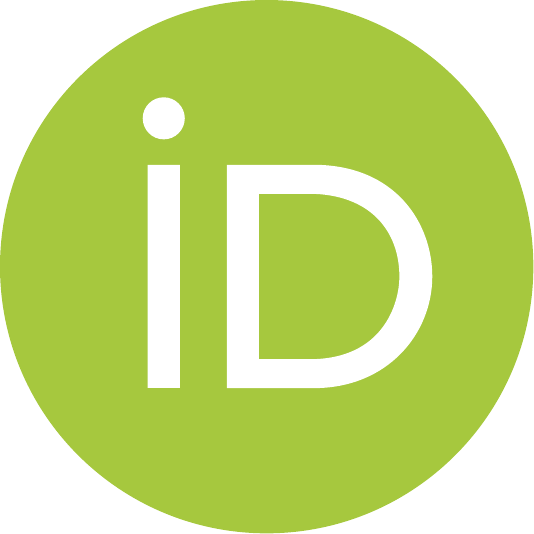}}}
\def\blfootnote{\gdef\@thefnmark{}\@footnotetext}
\title{Can You Link Up With Treewidth?%
\blfootnote{\rightskip=5.7cm
The research is funded by the European Union (ERC, CountHom, 101077083).
Views and opinions expressed are those of the author(s) only and do not necessarily reflect those of the European Union or the European Research Council Executive Agency. Neither the European Union nor the granting authority can be held responsible for them.
}
\blfootnote{\rightskip=5.7cm
The title was found with the help of a popular LLM.
We thank Cornelius Brand for pointing out a connection to extension complexity.
}
}
\author{
Radu Curticapean \orcid{0000-0001-7201-9905} \\
University of Regensburg and IT University of Copenhagen
\and
Simon D\"{o}ring \orcid{0009-0002-6667-5257} \\
Max Planck Institute for Informatics and Saarland University
\and
Daniel Neuen \orcid{0000-0002-4940-0318}\\
University of Regensburg and Max Planck Institute for Informatics
\and
Jiaheng Wang \orcid{0000-0002-5191-545X}\\
University of Regensburg
}
\date{}
\definecolor[named]{urlblue}{cmyk}{1,0.58,0,0.21}
\begin{document}

\maketitle
\begin{textblock}{5}(7.85, 7.55) \includegraphics[width=150px]{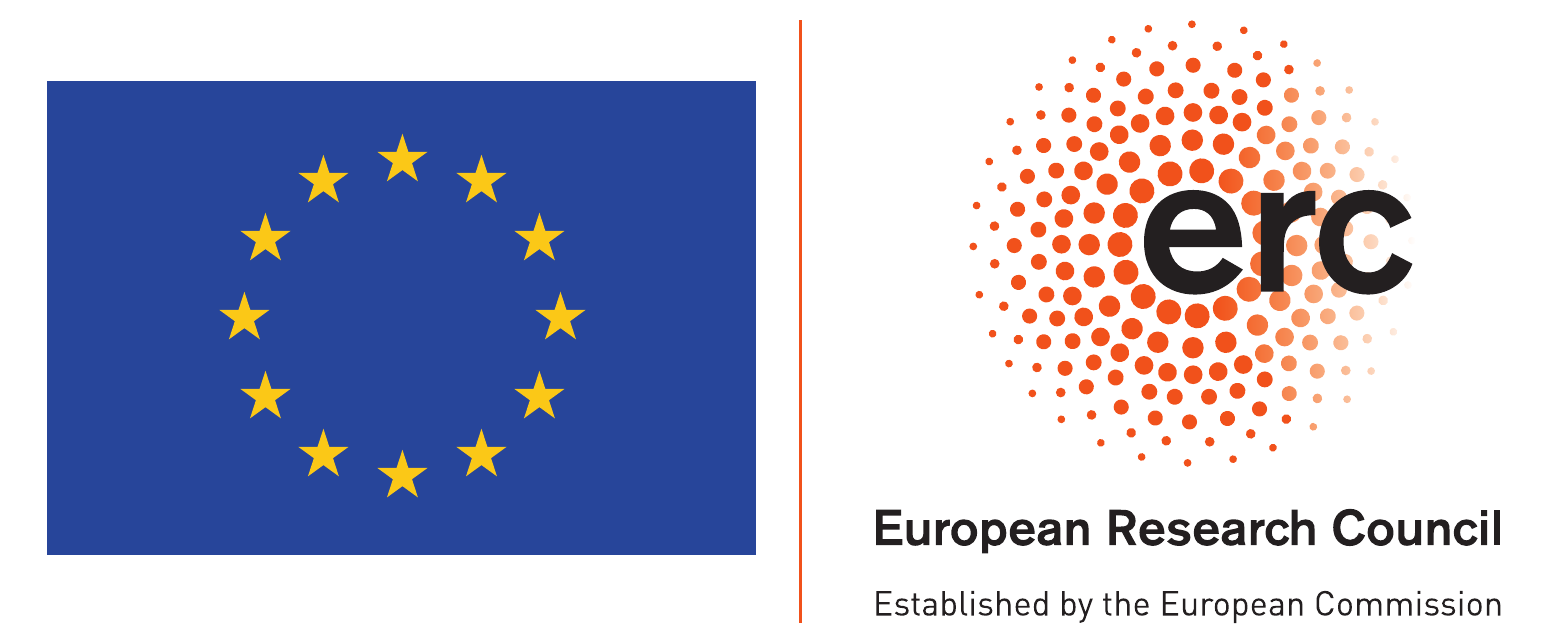} \end{textblock}

\begin{abstract}
    In a fundamental paper in parameterized complexity theory, Marx [ToC~'10] constructed $k$-vertex graphs $H$ of maximum degree $3$ such that $n^{o(k /\log k)}$ time algorithms for detecting colorful $H$-subgraphs would refute the Exponential-Time Hypothesis (ETH).
    This result is widely used to obtain almost-tight conditional lower bounds for parameterized problems under ETH.

    We give a new and fully self-contained proof of this result that further simplifies a recent work by Karthik et al.\ [SOSA 2024].
    In our proof, we introduce a novel graph parameter of independent interest, the \emph{linkage capacity} $\gamma(H)$, and show that detecting colorful $H$-subgraphs in time $n^{o(\gamma(H))}$ refutes ETH.
    Then, we use a simple construction of communication networks credited to Bene\v{s} to obtain $k$-vertex graphs of maximum degree $3$ and linkage capacity $\Omega(k / \log k)$, avoiding arguments involving expander graphs, which were required in previous papers.
    We also show that every graph $H$ of treewidth $t$ has linkage capacity $\Omega(t / \log t)$, thus recovering a stronger result shown by Marx~[ToC~'10] with a simplified proof.
    
    Additionally, we obtain new tight lower bounds on the complexity of colorful subgraph detection for certain types of patterns by analyzing their linkage capacity:
    We prove that almost all $k$-vertex graphs of polynomial average degree $\Omega(k^{\beta})$ for $\beta > 0$ have linkage capacity $\Theta(k)$, which implies tight lower bounds for finding such patterns $H$.
    As an application of these results, we also obtain tight lower bounds for counting small induced subgraphs having a fixed property $\Phi$, improving bounds from, e.g., [Roth et al., FOCS 2020].
\end{abstract}

\clearpage

\section{Introduction}

Over the past two decades, it has been shown that complexity assumptions about \emph{exponential-time} problems imply far-reaching lower bounds for \emph{polynomial-time} \cite{Bringmann19,Williams15,Williams18} and \emph{parameterized} \cite{CyganFKLMPPS15,LMS11} problems.
Among the first such results, it was shown that the Exponential-Time Hypothesis (ETH) about the Boolean satisfiability problem $\textsc{SAT}$ implies an $n^{\Omega(k)}$-time lower bound for the seemingly unrelated parameterized problem $\pcliquedecision$ of detecting $k$-cliques in $n$-vertex graphs \cite{ChenCFHJKX05,ChenHKX06}.
This lower bound solidifies the status of $\pcliquedecision$ as a canonical hard problem in parameterized complexity.

Ideally, when reducing $\pcliquedecision$ to some target problem, we would like to transfer the known $n^{\Omega(k)}$-time lower bound for $\pcliquedecision$ under ETH to the target problem.
However, reductions from $\pcliquedecision$ often require $k$ gadgets to encode the vertices of a $k$-clique and $\Theta(k^2)$ additional gadgets to verify the edges between all pairs of encoded vertices.
As each gadget typically increases the parameter by at least a constant amount, instances for $\pcliquedecision$ are transformed into target instances with a parameter value of $\Theta(k^2)$ (see, e.g., \cite[Section 13.6.3]{CyganFKLMPPS15}).
This in turn means that only $n^{o(\sqrt \ell)}$-time algorithms can be ruled out for a target problem with parameter $\ell$.

Tighter lower bounds could be obtained if we could reduce from a subgraph problem similar to $\pcliquedecision$, but involving $k$-vertex patterns $H$ with only $O(k)$ rather than $\Theta(k^2)$ edges.
More specifically, for a fixed graph $H$, let $\colsub{H}$ be the problem of detecting $H$-subgraph copies in graphs $G$ with vertex-colors from $V(H)$ such that every $v \in V(H)$ is mapped to a vertex of color $v$ in $G$. (This problem can equivalently be interpreted as a \emph{constraint satisfaction problem} with variables $x_v$ for $v \in V(H)$ and arity-$2$ relations $R_e$ for $e\in E(H)$. The domain of $x_v$ is the set of $v$-colored vertices in $G$.)
Many known parameterized reductions from $\pcliquedecision$ can be modified to use $\colsub{H}$ as the reduction source, and a seminal result by Marx~\cite[Corollary 6.1]{Marx10} 
shows that $\colsub{H}$ is indeed hard under ETH for graphs $H$ of maximum degree $3$, albeit not with an entirely tight lower bound:

\begin{theorem}[\cite{SMPS24, Marx10}]
    \label{thm:CYBT-sparse}
    Assuming ETH, there exists a universal constant $\alpha > 0$ and an infinite sequence of graphs $H_1,H_2,\ldots$ such that, for all $k \in \NN$, the graph $H_k$ has $k$ vertices and maximum degree $3$,
    and $\colsub{H_k}$ does not admit an $O(n^{\alpha \cdot k / \log k})$-time algorithm.
\end{theorem}

This theorem has become a standard tool to prove almost-tight lower bounds along the lines of the above reduction scheme, and it has been applied to numerous parameterized problems from a diverse range of areas~\cite{APSZ21,AivasiliotisGR24,AKMR19,BCMP20,DBLP:journals/jocg/BonnetGL19,BIJK19,DBLP:journals/talg/BonnetM20,DBLP:journals/dam/BonnetS17,DBLP:conf/esa/Bringmann0MN16,CFM21,XFHM20,CVMM21,CCG13,CDH21,CurticapeanDM17,DBLP:conf/focs/CurticapeanX15,DEG22,EGN23,DBLP:conf/stacs/EibenKPS19,DBLP:conf/iwpec/EppsteinL18,FPRS23,DBLP:journals/algorithmica/GuoHNS13,DBLP:journals/jcss/JansenKMS13,DBLP:journals/siamdm/JonesLRSS17,KSS22,DBLP:conf/soda/LokshtanovR0Z20,NS22,DBLP:journals/toct/PilipczukW18a,RothSW24}.\footnote{\cref{thm:CYBT-sparse} follows from a more general result proved by Marx~\cite{Marx10}:
Assuming ETH, there exists a constant $\alpha > 0$ such that, for every fixed graph $H$ with treewidth $t$, the problem $\colsub{H}$ cannot be solved in time $O(n^{\alpha \cdot t / \log t})$.
Invoking this theorem with $3$-regular expander graphs yields \cref{thm:CYBT-sparse}.
We defer the discussion of this more general result to \cref{sec:intro-linkage-capacity-vs-treewidth}.
For most of the applications cited above, the above corollary suffices.}

The lower bound in this theorem can also be transferred to the \emph{uncolored} subgraph isomorphism problem; see \cite[Corollary 6.4]{Marx10}.
For a fixed graph $H$, let $\sub{H}$ be the problem of detecting $H$-subgraph copies in uncolored graphs $G$.

\begin{corollary}
    \label{cor:uncolored-sparse}
    Assuming ETH, there exists a universal constant $\alpha > 0$ and an infinite sequence of graphs $H_1,H_2,\ldots$ such that, for all $k \in \NN$, the graph $H_k$ has $\Theta(k)$ vertices and maximum degree $5$,
    and $\sub{H_k}$ does not admit an $O(n^{\alpha \cdot k / \log k})$-time algorithm.
\end{corollary}

\subsection{Main Concept: Linkage Capacity}

In this paper, we provide a new perspective on the seminal \Cref{thm:CYBT-sparse}, which allows us to prove new results and to obtain a significantly simpler proof, even compared to recent simplified versions~\cite{DBLP:conf/soda/JaffkeLMPS23,SMPS24}.
Our new perspective hinges upon a new graph parameter, the \emph{linkage capacity} $\gamma(H)$ of a graph $H$.
Roughly speaking, this parameter measures how well vertices of $H$ can be connected by vertex-disjoint paths on specified endpoint pairs.

\paragraph*{Known Lower Bound for Cliques.}

To explain our approach, let us first sketch the classical $n^{\Omega(k)}$-time lower bound for $\pcliquedecision$ under ETH (see, e.g., \cite[Theorem 14.21]{CyganFKLMPPS15}):
It is known that, assuming ETH, the $\threecol$ problem in $n$-vertex graphs $G$ with maximum degree $4$ cannot be solved in $2^{o(n)}$ time. We transform $G$ into an equivalent instance $X$ of $\pcliquedecision$ with approximately $3^{n/k}$ vertices; then an $n^{o(k)}$-time algorithm for $\pcliquedecision$ would imply a $2^{o(n)}$-time algorithm for the $\threecol$ problem, contradicting ETH.

To transform $G$ into $X$, the vertex set $V(G)$ is divided equitably into blocks $V_1, \ldots, V_k$. The vertices of $X$ correspond to the $3$-colorings of these blocks, and two vertices in $X$ are connected by an edge if their colorings are compatible, meaning they come from different blocks and together form a proper coloring.
This way, the $k$-cliques $K$ in this ``compatiblity graph'' $X$ correspond bijectively to valid $3$-colorings of $G$:
Indeed, the vertices of $K$ provide a valid coloring for each block, and the presence of edges between all $u, v \in V(K)$ in $X$ ensures that the union of these partial colorings is a valid coloring of the entire graph $G$.

\paragraph*{From Cliques to General Subgraphs.}

This argument can be adapted to obtain a lower bound for $\colsub{H}$ with general $k$-vertex patterns $H$. 
First, consider the favorable hypothetical scenario that the vertices of an input graph $G$ for $\threecol$ can be split equitably into blocks $V_1,\ldots,V_k$ corresponding to the $k$ vertices of $H$, such that every edge of $G$ is contained within one block or between blocks $V_i$ and $V_j$ with $ij\in E(H)$.
In other words, $G$ is contained in the \emph{blowup} $H \boxtimes K_t$ for $t \approx n/k$, which is the graph obtained from $H$ by copying every vertex to a block of $t$ clones that form a clique; see also \Cref{fig:grid-congestion}.

In this scenario, not all pairs of partial $3$-colorings need to be checked for compatibility: Instead, it suffices to check for compatibility only between vertices in blocks $V_i$ and $V_j$ satisfying $ij\in E(H)$, since no other edges are present in $G$ and therefore can not cause incompatibilities between partial colorings.

In general however, we cannot assume that $G$ fits into a moderately small blowup of $H$. 
An immediate remedy is to replace the edges in $G$ by paths on fresh vertices and edges to obtain a graph $G'$ that \emph{does} fit into a blowup of $H$.
However, this process may replace most edges by paths that visit many blocks, thus increasing the block size from $n/k$ to $n$.
Even if routing via short paths is possible, it may be possible that a few blocks are hit disproportionally often, leading to the same problem.
Both issues would render a (too) fast algorithm for $\colsub{H}$ useless for the purpose of obtaining a (too) fast algorithm for $\threecol$.

\paragraph*{Linkage Capacity.}

Towards overcoming the above problems, we observe that many $k$-vertex patterns $H$ enable an ``amortized batch-routing'' of edges in blowups $H \boxtimes K_t$.
That is, in the remedy sketched in the previous paragraph, \emph{large sets} of edges can be replaced by paths while adding only about $1/k$ vertices \emph{per edge} to each individual block.
The \emph{linkage capacity} $\gamma(H)$ quantifies how well $H$ supports such amortized routing in blowups $H \boxtimes K_t$.

More precisely, we call a vertex-set $X$ in a graph $F$ \emph{matching-linked} if, for every matching $M$ with vertices from $X$ (but with $M$ possibly containing edges not present in $F$), there exist disjoint $u$-$v$-paths in $F$ realizing the edges $uv \in M$.
The linkage capacity $\gamma(H)$ of a graph $H$ is essentially the largest $c > 0$ such that $H \boxtimes K_t$ contains a matching-linked set $X$ of size $\lfloor ct \rfloor$ for all (large enough) $t$.
We have $\gamma(H) \leq k$, since $X$ lives in $H \boxtimes K_t$, which has $kt$ vertices.

Large linkage capacity means that large matchings can be realized by vertex-disjoint paths in blowups $H \boxtimes K_t$.
This implies that graphs of bounded degree can be realized as well, as also observed in~\cite[Theorem 4.2]{SMPS24}:
For example, the $\threecol$ instances $G$ we consider have maximum degree $4$, so their edge-sets can be partitioned into $5$ matchings. 
The paths obtained for each of the five individual matchings in $H \boxtimes K_{t}$ can then be glued to find $G$ as a topological minor in $H \boxtimes K_{5t}$.
Following the above reduction sketch from $\threecol$ and using this property of graphs with large linkage capacity, we establish a conditional lower bound on the complexity of $\colsub{H}$ based on $\gamma(H)$.

\begin{theorem}
    \label{thm:linkage-ETH}
    Assuming ETH, there exists a universal constant $\alpha > 0$ such that
    no fixed graph $H$ 
    admits an $O(n^{\alpha \cdot \gamma(H)})$-time algorithm for $\colsub{H}$.
    The same holds for the counting version $\ccolsub{H}$ under the counting exponential-time hypothesis \#ETH
\end{theorem}

It remains to determine when $H$ has large linkage capacity. 
For example, if $H$ itself admits a large matching-linked set, then this translates to its blowups, thus establishing large $\gamma(H)$. 
This is however only a sufficient criterion, even though most of our lower bounds are based on it.
As we investigate in \Cref{sec:treewidth}, the linkage capacity is related to fractional multicommodity flow problems whose relevance for lower bounds on $\colsub{H}$ was already identified before \cite{SMPS24, Marx10}.
Linkage capacity however is a more elementary and more applicable concept: The restriction to matchings allows us to connect it to known results on routing with specified terminal pairs;
this in turn allows us to prove new results under ETH without much technical effort.

\subsection{Applications of Linkage Capacity}
\label{sec:intro-applications}

With \Cref{thm:linkage-ETH} in hand, we show lower bounds on the complexity of the colorful $H$-subgraph problem via linkage capacity $\gamma(H)$.
For this, we enlist the help of communication network theory~\cite{2021Benes, Benes64a}, random graph theory~\cite{BFSU96}, linear programming \cite{DBLP:journals/siamcomp/FeigeHL08,DBLP:journals/jacm/LeightonR99}, and classical results from graph theory on connectivity via vertex-disjoint paths~\cite{Mader72,TW05}.

\paragraph*{A Fully Self-Contained Proof of \Cref{thm:CYBT-sparse}.}

Our first application of \Cref{thm:linkage-ETH} is a significantly simplified and self-contained\footnote{A note on the extent to which the proof is self-contained: Our proof starts from the known result that, under ETH, the $\threecol$ problem requires $2^{\Omega(n)}$ time on $4$-regular graphs with $n$ vertices. This can be easily shown from ETH together with the Sparsification Lemma.} proof of the seminal \Cref{thm:CYBT-sparse}.
The original proof of this theorem by Marx~\cite{Marx10} uses highly nontrivial arguments regarding multicommodity flows as a black box~\cite{DBLP:journals/siamcomp/FeigeHL08}.
Even a very recent simplification~\cite{SMPS24} still requires the construction of expander graphs and routing algorithms for such graphs, both of which are highly nontrivial~\cite{DBLP:journals/combinatorica/Alon21,DBLP:journals/jacm/LeightonR99}.

By approaching the problem through linkage capacity, we observe that expansion is \emph{not} required to obtain \Cref{thm:CYBT-sparse}. Instead, we can rely on a very simple construction of telecommunication networks, credited to a 1964 paper by Bene\v{s}~\cite{Benes64a}, then employed at Bell Labs:
A \emph{Bene\v{s} network} contains $s=2^\ell$ input and output vertices, and $k = O(s \log s)$ vertices in total. 
For every pairing of inputs to outputs, the network guarantees private data streams (i.e., vertex-disjoint paths) connecting each input to its specified output.
Both the network construction and routing therein are elementary divide-and-conquer arguments that feature in undergraduate introduction courses to discrete mathematics~\cite{2021Benes}.
From such networks, we easily obtain $k$-vertex graphs of maximum degree $4$ and linkage capacity $\Omega(k / \log k)$.
Combined with \Cref{thm:linkage-ETH}, this gives a novel proof of \Cref{thm:CYBT-sparse}.

We recently found that graphs with large matching-linked sets have been used in communication and extension complexity before:
A paper by Göös, Jain, and Watson~\cite[Section~3.3]{DBLP:journals/siamcomp/Goos0018} mentions ``bounded-degree butterfly graphs'' from an unpublished manuscript on pebble games by Nordström~\cite[Proposition~5.2]{NordstromPebble} as an alternative to expanders;
this alternative construction turns out to be precisely that of Bene\v{s}.

\paragraph*{Tight Lower Bounds for Dense Graphs.}
Alon and Marx~\cite[Theorem 1.4]{DBLP:journals/siamdm/AlonM11} argue that the logarithmic slack for sparse pattern graphs in \Cref{thm:CYBT-sparse} cannot be overcome by current approaches, including ours.
More modestly, one can ask for ``just slightly'' dense $k$-vertex patterns $H$ such that $\colsub{H}$ requires $n^{\Omega(k)}$ time under ETH.

Indeed, Alon and Marx~\cite[Theorem 1.5(2)]{DBLP:journals/siamdm/AlonM11} showed that, for every $\delta > 0$, certain specifically constructed patterns $S$ with average degree $O(k^\delta)$ are particularly good for hosting subgraphs and entail $n^{\Omega(k)}$-time lower bounds on the colorful $S$-subgraph problem~\cite[Theorem 1.8]{DBLP:journals/siamdm/AlonM11}.
For some problems of interest however, e.g., for counting induced $k$-vertex patterns~\cite{CurticapeanN25,DoringMW24,RothSW24}, one can only reduce from the colorful $H$-subgraph problem for \emph{some} (say, adversarially chosen) dense pattern $H$, which may not necessarily be a graph $S$ constructed by Alon and Marx.
This imposes a bottleneck towards tight lower bounds for such problems.

One partial remedy lies in using large clique minors (see, e.g., \cite{DBLP:conf/focs/DalirrooyfardW22,RothSW24}).
Kostochka~\cite{DBLP:journals/combinatorica/Kostochka84} showed that \emph{every} graph $H$ of average degree $d$ contains a $K_q$-minor with $q = \Omega(d / \sqrt{\log d})$.
Given a $K_q$-minor in $H$, a straightforward reduction yields an $n^{\Omega(q)}$-time lower bound on the colorful $H$-subgraph problem under ETH.
This implies that \emph{every} pattern $H$ of linear average degree $\Omega(k)$ requires an exponent of $\Omega(k / \sqrt{\log k})$ for the colorful $H$-subgraph problem (see, e.g., \cite[Corollary 2.1]{DBLP:conf/focs/DalirrooyfardW22}).
While this improves upon the lower bound from \Cref{thm:CYBT-sparse}, a slack of $\Omega(\sqrt{\log k})$ remains.

Using linkage capacity, we eliminate this slack and obtain a tight lower bound for dense patterns:
Combining two textbook results~\cite{graph}, we show that every pattern $H$ of average degree $d$ has linkage capacity $\Omega(d)$.\footnote{This lower bound is asymptotically tight, since worst-case examples like $K_{d,s-d}$ have linkage capacity at most $2d+1$. Indeed, a linked set of $2d+2$ vertices would imply a linkage with $d+1$ paths in $K_{d,s-d}$. This would in particular imply a matching with $d+1$ edges, which clearly does not exist in $K_{d,s-d}$.} 
\Cref{thm:linkage-ETH} then immediately yields:

\begin{theorem}
    \label{thm:dense-hard}
    Assuming ETH, there exists a universal constant $\alpha > 0$ such that
    no fixed graph $H$ with average degree $d$
    admits an $O(n^{\alpha \cdot d})$-time algorithm for $\colsub{H}$.
    The same holds for $\ccolsub{H}$ under \#ETH.
\end{theorem}

This theorem covers the ``worst case'', i.e., patterns $H$ of fixed average degree $d$ that are adversarially chosen so as to minimize $\gamma(H)$.
In particular, for linear average degree, an $n^{\Omega(k)}$ bound under ETH follows.
This implies new tight lower bounds for very general classes of induced pattern counting problems \cite{CurticapeanN25,RothSW24} (see \Cref{sec:indsub-main} for details).

In the ``average case'', much lower density turns out to be sufficient for an $n^{\Omega(k)}$ bound.
Indeed, known results on routing in random graphs~\cite{BFSU96} imply directly that \emph{almost all} $k$-vertex graphs $H$ with average degree $d \in \Omega(k^\beta)$ for constant $\beta > 0$ have linkage capacity $\Theta(k)$.
Observe that the average degree is that of the specifically constructed patterns $S$ by Alon and Marx \cite{DBLP:journals/siamdm/AlonM11}; we show that not only specific patterns, but \emph{almost all} patterns of polynomial average degree have an $n^{\Omega(k)}$ bound for $\colsub{H}$.

More generally, we show that the linkage capacity of the Erd\H{o}s-R\'{e}nyi random graph $\mathcal G(k,p)$ for non-degenerate probabilities $p$ is $\Omega(k/\rho)$, where $\rho = \log(k) / \log(kp)$ is the typical distance between vertices in $\mathcal G(k,p)$ \cite{Bol81,KL81}.
We obtain the following general lower bound:

\begin{theorem}
    \label{thm:avg-dense-hard}
    Assuming ETH, there exists a universal constant $\alpha > 0$ such that for every constant $\varepsilon>0$ and every $p\geq (1+\varepsilon)\log k/k$, the following holds:
    With high probability, for an Erd\H{o}s-R\'{e}nyi random graph $H\sim\mathcal G(k,p)$, the problem $\colsub{H}$ does not admit an $O(n^{\alpha \cdot k / \rho})$-time algorithm.
    Here, $\rho = \log(k) / \log(kp)$ is the typical distance in $\mathcal G(k,p)$.
    The same holds for $\ccolsub{H}$ under \#ETH.
\end{theorem}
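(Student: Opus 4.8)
The plan is to reduce \Cref{thm:avg-dense-hard} to \Cref{thm:linkage-ETH} by showing that, with high probability, the random graph $H \sim \mathcal{G}(k,p)$ satisfies $\gamma(H) = \Omega(k/\rho)$. Granting this, write $\alpha^\star, \gamma_0$ for the constants from \Cref{thm:linkage-ETH} and suppose w.h.p.\ $\gamma(H) \ge \tfrac{1}{C}\cdot k/\rho$ for an absolute constant $C$. Since $1 \le \rho = O(\log k/\log\log k)$ throughout the range $p \ge (1+\varepsilon)\log k/k$, we have $k/\rho \to \infty$, so w.h.p.\ also $\gamma(H) \ge \gamma_0$ once $k$ is large; then, with $\alpha := \alpha^\star/C$, every $O(n^{\alpha\cdot k/\rho})$-time algorithm for $\colsub{H}$ is an $O(n^{\alpha^\star\cdot\gamma(H)})$-time algorithm and hence refutes ETH. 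So everything reduces to the linkage-capacity bound.

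For that bound I would use that it suffices to find a large matching-linked set inside $H$ itself. As explained before \Cref{thm:linkage-ETH}, a matching-linked set $X \subseteq V(H)$ of size $m$ propagates to the blowups $H \ootimes J_t$ and certifies $\gamma(H) = \Omega(m)$: one routes any perfect matching $M'$ on the clone set $X^{(t)}$ by grouping its ``cross-group'' edges into $O(t)$ matchings on $X$ (using that a multigraph of maximum degree $\Delta$ has chromatic index $O(\Delta)$) and realizing the $i$-th such matching through the $i$-th clone-copy of the internal vertices of the disjoint paths that matching-linkedness of $X$ supplies in $H$; reserving a constant fraction of the clone layers for this internal routing costs only a constant factor in $m$. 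Hence it is enough to exhibit, w.h.p., a set $X \subseteq V(H)$ with $|X| = \Omega(k/\rho)$ such that every perfect matching on $X$ is realized by pairwise vertex-disjoint paths in $H$.

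This is a pure statement about vertex-disjoint routing on prescribed terminal pairs in random graphs, and here I would invoke the known routing results of \cite{BFSU96} together with standard facts on distances in $\mathcal{G}(k,p)$ \cite{Bol81, KL81}: for every $\varepsilon > 0$ there is $c = c(\varepsilon) > 0$ such that, for all $p \ge (1+\varepsilon)\log k/k$, with high probability $\mathcal{G}(k,p)$ has the property that every family of at most $c\,k/\rho$ vertex-disjoint pairs of vertices can be linked by pairwise vertex-disjoint paths, each of length $O(\rho)$. Since this is a ``for all terminal configurations'' property holding w.h.p., fixing any set $X$ of $2\lfloor c\,k/\rho\rfloor$ vertices and reading an arbitrary perfect matching on $X$ as the terminal pairs shows that $X$ is matching-linked in $H$, which is exactly what the previous paragraph needs. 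The length bound $O(\rho)$ on the paths is essential: a crude congestion count shows that $\Omega(k/\rho)$ disjoint paths of length $O(\rho)$ do fit into the $k$ vertices of $H$ (and conversely, since typical pairs of vertices are at distance $\rho$, one cannot link substantially more than $\Theta(k/\rho)$ pairs, so the exponent is tight).

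The main obstacle is obtaining the routing statement of the third paragraph with the stated uniformity over the whole range $(1+\varepsilon)\log k/k \le p \le 1$ and with the quantitative dependence coming out exactly as $\rho = \log k/\log(kp)$, the typical interpoint distance: just above the connectivity threshold the paths are forced to have length $\Theta(\log k/\log\log k)$, whereas for $p = \Theta(1)$ single edges already suffice and the claim degenerates to the $\Omega(n)$-linkedness of dense (quasi-)random graphs. Stitching together the sparse-regime routing arguments of \cite{BFSU96} with the distance estimates of \cite{Bol81, KL81} across all admissible $p$ is the only real work; the layering step turning a matching-linked set in $H$ into a lower bound on $\gamma(H)$ is routine but should be carried out with explicit constants so that the final $\Omega(k/\rho)$ survives.
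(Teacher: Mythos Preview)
Your overall plan---bound $\gamma(H) = \Omega(k/\rho)$ with high probability and then invoke \Cref{thm:linkage-ETH}---is exactly the paper's. The difference lies in how the linkage-capacity bound is obtained. You aim for a matching-linked set of size $\Omega(k/\rho)$ \emph{directly inside $H$}, which needs the statement that \emph{every} family of at most $ck/\rho$ disjoint terminal pairs is routable by vertex-disjoint paths in $H$; you correctly flag that extracting this uniformly over the whole range of $p$ from \cite{BFSU96} is ``the only real work''.

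The paper sidesteps that work. Rather than routing small matchings uncongested in $H$, it uses the \emph{partition} form of BFSU as a black box (\Cref{thm:BFSU-gnp}): every matching $M$ on \emph{all} of $V(H)$ splits into $r = O(\rho)$ pieces, each with an uncongested linkage in $H$. Superposing these gives an $r$-congested $\pi(M')$-linkage in $H$ for any matching $M'$ on the first layer, which \Cref{lem:congestion-vs-blowup} turns into an uncongested $M'$-linkage in $H \ootimes J_{2r}$. Hence the entire first layer $X = \{v^{(1)} : v \in V(H)\}$ of size $k$ is matching-linked in $H \ootimes J_{2r}$, and \Cref{lem:linkage-capacity-via-blowup} yields $\gamma(H) \ge k/(6r) = \Omega(k/\rho)$. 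This is \Cref{lem:linkage-capacity-random}.

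So the paper trades a small matching-linked set in $H$ for a large one in a moderate blowup, and thereby consumes BFSU exactly as stated (random equipartition of a full matching into routable pieces) rather than the ``every small adversarial matching is routable'' variant you invoke. Your route is plausible, but the congestion-then-blowup trick is precisely what lets the paper dispatch the bound in a few lines instead of doing the stitching across regimes that you anticipate.
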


Note that $\rho$ is the logarithm of $k$ in the base of the average degree $kp$;
this captures the time needed to concurrently explore all $k$ vertices in a process that branches into $kp$ random vertices from each vertex.
It is intuitively clear that the linkage capacity should be at most $O(k/\rho)$: 
Almost all vertex pairs $u,v$ in a random graph require $u$-$v$-paths of length $\rho$, so we cannot connect more than $k/ \rho$ vertex pairs without exhausting $k$ vertices.
The bound from~\cite{BFSU96} shows that, with high probability, $\Omega(k/ \rho)$ vertex pairs \emph{can} be connected.

The last theorem also implies lower bounds for the \emph{uncolored} subgraph isomorphism problem.
A graph $H$ is a \emph{homomorphic core} if every homomorphism from $H$ to itself is injective, i.e., it is an automorphism.
It is well-known that, for homomorphic cores $H$, the colored subgraph problem $\colsub{H}$ can be reduced in linear time to its uncolored version $\sub{H}$; we include a proof in \Cref{app:uncolored}.
It is also known that a random graph $H\sim\mathcal G(k,p)$ with $k^{-1/3} (\log k)^2 <  p < 1 - k^{-1/3} (\log k)^2$ is a homomorphic core \cite{BonatoP09} with high probability.
Hence, in this parameter regime, the lower bound from \Cref{thm:avg-dense-hard} extends to the uncolored problem. 

\begin{corollary}
    \label{cor:uncolored-avg-dense-hard}
    Assuming ETH, there exists a universal constant $\alpha > 0$ such that for every $k^{-1/3} (\log k)^2 <  p < 1 - k^{-1/3} (\log k)^2$, the following holds:
    With high probability, for an Erd\H{o}s-R\'{e}nyi random graph $H\sim\mathcal G(k,p)$, the problem $\sub{H}$ does not admit an $O(n^{\alpha \cdot k})$-time algorithm.
    The same holds for $\csub{H}$ under \#ETH.
\end{corollary}

It remains an interesting open problem whether \Cref{cor:uncolored-avg-dense-hard} can be extended to the same parameter regime as \Cref{thm:avg-dense-hard}. 
Indeed, in \cite{BonatoP09} it is conjectured that a graph $H\sim\mathcal G(k,p)$ is a homomorphic core, with high probability, if $(1+\varepsilon)\log k/k \leq p \leq 1 - (1+\varepsilon)\log k/k$ with $\varepsilon>0$ a fixed constant.

\subsection{Linkage Capacity and Treewidth}
\label{sec:intro-linkage-capacity-vs-treewidth}

\cref{thm:dense-hard,thm:avg-dense-hard} are based on lower bounds on the linkage capacity of graphs $H$ in terms of the density of $H$.
We show that the linkage capacity can also be lower-bounded as a function of the \emph{treewidth} of $H$.
As already indicated above, \cref{thm:CYBT-sparse} is actually a corollary of a much more general theorem on large-treewidth graphs shown by Marx~\cite{Marx10}:
Assuming ETH, he proved the existence of a universal constant $\alpha > 0$ such that no fixed graph $H$ with treewidth $t$ admits an $O(n^{\alpha \cdot t / \log t})$-time algorithm for $\colsub{H}$.
To obtain \cref{thm:CYBT-sparse} from this general theorem, it suffices to choose a $k$-vertex expander graph $H$ of maximum degree $3$, since such graphs are known to have treewidth $\Omega(k)$. 

We recover this theorem by showing that the linkage capacity of a graph $H$ is lower-bounded by its treewidth, up to the ``same'' logarithmic factor that is missing in the original result by Marx~\cite{Marx10}.
That is, we show the lower bound $\gamma(H) = \Omega(t / \log t)$ for every graph $H$ of treewidth $t$. 
In this proof, we use the same approximate min-cut/max-flow theorem for multicommodity flows~\cite{DBLP:journals/siamcomp/FeigeHL08,DBLP:journals/jacm/LeightonR99} that also appears in~\cite{Marx10} as black box.
Together with \cref{thm:linkage-ETH}, this indeed recovers the more general theorem of Marx~\cite{Marx10} (also for the counting version $\ccolsub{H}$) with a more transparent proof.
Complementing the lower bound, we use a simple argument about balanced separations in low-treewidth graphs to show an upper bound of $\gamma(H) = O(t)$.
We stress that both bounds are asymptotically tight.
In particular, it can be shown that $k$-vertex expander graphs $H$ of maximum degree $3$ have linkage capacity $\gamma(H) = \Theta(k/\log k)$ and treewidth $\Theta(k)$. 

It is a major open question in parameterized complexity whether the logarithmic loss in Marx's lower bound \cite{Marx10} can be avoided for all graphs $H$.

\begin{conjecture}[You Cannot Beat Treewidth \cite{Marx10}]
    Assuming ETH, there exists a universal constant $\alpha > 0$ such that
    no fixed graph $H$ with treewidth $t$
    admits an $O(n^{\alpha \cdot t})$-time algorithm for $\colsub{H}$.
\end{conjecture}

We note that even removing the $1/\log k$ factor from Theorem \ref{thm:CYBT-sparse} would constitute a significant breakthrough.
Alon and Marx~\cite{DBLP:journals/siamdm/AlonM11} showed that current approaches cannot be used to achieve this; this is also true for our techniques.
Still, with \cref{thm:dense-hard,thm:avg-dense-hard}, we extend the scope where tight bounds for $\colsub{H}$ are known.

\section{Preliminaries}

We write $\NN = \{1,2,3,\dots\}$ for the natural numbers.
For $n \in \NN$, we write $[n] \coloneqq \{1,2,\ldots,n\}$. 
All logarithms are natural unless specified otherwise. 

\subsection{Basic Definitions}

We use standard graph notation~\cite{graph}.
Graphs are finite and undirected, and we write $uv$ for edges between $u$ and $v$.
A \emph{path} from $u$ to $v$ is a sequence $P = (u=w_0,w_1,\dots,w_\ell = v)$ of distinct vertices such that consecutive vertices are adjacent.
Slightly abusing notation, we also interpret $P$ as a path from $v$ to $u$.
For a graph $G$ and $X \subseteq V(G)$, we write $G[X]$ for the subgraph induced by $X$ and $G - X \coloneqq G[V(G) \setminus X]$ for the result of removing $X$ from $G$.

A \emph{colored graph} is a triple $G = (V,E,c)$ where $c\colon V(G) \to C$ is a (not necessarily proper) \emph{coloring} of the vertices.
We say $G$ is \emph{canonically colored} if $c$ is the identity mapping and we write $\can{G}$ for the canonically colored version of $G$.

Given a ``pattern'' graph $H$ and ``host'' graph $G$, we write $\numsub{H}{G}$ for the number of subgraphs of $G$ that are isomorphic to $H$.
If $H$ and $G$ are colored, only subgraphs that preserve the coloring are counted.
For a fixed graph $H$, the problem $\ccolsub{H}$ takes as input a colored graph $G = (V,E,c)$ with $c\colon V(G) \to V(H)$, and asks to compute $\numsub{\can{H}}{G}$, while its decision version $\colsub{H}$ asks whether $\numsub{\can{H}}{G} \geq 1$.
The uncolored versions of the problem are denoted by $\csub{H}$ and $\sub{H}$ for the counting and decision version, respectively.

To analyze the complexity of $\colsub{H}$, we rely on several tools.
First, we often use graph blowups, i.e., strong graph products $H \boxtimes K_t$ of $H$ with the complete graph $K_t$. We remark that Marx~\cite{Marx10} uses the notation $H^{(t)}$ instead of $H \boxtimes K_t$.

\begin{definition}[Blowup]
    Given a graph $H$ and an integer $t \geq 1$, the \emph{blowup graph} $H \boxtimes K_t$ contains the vertices $v^{(i)}$ for all $v\in V(H)$ and $i \in [t]$, and edges
    \[\{u^{(i)}v^{(j)} \mid uv \in E(H), \, i,j \in [t]\} \ \cup \ \{u^{(i)}u^{(j)} \mid u \in V(H), \, i \neq j \in [t]\}.\]
\end{definition}

A \emph{multigraph} $M$ is a graph that allows parallel edges with the same endpoints, but no self-loops. 
The \emph{degree} $\deg_M(v)$ of a vertex $v \in V(M)$ is the number of edges incident to $v$, taking multiplicities into account.
The \emph{average degree} of $M$ is $d(M) \coloneqq 2|E(M)|/|V(M)|$.

A matching in $M$ is set $M' \subseteq E(M)$ of pairwise vertex-disjoint edges.
Slightly abusing notation, we regularly interpret $M'$ again as a graph with edge set $M'$ and vertices for all endpoints in $M'$.
A \emph{$q$-edge coloring} of $M$ is a partition of $E(M)$ into $q$ matchings.
The \emph{edge-chromatic number} of $M$, denoted by $\chi'(M)$, is the minimum number $q$ such that $M$ admits a $q$-edge coloring. 
A theorem by Shannon \cite{Shannon49} provides an upper bound on the edge-chromatic number in terms of the maximum degree, though a looser bound with factor $2$ can be achieved by a straightforward greedy algorithm.

\begin{theorem}[{\cite{Shannon49}}]
    \label{thm:shannon}
    Every multigraph $M$ of maximum degree $\Delta$ has $\chi'(M) \leq \lfloor \frac{3}{2}\Delta \rfloor$.
\end{theorem}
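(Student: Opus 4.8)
The bound actually invoked in our applications is often just the trivial greedy bound $\chi'(M)\le 2\Delta-1$ (process the edges in any order; when colouring an edge, fewer than $2\Delta$ edges already meet it), so one option is to only quote that. For the sharp constant stated here, which is Shannon's classical theorem, the plan is to induct on $|E(M)|$ and split on whether $M$ has a heavy parallel class. Throughout I write $\Delta$ for the maximum degree (denoted $d$ in the statement) and set $k:=\lfloor\tfrac32\Delta\rfloor=2\Delta-\lceil\Delta/2\rceil$.

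First I would treat the case that some pair of vertices $x,y$ is joined by $m\ge\lceil\Delta/2\rceil$ parallel edges. Deleting these $m$ edges yields a multigraph $M'$ whose maximum degree is still at most $\Delta$, so by induction $M'$ has a proper $k$-edge-colouring. In that colouring $x$ meets only $\deg_M(x)-m\le\Delta-m$ coloured edges, and likewise $y$, so at most $2(\Delta-m)$ colours occur at $x$ or at $y$. Hence at least $k-2(\Delta-m)=2m-\lceil\Delta/2\rceil\ge m$ colours are simultaneously absent from $x$ and $y$, and assigning $m$ distinct such colours to the $m$ deleted edges extends the colouring to all of $M$.

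In the complementary case every parallel class has size at most $\lceil\Delta/2\rceil-1$, i.e.\ the maximum edge multiplicity satisfies $\mu(M)\le\lceil\Delta/2\rceil-1$. Here I would invoke Vizing's edge-colouring theorem for multigraphs, $\chi'(M)\le\Delta(M)+\mu(M)$, which already gives $\chi'(M)\le\Delta+\lceil\Delta/2\rceil-1\le\lfloor\tfrac32\Delta\rfloor$ (with equality for odd $\Delta$ and a slack of one for even $\Delta$). So the entire difficulty reduces to Vizing's multigraph theorem.

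That theorem is the main obstacle, and it is proved by the classical fan-plus-Kempe-chain recolouring: given a proper $k$-colouring of $M-e$, one grows a maximal Vizing fan at an endpoint of $e$ and either rotates colours along the fan to expose a colour missing at both endpoints of $e$, or performs a Kempe swap on a two-coloured subgraph; the delicate sub-case, in which that two-coloured subgraph is a path joining the two endpoints of $e$, is ruled out by maximality of the fan. Since this is a standard textbook result, an alternative is to cite it; I would also note that for even $\Delta$ a recolouring-free proof is available---embed $M$ as a subgraph of a $\Delta$-regular multigraph (take two disjoint copies of $M$ and, for each vertex, join the two copies by as many parallel edges as its degree is short of $\Delta$), decompose the result into $\Delta/2$ $2$-factors by Petersen's $2$-factorization theorem, and colour each $2$-factor, a disjoint union of cycles, with $3$ colours, for a total of $\tfrac32\Delta$.
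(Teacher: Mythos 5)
The paper does not prove \Cref{thm:shannon}; it only cites Shannon's original result, so there is no in-paper proof to compare against. Your proposal is a correct proof plan, and it is essentially the standard textbook reduction of Shannon's bound to Vizing's theorem for multigraphs. The case analysis is sound: when some parallel class has $m \ge \lceil d/2 \rceil$ edges, deleting it and inducting on $|E|$ leaves a coloring in which at most $2(d-m)$ colors meet $x$ or $y$, so at least $\lfloor 3d/2 \rfloor - 2(d-m) = 2m - \lceil d/2 \rceil \ge m$ colors are simultaneously free at both endpoints, enough to color the $m$ deleted edges; when instead $\mu(M) \le \lceil d/2 \rceil - 1$, Vizing's multigraph bound $\chi' \le d + \mu$ gives $\le \lfloor 3d/2 \rfloor$, with equality for odd $d$ and one unit of slack for even $d$. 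The arithmetic in both branches checks out, and the recursion terminates because $|E|$ strictly decreases. Your recolouring-free argument for even $d$ via a $d$-regular extension, Petersen $2$-factorization, and $3$-coloring each $2$-factor of cycles is also correct. The only caveat is that the whole argument is only self-contained modulo Vizing's multigraph theorem, which you acknowledge and would need to cite or prove separately.

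One small inaccuracy worth flagging: you remark that the paper ``often'' only needs the trivial greedy bound $\chi' \le 2\Delta - 1$. In fact the paper uses the greedy $7$-edge-coloring only in \Cref{lem:reroute-G-to-G'}, but in the proof of \Cref{lem:linkage-capacity-via-blowup} (specifically \Cref{claim:matching-linked-set-in-blowup}) the tighter Shannon constant $3/2$ is used to get $\chi'(M') \le t/2$ from maximum degree $t/3$; with only the greedy bound one would obtain $\chi'(M') \le 2t/3$, which would force the constant $1/3$ in that claim to be degraded to $1/4$. The asymptotic conclusions would survive, but the Shannon constant is genuinely exploited there.
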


\subsection{Linkages}

Our hardness proofs for $\colsub{H}$ crucially rely on linkages in graphs.

\begin{definition}[Linkage and congestion]
    Given a graph $H$ and a multigraph $M$ with vertex set $X \subseteq V(H)$, an \emph{$M$-linkage} in $H$ is a collection of paths $Q = (P_{uv})_{uv\in E(M)}$ such that $P_{uv}$ has endpoints $u$ and $v$.
    For $r \in \NN$, we say that $Q$ is \emph{$r$-congested} if, for all $w\in V(H)$, at most $r$ paths $P_{uv} \in Q$ contain $w$.
    If $r=1$, we call $Q$ an \emph{uncongested} $M$-linkage.
\end{definition}

Observe that, if $Q$ is an uncongested $M$-linkage, then $M$ is necessarily a matching.
We note that we commonly work with uncongested $M$-linkages.
More precisely, we usually require uncongested $M$-linkages in blowups of graphs $H$.
Towards this end, it is often convenient to ``project'' $M$ back to the base graph $H$.
Let $H$ be a graph and let $M$ be a multigraph with $V(M) \subseteq V(H \boxtimes K_q)$.
We define the \emph{$H$-projection} of $M$ to be the multigraph $\pi(M)$ with vertex set
$V(\pi(M)) \coloneqq \{v \mid v^{(i)} \in V(M)\}$
and edge multiset
$E(\pi(M)) \coloneqq \{\!\{ vw \mid v^{(i)}w^{(j)} \in E(M), v \neq w\}\!\}$.

\begin{lemma}
    \label{lem:congestion-vs-blowup}
    Let $H$ be a graph and let $q \in \NN$.
    Also let $M$ be a matching with $V(M) \subseteq V(H \boxtimes K_q)$.
    If there is a $q$-congested $\pi(M)$-linkage in $H$, then there is an uncongested $M$-linkage in $H \boxtimes K_{2q}$.
\end{lemma}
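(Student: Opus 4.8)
The plan is to build the uncongested $M$-linkage in $H\ootimes J_{2q}$ by \emph{lifting} the given $q$-congested $\pi(M)$-linkage $Q=(P_e)_{e\in E(\pi(M))}$ from $H$, replacing each base vertex visited by a path with a suitably chosen clone of it. Recall that $\pi(M)$ records exactly the \emph{cross-edges} of $M$, i.e.\ those $v^{(i)}w^{(j)}\in E(M)$ with $v\neq w$, and that these cross-edges are in natural bijection with $E(\pi(M))$; the \emph{intra-column} edges $v^{(i)}v^{(j)}\in E(M)$ leave no trace in $\pi(M)$. For a cross-edge $v^{(i)}w^{(j)}$ of $M$ with corresponding path $P_e=(v=u_0,u_1,\dots,u_\ell=w)$ in $H$, I will choose clone indices $a_0,\dots,a_\ell\in[2q]$ with $a_0=i$ and $a_\ell=j$ and take the lift $(u_0^{(a_0)},\dots,u_\ell^{(a_\ell)})$. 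This is a valid path in $H\ootimes J_{2q}$ no matter how the $a_s$ are chosen: since $u_su_{s+1}\in E(H)$ and consecutive path vertices lie in distinct columns, $u_s^{(a_s)}u_{s+1}^{(a_{s+1})}$ is a blowup edge, and distinct columns yield distinct vertices. Each intra-column edge $v^{(i)}v^{(j)}$ of $M$ I will realize by the single clone edge $v^{(i)}v^{(j)}$, which is present in $H\ootimes J_{2q}$.

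The crucial observation is that intersection of lifts is a purely \emph{per-column} phenomenon: two lifted paths meet precisely when, at some common base vertex $u\in V(H)$, they are assigned the same clone of $u$. Hence it suffices to choose, independently for each column $u$, an injective labelling by clones in $[2q]$ of the ``tokens'' living at $u$ --- namely the at most $q$ (by $q$-congestion) paths of $Q$ that pass through $u$, plus the endpoints of the intra-column edges of $M$ that lie in column $u$. Some tokens are \emph{forced}: whenever $u^{(i)}\in V(M)$ the clone $i$ must be used, either by the unique $Q$-path realizing the cross-edge incident to $u^{(i)}$ (at its endpoint $u$) or by the clone edge realizing the intra-column edge incident to $u^{(i)}$. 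Since $V(M)\subseteq V(H\ootimes J_q)$, the forced clones at $u$ form a set $S_u\subseteq[q]$ of size $m_u\le q$, and they are pairwise distinct as they index distinct vertices of $M$. The only unforced tokens at $u$ are $Q$-paths using $u$ as an \emph{interior} vertex; their number is at most $q$ minus the number of $Q$-paths ending at $u$, and a short count (splitting $m_u$ into the contributions of cross-edges versus intra-column edges, the latter contributing an even number $\le m_u\le q$) shows this is at most $2q-m_u=|[2q]\setminus S_u|$. So these tokens can be given distinct clones avoiding $S_u$ greedily.

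Performing this assignment column by column produces a lift $P'_e$ for every $e\in E(\pi(M))$ and a clone edge for every intra-column edge of $M$; collectively these are paths realizing exactly the edges of $M$ with the prescribed endpoints, and they are pairwise vertex-disjoint because at each column all assigned clones are distinct. This is the required uncongested $M$-linkage in $H\ootimes J_{2q}$. I expect the only genuine difficulty to be the bookkeeping in the counting step: separating, within each column, the $M$-edges into cross-edges (which pin down endpoint clones of $Q$-paths) and intra-column edges (which pin down their own clones), and verifying that even the extremal case --- all $q$ clones of a column forced, \emph{and} $q$ further $Q$-paths threading through it as interior vertices --- still fits inside $2q$ clones. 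This is precisely why the blowup factor $2q$ is both necessary and sufficient.
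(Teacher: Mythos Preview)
Your proof is correct and follows the same overall approach as the paper: lift each path of the $q$-congested $\pi(M)$-linkage to $H\ootimes J_{2q}$ by choosing clone indices column by column, and realise each intra-column edge of $M$ by the corresponding clone edge. The paper's execution is a bit slicker than your counting argument: it simply reserves the clone range $\{q+1,\dots,2q\}$ for \emph{all} interior occurrences (at most $q$ per column, by $q$-congestion) and the range $[q]$ for the forced positions (endpoints of cross-edge paths and of intra-column edges, which all lie in $[q]$ since $V(M)\subseteq V(H\ootimes J_q)$ and are automatically distinct since $M$ is a matching). This dispenses with the per-column token count, though your argument arrives at the same conclusion.
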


Note that a blowup of order $2q$ rather than $q$ is needed, since we do not allow self-loops in the projection, i.e., the projection of $M$ ignores edges contained in the same block $\{v^{(i)} \mid i \in [q]\}$.
For technical reasons, we decide to handle those edges separately at the cost of losing a factor of two.

\begin{proof}
    Let $Q$ be a $q$-congested $\pi(M)$-linkage in $H$.
    We obtain an uncongested $M$-linkage $Q'$ in $H \boxtimes K_{2q}$ as follows.
    For a vertex $w \in V(H)$, let $P_1,\dots,P_t$ (if any exists) be all paths in $Q$ that contain $w$ as an internal vertex. 
    We have $t \leq q$ by definition.
    We replace $w$ in $P_i$ with the vertex $w^{(q+i)}$ from the blowup graph $H \boxtimes K_{2q}$.
    Also, all endpoints of the paths are replaced in the natural way, i.e., if $P$ has endpoints $u$ and $v$, and $uv$ is the ``projection'' of $u^{(i)}v^{(j)}$ in $M$, then $P$ gets endpoints $u^{(i)}$ and $v^{(j)}$.
    Finally, for each edge $v^{(i)}v^{(j)} \in E(M)$, we add the path $(v^{(i)},v^{(j)})$ to $Q'$.
    By the definition of the blowup graph, the resulting collection $Q'$ is an uncongested $M$-linkage in $H \boxtimes K_{2q}$.
\end{proof}

The following example illustrates the notion of linkages and the interplay between congestion and blowups; see \Cref{fig:grid-congestion}.

\begin{figure}[t]
\centering
\begin{minipage}{0.38\linewidth}
\centering
\begin{tikzpicture}
\xdef\shapeslack{0.3}
\xdef\lxslack{0.4}
\xdef\lyslack{0.7}
\def\gridpos#1#2{({(#1-1)+\lxslack*(#2-1)},{\lyslack*(#2-1)})}
\def\gridtextpos#1#2{({(#1-1)+\lxslack*(#2-1)-0.2},{\lyslack*(#2-1)+0.2})}

\foreach \x in {0,...,5} {
    \draw [black,draw opacity=0.5] (\lxslack*\x,\lyslack*\x) -- ++(5,0);
    \draw [black,draw opacity=0.5] (\x,0) -- ++(\lxslack*5,\lyslack*5);
}

\path [draw=cbfp1,ultra thick] \gridpos{1}{1} -- \gridpos{4}{1} -- \gridpos{4}{4};
\path [draw=cbfp2,ultra thick] \gridpos{2}{2} -- \gridpos{5}{2} -- \gridpos{5}{5};
\path [draw=cbfp3,ultra thick] \gridpos{3}{3} -- \gridpos{6}{3} -- \gridpos{6}{6};

\begin{scope}[shape=circle]
\node [scale=0.5,fill=cbfp1] () at \gridpos{1}{1} {};
\node [scale=0.5,fill=cbfp1] () at \gridpos{4}{4} {};
\node [scale=0.5,fill=cbfp2] () at \gridpos{2}{2} {};
\node [scale=0.5,fill=cbfp2] () at \gridpos{5}{5} {};
\node [scale=0.5,fill=cbfp3] () at \gridpos{3}{3} {};
\node [scale=0.5,fill=cbfp3] () at \gridpos{6}{6} {};

\node [scale=0.3,fill=cbfp1] () at \gridpos{2}{1} {};
\node [scale=0.3,fill=cbfp1] () at \gridpos{3}{1} {};
\node [scale=0.3,fill=cbfp1] () at \gridpos{4}{1} {};

\node [scale=0.3,fill=cbfp2] () at \gridpos{3}{2} {};
\node [scale=0.3,fill=cbfp2] () at \gridpos{5}{2} {};
\node [scale=0.3,fill=cbfp2] () at \gridpos{5}{4} {};

\node [scale=0.3,fill=cbfp3] () at \gridpos{4}{3} {};
\node [scale=0.3,fill=cbfp3] () at \gridpos{6}{3} {};
\node [scale=0.3,fill=cbfp3] () at \gridpos{6}{4} {};
\node [scale=0.3,fill=cbfp3] () at \gridpos{6}{5} {};

\node [scale=0.3,fill=black!70!white] () at \gridpos{4}{2} {};
\node [scale=0.3,fill=black!70!white] () at \gridpos{4}{3} {};
\node [scale=0.3,fill=black!70!white] () at \gridpos{5}{3} {};
\end{scope}

\foreach \x in {1,...,6} {
\node [scale=0.9] () at \gridtextpos{\x}{\x} {$v_\x$};
}

\node [scale=0.9] () at \gridpos{3.5}{2.3} {\textcolor{cbfp2}{$P^{\rightarrow}_{2,5}$}};
\node [scale=0.9] () at \gridpos{4.5}{4.3} {\textcolor{cbfp2}{$P^{\uparrow}_{2,5}$}};

\end{tikzpicture}
\end{minipage}
\begin{minipage}{0.59\linewidth}
\centering

\begin{tikzpicture}
\xdef\lxslack{0.4}
\xdef\lyslack{0.7}
\xdef\lzslack{0.5}
\foreach \x in {0,...,4} {
\foreach \y in {0,...,5} {
    \pgfmathparse{(15-\y-(5-\x))/50} \xdef\transp{\pgfmathresult}
    \draw [black,draw opacity=\transp] (\x+\lxslack*\y,\lyslack*\y) -- ++(1,-\lzslack);
    \draw [black,draw opacity=\transp] (\x+\lxslack*\y,\lyslack*\y-\lzslack) -- ++(1,\lzslack);
}
}
\foreach \x in {0,...,5} {
\foreach \y in {0,...,4} {
    \pgfmathparse{(15-\y-(5-\x))/50} \xdef\transp{\pgfmathresult}
    \draw [black,draw opacity=\transp] (\x+\lxslack*\y,\lyslack*\y) -- ++(\lxslack,\lyslack-\lzslack);
    \draw [black,draw opacity=\transp] (\x+\lxslack*\y,\lyslack*\y-\lzslack) -- ++(\lxslack,\lyslack+\lzslack);
}
}
\foreach \x in {0,...,5} {
\foreach \y in {0,...,5} {
    \draw [black,draw opacity=0.2] (\x+\lxslack*\y,\lyslack*\y) -- ++(0,-\lzslack);
}
}
\foreach \x in {0,...,5} {
    \draw [black,draw opacity=0.5] (\lxslack*\x,\lyslack*\x) -- ++(5,0);
    \draw [black,draw opacity=0.5] (\x,0) -- ++(\lxslack*5,\lyslack*5);
    \draw [black,draw opacity=0.35] (\lxslack*\x,\lyslack*\x-\lzslack) -- ++(5,0);
    \draw [black,draw opacity=0.35] (\x,-\lzslack) -- ++(\lxslack*5,\lyslack*5);
}

\def\gridpos#1#2#3{({(#1-1)+\lxslack*(#2-1)},{\lyslack*(#2-1)-#3*\lzslack})}
\def\gridtextpos#1#2#3{({(#1-1)+\lxslack*(#2-1)-0.2},{\lyslack*(#2-1)-#3*\lzslack+0.2})}

\foreach \x in {1,...,6} {
\node [scale=0.9] () at \gridtextpos{\x}{\x}{0} {$v_\x^{(1)}$};
}
\node [scale=0.9] () at \gridtextpos{1}{1}{1.2} {$v_1^{(2)}$};
\node [scale=0.3,fill=black,shape=circle] () at \gridpos{1}{1}{1} {};

\draw [ultra thick,cbfp3!30!white] \gridpos{3}{3}{0} -- \gridpos{4}{3}{1};
\draw [ultra thick,cbfp3, path fading=south] \gridpos{3}{3}{0} -- \gridpos{4}{3}{1};
\draw [ultra thick,cbfp3!30!white] \gridpos{4}{3}{1} -- \gridpos{5}{3}{0};
\draw [ultra thick,cbfp3, path fading=south] \gridpos{4}{3}{1} -- \gridpos{5}{3}{0};
\draw [ultra thick,cbfp1!30!white] \gridpos{4}{1}{0} -- \gridpos{4}{2}{1};
\draw [ultra thick,cbfp1, path fading=east] \gridpos{4}{1}{0} -- \gridpos{4}{2}{1};
\draw [ultra thick,cbfp1!30!white] \gridpos{4}{2}{1} -- \gridpos{4}{3}{0};
\draw [ultra thick,cbfp1, path fading=south] \gridpos{4}{1.92}{0.92} -- \gridpos{4}{2}{1} -- \gridpos{4}{3}{0} ;
\draw [ultra thick,cbfp2!30!white] \gridpos{5}{3}{1} -- \gridpos{5}{4}{0};
\draw [ultra thick,cbfp2, path fading=south] \gridpos{5}{2.92}{0.92} -- \gridpos{5}{3}{1} -- \gridpos{5}{4}{0};
\draw [ultra thick,cbfp2!30!white] \gridpos{5}{2}{0} -- \gridpos{5}{3}{1};
\draw [ultra thick,cbfp2, path fading=east] \gridpos{5}{2}{0} -- \gridpos{5}{3}{1};
\draw [ultra thick,cbfp1] \gridpos{1}{1}{0} -- \gridpos{4}{1}{0};
\draw [ultra thick,cbfp1] \gridpos{4}{3}{0} -- \gridpos{4}{4}{0};
\draw [ultra thick,cbfp2] \gridpos{2}{2}{0} -- \gridpos{5}{2}{0};
\draw [ultra thick,cbfp2] \gridpos{5}{4}{0} -- \gridpos{5}{5}{0};
\draw [ultra thick,cbfp3] \gridpos{5}{3}{0} -- \gridpos{6}{3}{0};
\draw [ultra thick,cbfp3] \gridpos{6}{3}{0} -- \gridpos{6}{6}{0};

\draw [black,draw opacity=0.5] \gridpos{5}{2}{0} -- \gridpos{6}{2}{0}; 
\draw [black,draw opacity=0.5] \gridpos{4}{1}{0} -- \gridpos{5}{1}{0}; 

\begin{scope}[shape=circle]
\node [scale=0.5,fill=cbfp1] () at \gridpos{1}{1}{0} {};
\node [scale=0.5,fill=cbfp1] () at \gridpos{4}{4}{0} {};
\node [scale=0.5,fill=cbfp2] () at \gridpos{2}{2}{0} {};
\node [scale=0.5,fill=cbfp2] () at \gridpos{5}{5}{0} {};
\node [scale=0.5,fill=cbfp3] () at \gridpos{3}{3}{0} {};
\node [scale=0.5,fill=cbfp3] () at \gridpos{6}{6}{0} {};

\node [scale=0.3,fill=cbfp1] () at \gridpos{2}{1}{0} {};
\node [scale=0.3,fill=cbfp1] () at \gridpos{3}{1}{0} {};
\node [scale=0.3,fill=cbfp1] () at \gridpos{4}{1}{0} {};
\node [scale=0.3,fill=cbfp1!30!white] () at \gridpos{4}{2}{1} {};
\node [scale=0.3,fill=cbfp1] () at \gridpos{4}{3}{0} {};

\node [scale=0.3,fill=cbfp2] () at \gridpos{3}{2}{0} {};
\node [scale=0.3,fill=cbfp2] () at \gridpos{4}{2}{0} {};
\node [scale=0.3,fill=cbfp2] () at \gridpos{5}{2}{0} {};
\node [scale=0.3,fill=cbfp2!30!white] () at \gridpos{5}{3}{1} {};
\node [scale=0.3,fill=cbfp2] () at \gridpos{5}{4}{0} {};

\node [scale=0.3,fill=cbfp3!30!white] () at \gridpos{4}{3}{1} {};
\node [scale=0.3,fill=cbfp3] () at \gridpos{5}{3}{0} {};
\node [scale=0.3,fill=cbfp3] () at \gridpos{6}{3}{0} {};
\node [scale=0.3,fill=cbfp3] () at \gridpos{6}{4}{0} {};
\node [scale=0.3,fill=cbfp3] () at \gridpos{6}{5}{0} {};
\end{scope}

\end{tikzpicture}
\end{minipage}

\vspace*{0.5em}

\begin{minipage}{0.44\linewidth}
\centering
(a)
\end{minipage}
\begin{minipage}{0.50\linewidth}
\centering
(b)
\end{minipage}
\caption{(a) The grid graph $\boxplus_{6}$. Thick paths depict a $2$-congested $M$-linkage, where $M=\{\textcolor{cbfp1}{v_1v_4},\textcolor{cbfp2}{v_2v_5},\textcolor{cbfp3}{v_3v_6}\}$ is a matching on the diagonal vertices.
(b) The blowup graph $\boxplus_{6}\boxtimes K_2$, and an uncongested $M$-linkage obtained from the $2$-congested $M$-linkage in $\boxplus_{6}$.
}  
\label{fig:grid-congestion}
\end{figure}

\begin{example}[Grid graph] \label{exp:grid}
Write $\boxplus_\ell$ for the grid graph on vertex set $[\ell] \times [\ell]$. 
For every matching $M$ on the set of diagonal vertices $v_i =(i,i)$ for $i \in [\ell]$, we observe that $\boxplus_\ell$ contains a $2$-congested $M$-linkage $Q(M) = \{P_{uv}\}_{uv \in M}$. This $2$-congested linkage induces an uncongested $M$-linkage in $\boxplus_\ell \boxtimes K_2$ via Lemma~\ref{lem:congestion-vs-blowup}.
See also \Cref{fig:grid-congestion}.

More specifically, given an edge from $u=(a,a)$ to $v =(b,b)$ for $a<b$, we define $P_{uv}$ as the concatenation of the path $P^\rightarrow_{uv}$ on vertices $u=(a,a), \ldots, (b,a)$ and the path $P^\uparrow_{uv}$ on vertices $(b,a), \ldots , (b,b)=v$.
The paths $P^\rightarrow_{uv}$ for $uv \in M$ are vertex-disjoint (as distinct paths have distinct $y$-coordinates), and so are the paths $P^\uparrow_{uv}$ for $uv \in M$ (having distinct $x$-coordinates), so $Q(M)$ is indeed a $2$-congested $M$-linkage. 
\end{example}

\section{Lower Bounds from Linkage Capacity}

The Exponential-Time Hypothesis ETH~\cite{IP01} postulates the existence of a constant $\alpha > 0$ such that no $O(2^{\alpha \cdot n})$ time algorithm can decide, on input a $3$-CNF formula $\varphi$ with $n$ variables, whether $\varphi$ admits a satisfying assignment. 
Its \emph{a priori} weaker counting version \#ETH postulates the same lower bound for \emph{counting} the satisfying assignments of $\varphi$~\cite{DBLP:journals/talg/DellHMTW14}. 
For both hypotheses, the sparsification lemma \cite{IPZ01, DBLP:journals/talg/DellHMTW14} rules out such algorithms even under the additional condition that every variable in $\varphi$ appears in at most $C$ clauses, for some constant $C \in \NN$.
By a standard reduction, lower bounds follow for the problem $\threecol$ of deciding whether an input graph $G$ admits a proper vertex-coloring with $3$ colors where no adjacent vertices receive the same color; see for example \cite[Theorem 3.2]{LMS11}.

\begin{theorem}
    \label{thm:3col-hard}
    Assuming ETH, there is a constant $\beta > 0$
    such that $\threecol$ cannot be solved in time $O(2^{\beta \cdot n})$ for $n$-vertex input graphs $G$ of maximum degree $4$.
    The same holds for $\countthreecol$ under \#ETH.
\end{theorem}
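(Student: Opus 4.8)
The plan is to compose three linear-size transformations: sparsify the $3$-SAT instance, apply the textbook $3$-SAT $\to \threecol$ gadget reduction, and then cap the maximum degree at $4$. Each step blows up the instance size only by a constant factor, so a hypothetical $O(2^{\delta N})$ algorithm for $\threecol$ on $N$-vertex graphs of maximum degree $4$ would solve $3$-SAT in time $2^{O(\delta n)}$ (up to the $2^{\varepsilon n}$ overhead of sparsification), contradicting ETH for $\delta,\varepsilon$ small. Concretely: by the sparsification lemma \cite{IPZ01}, for every $\varepsilon>0$ a $3$-CNF $\varphi$ on $n$ variables is equisatisfiable with a disjunction of at most $2^{\varepsilon n}$ $3$-CNF formulas, each on at most $n$ variables and with the property that every variable occurs in at most $C=C(\varepsilon)$ clauses; such a formula has $O(n)$ clauses and hence total size $O(n)$.

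Next I would apply the standard gadget reduction from bounded-occurrence $3$-SAT to $\threecol$ (as in \cite[Theorem 3.2]{LMS11}): introduce a single ``ground'' vertex $g$; for every variable $x$ add a triangle on $\{x,\bar x,g\}$, which forces $x$ and $\bar x$ to receive the two colours different from $g$'s colour (interpreted as a Boolean value); and for every clause attach a constant-size OR-gadget by one edge to each of its three literal vertices, so that the gadget is $3$-colourable exactly when not all three inputs are ``False''. The resulting graph $G'$ is properly $3$-colourable iff $\varphi$ is satisfiable, and $|V(G')|=O(n)$. Because every variable occurs $O(1)$ times, every literal vertex and every clause gadget has constant degree; the only vertex of unbounded degree is $g$, whose degree is $\Theta(n)$.

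To obtain maximum degree $4$, I would replace each vertex $v$ of degree larger than $4$ (namely $g$, and any literal vertex whose degree exceeds $4$) by a balanced binary tree of \emph{copy gadgets}, where a copy gadget is $K_5$ minus a triangle: vertices $a,b,c,p,q$ with $p\sim q$ and each of $a,b,c$ adjacent to both $p$ and $q$. Such a gadget forces $a=b=c$ in any proper $3$-colouring while keeping $a,b,c$ at degree $2$ inside the gadget. Rooting such a tree at $v$ (with $v$ in the role of the unique ``$a$'' at the root, each child playing a ``$b$'' or ``$c$'' of its parent, which in turn is the ``$a$'' of its own gadget) yields, after depth $O(\log\deg(v))$, at least $\lceil \deg(v)/2\rceil$ leaves, all forced to the colour of $v$, each with spare degree $2$; routing the edges formerly incident to $v$ to distinct leaves preserves every original adjacency constraint. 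The tree has $O(\deg(v))$ vertices and internal degree at most $4$, so the final graph $G$ satisfies $|V(G)|=O(n)$ and $\Delta(G)\le 4$ and is $3$-colourable iff $\varphi$ is. This proves the decision statement.

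For $\countthreecol$ under $\#$ETH, I would start from the counting sparsification lemma \cite{DBLP:journals/talg/DellHMTW14} (so that the satisfying-assignment counts of the sub-instances sum to that of $\varphi$) and make the above reduction counting-faithful. The variable triangles are rigid, and for every fixed colour of $a$ a copy gadget has exactly $2$ proper extensions, so the trees of copy gadgets multiply the total number of $3$-colourings by $2^{s}$ for an integer $s$ depending only on $G$. The one genuine obstacle is the clause gadget: a naive OR-gadget may admit different numbers of internal colourings for different satisfying True/False patterns on its three inputs, which would destroy the proportionality between $\#\threecol(G)$ and the number of satisfying assignments of $\varphi$. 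I would remove this obstacle by using a \emph{counting-uniform} clause gadget --- one whose number of proper colourings is the same fixed constant for all satisfying input patterns (equivalently, invoking a known parsimonious $\#3$-SAT $\to$ $\#\threecol$ reduction) --- so that $\#\threecol(G)=c\cdot(\#\text{satisfying assignments of }\varphi)$ for a computable $c>0$; summing over the sparsified sub-instances then transfers the $2^{\Omega(n)}$ bound. Designing (or locating) this counting-uniform clause gadget, and double-checking that the degree-reduction trees introduce only the controlled factor $2^{s}$, is where I expect to spend the most effort.
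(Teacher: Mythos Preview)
The paper does not actually prove this theorem: it is stated as a known fact, with the sentence ``By a standard reduction, lower bounds follow \ldots; see for example \cite[Theorem 3.2]{LMS11}'' and no further argument. So there is no in-paper proof to compare against; you are reconstructing the ``standard reduction'' that the paper only cites.

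For the decision version your reconstruction is the expected one (sparsification $\to$ textbook gadget reduction $\to$ degree capping), and it works. One small imprecision: with only a single ground vertex $g$ and triangles $\{x,\bar x,g\}$, the two non-$g$ colours are interchangeable, so a clause gadget cannot tell ``True'' from ``False''. The standard construction you are paraphrasing uses a palette \emph{triangle} $T,F,B$ (with $B$ playing your $g$), and the OR-gadget's output is forced to colour $T$ by adjacency to $F$ and $B$. With that fix, your degree-$4$ capping via the $K_5$-minus-a-triangle equality gadget is fine: $p,q$ are internal of degree $4$, the leaves have spare degree $2$, and the whole tree has $O(\deg v)$ vertices. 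Note that the palette now has three high-degree vertices rather than one, but the same tree trick handles all of them.

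For the counting version you have correctly located the only nontrivial step. The variable triangles and your equality trees contribute clean global factors ($3!$ for the palette, a power of $2$ for the copy gadgets), so the whole burden is on a clause gadget whose number of proper extensions is the \emph{same} positive constant for every satisfying input pattern and zero otherwise. Such gadgets exist and are what the cited $\#$ETH literature uses; you do not need to design one from scratch, but you are right that the naive OR-gadget from Garey--Johnson is not counting-uniform. Once that gadget is fixed, your argument goes through verbatim with the counting sparsification lemma of \cite{DBLP:journals/talg/DellHMTW14} in place of \cite{IPZ01}.
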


This theorem is the foundation for the lower bounds shown in this paper.

\subsection{Instances That Fit into Blowups}

It is useful for us to generalize $\threecol$ slightly, by allowing edges to either enforce equality or disequality of their endpoint colors.
Since ``equality edges'' can be contracted without changing the number of valid assignments,
we obtain an immediate way to simulate edges in a $\threecol$ instance by paths.

\begin{definition}
    Given a graph $G=(V,E)$ with a partition $E = E_= \cup E_{\neq}$, a \emph{proper $3$-assignment} is a function $a : V \to [3]$ such that $a(u) = a(v)$ for all $uv \in E_=$, while $a(u) \neq a(v)$ for all $uv \in E_{\neq}$.
    The problem $\threeass$ asks to determine the existence of a proper assignment on input $(G, E_=, E_{\neq})$, while $\countthreeass$ asks to count them.
\end{definition}

It is possible to convert instances $G$ for $\threeass$ into instances $X$ for $\colsub{H}$.
Moreover, if $G$ fits into a moderately small blowup of $H$, then $X$ is only of moderately exponential size.
This can be shown by a simple ``split-and-list'' reduction that follows the sketch given in the introduction for $\pcliquedecision$.

\begin{lemma}
    \label{lem:split-list}
    Let $H$ be a fixed $k$-vertex graph with canonical vertex-coloring.
    Given a subgraph $G$ of $H \boxtimes K_t$ for $t \in \NN$, a colored graph $X$ on $k \cdot 3^t$ vertices can be computed in $9^t \cdot \mathrm{poly}(k,t)$ time such that $\numsub{H}{X}
    $ equals the number of proper $3$-assignments in $G$.
\end{lemma}

\begin{proof}
    Suppose $V(H) = [k]$ and consider the partition of the vertex set $V(G)$ into the sets $V_w \coloneqq \{w^{(1)},\dots,w^{(t)}\} \cap V(G)$ for $w \in [k]$.
    Define $X_w$ for $w \in [k]$ as the set of all proper $3$-assignments to $G[V_w]$.
    Observe that $V_w$ may be empty; in this case $X_w$ contains exactly the empty mapping.

    For $w,w' \in [k]$, we call two $3$-assignments $a \in X_w$ and $a' \in X_{w'}$ \emph{compatible} if their union is a proper $3$-assignment of $G[V_w \cup V_{w'}]$.
    Let us define
    \begin{align*}
        A_{X} &\coloneqq \{(a_{1},\ldots,a_{k})\in X_{1}\times\ldots\times X_{k} \mid \forall ww'\in E(H):a_{w}\text{ and }a_{w'}\text{ are compatible}\}\\
        A_{G} &\coloneqq \{a\colon V(G)\to[3] \mid a\text{ is proper }3\text{-assignment of }G\}
    \end{align*}
    
    We observe that the map $a \mapsto (a_1, \ldots, a_k)$ from $A_G$ to $A_X$, where $a_w$ is the restriction of $a$ to $V_w$, is a bijection. 
    Indeed, in the image of $a \in A_G$ under this map, $a_w$ and $a_{w'}$ are compatible for all $w,w' \in E(H)$.
    Conversely, given $(a_1, \ldots, a_k) \in A_X$, recall that every edge $uv\in E(G)$ satisfies $u\in V_w$ and $v \in V_{w'}$ for some $w,w' \in [k]$ with (a) $w=w'$, or (b) $ww' \in E(H)$.
    In case (a), since $a_w$ is proper, the endpoints of $uv$ receive a proper assignment under $a$.
    In case (b), because the union of $a_w$ and $a_{w'}$ is a proper $3$-assignment of $G[V_w \cup V_{w'}]$, the endpoints of $uv$ receive a proper assignment under $a$.
    Thus $a \in A_G$.

    Finally, the graph $X$ is defined on vertices $\bigcup_{w \in [k]} X_w$, where each vertex in $X_w$ is colored by $w \in [k]$.
    An edge is present between $a \in X_w$ and $a' \in X_{w'}$ if and only if $ww' \in E(H)$ and $a$ and $a'$ are compatible.
    The (colored) subgraphs $S$ of $X$ isomorphic to $H$ correspond to tuples in $A_X$.
    Indeed, $V(S)$ corresponds to a tuple $(a_{1},\dots,a_{k})\in X_{1}\times\dots\times X_{k}$, and the presence of edges of $H$ in $S$ implies that $a_w$ and $a_{w'}$ are compatible for $ww' \in E(H)$.
    Since $|X_w| \leq 3^t$ for all $w \in [k]$, the graph $X$ can be computed by brute-force in $9^t \cdot \mathrm{poly}(k,t)$ time.
\end{proof}

\subsection{The Linkage Capacity of a Graph}

Next, we formally define the notion of \emph{linkage capacity}.
First, we need to define a term for vertex sets $X$ in graphs that can be paired up arbitrarily via paths in $H$.
This resembles Diestel's~\cite{graph} notion of \emph{linkedness}, see~\Cref{sec:dense-worstcase}, which however requires this property for the entire graph $H$.

\begin{definition}[Matching-linked set \& linkage number]
    Given a graph $H$, we say that $X \subseteq V(H)$ is \emph{matching-linked} if $H$ contains an uncongested $M$-linkage for every matching $M$ on vertex set $X$.
    The \emph{linkage number} of $H$, denoted by $\ell(H)$, is the maximum size of a matching-linked set.
\end{definition}

\begin{remark}
    Let $X \subseteq V(H)$ be a matching-linked set in a graph $H$.
    We stress that for \emph{every} matching $M$ on vertex set $X$ there is an uncongested $M$-linkage in $H$.
    In particular, the edges of $M$ are \emph{not} necessarily  contained in $E(H)$: Only the \emph{endpoints} of $M$ need to be contained in $X$.
\end{remark}

A simple edge-coloring argument, also used in \Cref{lem:reroute-G-to-G'}, shows that large matching-linked sets $X$ in blowups $H \boxtimes K_t$ suffice to embed graphs $G$ of maximum degree $\Delta$ into $H \boxtimes K_{2\Delta \cdot t}$.
Thus, large matching-linked sets $X$ \emph{in blowups of $H$} are a useful ``resource'' attainable from $H$ that allows us to use \Cref{lem:split-list}.

Not all such sets $X$ however need to originate from matching-linked sets in $H$ itself.
Consider a set $X$ in $H$ that \emph{just} fails to be matching-linked, as in \Cref{exp:grid}, in the sense that $X$ still admits $M$-linkages of congestion $2$ in $H$.
Such $M$-linkages then induce uncongested $M$-linkages in $H \boxtimes K_2$.
As our goal is to embed a $\threecol$ instance $G$ into a moderately large blowup of $H$, such a constant-factor loss would be acceptable.
This flexibility is captured by the \emph{linkage capacity}, which measures the maximum size of matching-linked sets in blowups of $H$ relative to the blowup order.

\begin{definition}[Linkage capacity]\label{def:linkage:capacity}
    The \emph{linkage capacity} of a graph $H$ is
    \[\gamma(H) \coloneqq \liminf_{t \to \infty} \frac{\ell(H\boxtimes K_t)}{t}.\]
\end{definition}

For every graph $H$ it trivially holds that $1 \leq \ell(H\boxtimes K_t)/t \leq |V(H)|$ for all $t \in \NN$.
In particular, $1 \leq \gamma(H) \leq |V(H)|$.
By taking the limes inferior over this sequence of numbers, we ensure that large matching-linked sets are available in \emph{all} sufficiently large blowups of $H$. 
The following simple example demonstrates the necessity to consider the limit.

\begin{example}
    Let $P_2$ denote the $3$-vertex path.
    It can be verified that
    \[\frac{\ell(P_2 \boxtimes K_t)}{t} = 2 + \frac{1}{t}\]
    for all $t \in \NN$.
    In particular, $\gamma(P_2) = 2$ although $P_2$ contains a matching-linked set of size $3$.
\end{example}

The example also highlights that taking the supremum over the sequence $\ell(H\boxtimes K_t)/t$ (which is a viable alternative definition) leads to a different graph parameter.
However, the next lemma shows that both options are within a constant factor of one another.
In particular, large matching-linked sets in $H$, or small blowups of $H$, are sufficient to establish high linkage capacity.

\begin{lemma}
    \label{lem:linkage-capacity-via-blowup}
    Let $H$ be a graph.
    Then, for every $q \in \NN$,
    \[\gamma(H) \geq \frac{1}{3} \cdot \frac{\ell(H\boxtimes K_q)}{q}.\]
\end{lemma}

\begin{proof}
    The proof rests on the following claim.
    \begin{claim}
        \label{claim:matching-linked-set-in-blowup}
        Let $H'$ be a graph and suppose $X' \subseteq V(H')$ is a matching-linked set.
        Then $X_t' \coloneqq \{v^{(i)} \mid v \in X', 1 \leq i \leq t/3\}$ is matching-linked in $H' \boxtimes K_t$ for every $t \in \NN$.
    \end{claim}
    \begin{claimproof}
        Let $M$ be a matching on $X_t'$ and let $M' \coloneqq \pi(M)$ be the $H'$-projection of $M$.
       Observe that $M'$ has maximum degree at most $t/3$, so $\chi'(M') \leq t/2$ by \Cref{thm:shannon}.
        Hence, the multigraph $M$ can be partitioned into $r \leq t/2$ matchings $M_1,\dots M_r$ on $X'$.
        Since $X'$ is a matching-linked set, for every $M_i$, there is an uncongested $M_i$-linkage $Q_i$ in $H'$.
        So
        $Q \coloneqq \bigcup_{i \in [r]}Q_i$
        is a $r$-congested $M'$-linkage in $H'$, and there is an uncongested $M$-linkage in $H' \boxtimes K_t$ by \Cref{lem:congestion-vs-blowup}.
    \end{claimproof}
    
    Now, let $X \subseteq V(H\boxtimes K_q)$ be a matching-linked set in $H\boxtimes K_q$ of size $|X| = \ell(H\boxtimes K_q)$.
    Let $c < \frac{1}{3} \cdot |X|/q$.
    We define
    \[t_0 \coloneqq \left(\frac{1}{3} \cdot \frac{|X|}{q} - c\right)^{-1} \cdot |X|.\]
    We argue that
    \begin{equation}
        \label{eq:linkage-inf}
        \ell(H\boxtimes K_t) \geq c \cdot t
    \end{equation}
    for all $t \geq t_0$.
    Indeed, consider the graph $H' \coloneqq H \boxtimes K_q$ and set $\ell \coloneqq \lfloor t/q \rfloor$.
    By Claim \ref{claim:matching-linked-set-in-blowup} the graph $H' \boxtimes K_\ell$ contains a matching-linked set $X_\ell'$ with
    \[
        |X_\ell'| \geq |X| \cdot \left\lfloor \frac{\ell}{3} \right\rfloor \geq |X| \cdot \left(\frac{\ell}{3} - \frac{2}{3}\right)
        \geq |X| \cdot \left(\frac{1}{3} \cdot \left(\frac{t}{q}-1\right) - \frac{2}{3}\right)
        = \frac{1}{3} \cdot \frac{|X|}{q} \cdot t - |X| \geq c\cdot t.
    \]
    Since $(H \boxtimes K_q) \boxtimes K_\ell$ is a subgraph of $H \boxtimes K_t$, we obtain Equation \eqref{eq:linkage-inf} for all $t \geq t_0$.
    Because $c < \frac{1}{3} \cdot |X|/q$ was chosen arbitrarily, the lemma follows.
\end{proof}

As a concrete example, let us use \cref{exp:grid} to bound the linkage capacity of grids.

\begin{lemma}
    \label{lem:linkage-capacity-grid}
    For the $\ell$-by-$\ell$ grid graph $\boxplus_\ell$, we have $\gamma(\boxplus_\ell)\geq \ell/6$. 
\end{lemma}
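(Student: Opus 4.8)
The plan is to exhibit a large matching-linked set inside a small blowup of $\boxplus_\ell$ and then feed it into \Cref{lem:linkage-capacity-via-blowup}. Concretely, I would show that the lifted diagonal $X \coloneqq \{v_i^{(1)} \mid i \in [\ell]\}$, where $v_i = (i,i)$ are the diagonal vertices of $\boxplus_\ell$, is a matching-linked set in $\boxplus_\ell \ootimes J_2$. Granting this, \Cref{lem:linkage-capacity-via-blowup} applied with base graph $\boxplus_\ell$, parameter $q = 2$, and the set $X$ of size $\ell$ yields $\gamma(\boxplus_\ell) \ge \tfrac{1}{3}\cdot |X|/q = \tfrac{\ell}{6} \ge \tfrac{\ell-1}{6}$, which is the stated bound.

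To verify that $X$ is matching-linked in $\boxplus_\ell \ootimes J_2$, I would fix an arbitrary matching $M$ on $X$. Its $\boxplus_\ell$-projection $\pi(M)$ is a matching on the diagonal vertices $\{v_i \mid i \in [\ell]\}$ of $\boxplus_\ell$, so \cref{exp:grid} provides a $2$-congested $\pi(M)$-linkage in $\boxplus_\ell$ whose rightward segments lie on pairwise distinct rows and whose upward segments lie on pairwise distinct columns. As already recorded in \cref{exp:grid} (via \Cref{lem:congestion-vs-blowup}), this $2$-congested linkage lifts to an uncongested $M$-linkage in $\boxplus_\ell \ootimes J_2$: one routes every rightward segment through the first vertex-copy and every upward segment through the second copy, joining the two halves of each path by the clone edge at its corner and, when needed, a clone edge at its upper endpoint so that the path terminates on the copy prescribed by $X$. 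Since distinct paths use distinct rows (hence never meet on a first copy) and distinct columns (hence never meet on a second copy), and since the endpoints come from a matching, the resulting collection is uncongested; thus $X$ is matching-linked in $\boxplus_\ell \ootimes J_2$, and the lemma applies.

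The only step requiring a little care is the copy-bookkeeping in this lift---checking that the rightward/upward copy split genuinely removes all congestion and that each path can be made to end on the copy dictated by $X$---but this is exactly the observation already made in \cref{exp:grid}. Everything else is a direct substitution into \Cref{lem:linkage-capacity-via-blowup}, so I do not expect any substantive difficulty; note in particular that the argument even gives the slightly stronger $\gamma(\boxplus_\ell) \ge \ell/6$.
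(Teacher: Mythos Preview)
Your proposal is correct and follows essentially the same route as the paper: exhibit the diagonal (lifted to the first copy) as a matching-linked set in $\boxplus_\ell \ootimes J_2$ via the $2$-congested $L$-shaped linkage of \cref{exp:grid}, then invoke \Cref{lem:linkage-capacity-via-blowup} with $q=2$. The only cosmetic differences are that the paper restricts to an even-sized subset $\ell' \in \{\ell-1,\ell\}$ of the diagonal---an unnecessary precaution, since matchings on $X$ need not be perfect---while you take the full diagonal and hence obtain the slightly sharper $\gamma(\boxplus_\ell)\geq \ell/6$; and that you spell out the copy-$1$/copy-$2$ lift explicitly, which is in fact needed because a literal application of \Cref{lem:congestion-vs-blowup} with congestion $2$ would only yield $\boxplus_\ell \ootimes J_4$.
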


\begin{proof}
    Suppose $V(\boxplus_\ell) = [\ell]^2$.
    By \cref{exp:grid}, the set $X \coloneqq \{(i,i)^{(1)} \mid i\in [\ell]\}$ is matching-linked in the blowup $\boxplus_\ell\boxtimes K_2$. 
    The lemma follows by invoking \cref{lem:linkage-capacity-via-blowup}. 
\end{proof}

The next lemma shows that linkage capacity has the useful property of being minor-closed.
A graph $H'$ is a \emph{minor} of another graph $H$ if there are pairwise disjoint sets $B(v') \subseteq V(H)$, $v' \in V(H')$, such that
\begin{itemize}
    \item $B(v')$ induces a connected subgraph in $H$ for every $v' \in V(H')$, and
    \item for every $u'v' \in E(H')$ there are $u \in B(u')$, $v \in B(v')$ such that $uv \in E(H)$.
\end{itemize}
We call the sets $B(v')$, $v' \in V(H')$, a \emph{minor model} of $H'$ in $H$.

\begin{lemma}\label{lem:minor-closed}
    Let $H'$ be minor of $H$. Then $\ell(H') \leq \ell(H)$ and $\gamma(H') \leq \gamma(H)$.
\end{lemma}
\begin{proof}
    We first show $\ell(H') \leq \ell(H)$.
    Let $B(v')$, $v' \in V(H')$, be a minor model of $H'$ in $H$.
    For each $v' \in V(H')$, we fix an arbitrary vertex $v \in B(v')$ that represents that vertex.\footnote{Formally, we define a mapping $f(v') = v$. We implicitly perform this mapping by removing the ${\phantom{v}}'$-symbol.} 
    
    Let $X' = \{v_1',\dots,v_s'\}$ be a maximum matching-linked set in $H'$.
    We show that $X \coloneqq \{v_1,\dots,v_s\}$ is a matching-linked set in $H$.
    Towards this end, let $M$ be a matching on $X$. 
    Now, $M$ induces a matching $M'$ of $X'$ (by identifying a vertex $v$ with $v'$).
    Since $X'$ is a matching-linked set, there exists an uncongested $M'$-linkage $Q' = (P'_{u'v'})_{u'v' \in M'}$ in $H'$.
    This collection naturally translates this into an uncongested $M$-linkage in $H$:
    For a path $P'_{u'v'}$ let $B(P'_{u'v'}) \coloneqq \bigcup_{w' \in V(P'_{u'v'})} B(w')$.
    Then $B(P'_{u'v'})$ induces a connected subgraph of $H$ and $u,v \in B(P'_{u'v'})$.
    So there is a path $P_{uv}$ from $u$ to $v$ in $H$ with all internal vertices contained in $B(P'_{u'v'})$.
    The resulting collection of paths $Q = (P_{uv})_{uv \in M}$ is an $M$-linkage in $H$.
    So $\ell(H) \geq |X| = |X'| = \ell(H')$.

    Next, we show that  $\gamma(H') \leq \gamma(H)$.
    Towards this end, observe that $H' \boxtimes K_t$ is a minor of $H \boxtimes K_t$ for every $t \in \NN$.
    Indeed, let $B(\hat{v})$, $\hat{v} \in V(H')$, be a minor model of $H'$ in $H$.
    Then $B_t(\hat{v}^{(i)}) \coloneqq \{w^{(i)} \mid w \in B(\hat{v})\}$, $\hat{v} \in V(H')$, $i \in [t]$, is a minor model of $H' \boxtimes K_t$ in $H \boxtimes K_t$.
    We conclude that $\ell(H' \boxtimes K_t) \leq \ell(H \boxtimes K_t)$ for every $t \in \NN$.
    It follows that $\gamma(H') \leq \gamma(H)$.
\end{proof}

\subsection{Fitting Instances into Blowups via Linkage Capacity}

Having introduced linkage capacity and its key properties, we now use it to embed graphs $G$ into blowups $H \boxtimes K_t$ with $t = O(n / \gamma(H))$.
If we can show that $\gamma(H)$ is large, then ETH implies strong lower bounds for $\colsub{H}$ via \Cref{thm:3col-hard} and \Cref{lem:split-list}.

A minor constructivity issue arises:
Some techniques for lower-bounding $\gamma(H)$ do not necessarily yield efficient algorithms for finding linkages in blowups of $H$.
Thus, it could \emph{a priori} be possible for $G$ to embed into $H \boxtimes K_t$, yet we cannot efficiently find an embedding.
This concern is resolved by known algorithms for graphs of bounded neighborhood diversity~\cite[Theorem~3.7]{Ganian12}, or via the self-contained proof provided in the appendix:

\begin{theorem}\label{theo:find-linked-set}
    Let $f(k) = 3 k^{k + 2}$.
    Given a $k$-vertex graph $H$ and $t \geq 2$ as input, a matching-linked set $X$ of maximum size in $H \boxtimes K_t$ can be found in $O(t^{f(k)})$ time.
    Given additionally a matching-linked set $X$ in $H \boxtimes K_t$ and a matching $M$ with vertex set $X$, an $M$-linkage in $H \boxtimes K_t$ can be found in $O(t^{f(k)})$ time.
\end{theorem}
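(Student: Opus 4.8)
The plan is to exploit that the blowup $H \ootimes J_t$ has only $k$ twin classes---the blocks $B_v = \{v^{(i)} : i \in [t]\}$ for $v \in V(H)$---so that being matching-linked, and admitting an $M$-linkage, depends on only a $k$-bounded amount of data. The first ingredient is a \emph{straightening} observation: inside $H \ootimes J_t$, any uncongested $M$-linkage can be rewritten so that each of its paths has length at most $k-1$, its projection to $H$ is a simple path, and it uses exactly one vertex of every internal block it traverses; moreover, matching edges with both endpoints in one block become single clique-edges. The reason is that whenever a path re-enters a block, revisits a block, or hops inside a block unnecessarily, one can shortcut it along a clique-edge or an $H$-edge; shortcutting only deletes vertices, so vertex-disjointness (hence uncongestedness) is preserved. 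Consequently, an uncongested $M$-linkage exists if and only if one can pick, for each matching edge $e$ between blocks $B_v$ and $B_w$ with $v \neq w$, a simple $v$--$w$-path $P(e)$ in $H$, such that for every $c \in V(H)$ the number of picked paths passing through $B_c$ internally is at most $t$ minus the number of $M$-endpoints lying in $B_c$.

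Because matching edges of the same block-type are interchangeable, this reduces to deciding feasibility of an integer system with one variable $x_P \in \{0,1,\dots,t\}$ for each of the at most $k^k$ simple paths $P$ of $H$: the variables must route exactly $m_{vw}$ units between each pair $\{v,w\}$, where $m_{vw}$ is the number of $M$-edges between $B_v$ and $B_w$, and the internal usage of each block $B_c$ must stay within its residual capacity. The crucial point is that this system has its number of variables and constraints bounded purely in terms of $k$, with all right-hand sides at most $t$; a brute-force search over $x \in \{0,\dots,t\}^{k^k}$ decides feasibility, and returns a witness, in time $t^{O(k^k)}$. (Integer programming in fixed dimension would give a better dependence on $t$, but brute force keeps the argument elementary, matching the goal of a self-contained treatment.)

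For the first assertion, I would enumerate all block-size vectors $(n_v)_{v \in V(H)} \in \{0,\dots,t\}^k$; for each, I test whether it is matching-linked by running the feasibility check above for every matching-type $(m_{vw})$ realizable by a matching on the corresponding set $X$. By the restriction property of linkages---a sub-matching inherits an uncongested linkage---it suffices to consider matchings covering all but at most one vertex of $X$, so there are only $(t+1)^{O(k^2)}$ types to try. Outputting a weight-maximal vector that passes all of its tests, together with the explicit set $X = \{v^{(i)} : v \in V(H),\, i \le n_v\}$, proves the first part; the total time is $(t+1)^{k} \cdot (t+1)^{O(k^2)} \cdot t^{O(k^k)} = O(t^{f(k)})$. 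For the second assertion, given $X$ and $M$, compute the block-type of $M$, run the feasibility check to obtain a witness $x$, assign the prescribed simple paths to the edges of $M$, and greedily fill the internal blocks with clones not yet used; this yields an explicit $M$-linkage within the same time bound.

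The step I expect to be the main obstacle is pinning down the equivalence in the first paragraph in both directions: the straightening lemma must handle matching edges inside a single block and must track which clones of a block are already occupied (by $M$-endpoints versus free for pass-through), and one must justify that quantifying over the finitely many matching-types $(m_{vw})$ faithfully captures the universal quantifier over all matchings $M$ in the definition of ``matching-linked''---this is precisely where the reduction to near-perfect matchings via the restriction property is used. Once this bridge between the combinatorial (path-routing) and arithmetic (counting) descriptions is in place, the remainder is bookkeeping and enumeration.
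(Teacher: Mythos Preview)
Your proposal is correct and follows essentially the same approach as the paper: both exploit the twin-class structure of $H \ootimes J_t$ to reduce the problem to an enumeration over multisets (block-size vectors for $X$, edge-multiplicity vectors for $M$, and path-in-$H$ multiplicities for the linkage), each with entries bounded by $t$ and indexed by $k$-bounded data, giving the $O(t^{f(k)})$ bound. Your treatment is in fact slightly more explicit than the paper's---you spell out the straightening step that justifies restricting to simple $H$-paths and phrase the linkage test as an integer feasibility system rather than a greedy embedding---but these are presentational, not substantive, differences.
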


We can now turn to our main part of the hardness proof.
In the following, given graphs $G$ and $G'$ without loops or multi-edges, a $G$-linkage $Q = (P_{uv})_{uv\in E(G)}$ in $G'$ is a \emph{topological $G$-minor model} in $G'$ if paths $P_{uv}$ and $P_{u'v'}$ for $uv,u'v' \in E(G)$ in $Q$ intersect only at endpoints. In particular, such intersections can occur only if $uv$ and $u'v'$ share a common vertex.
We refer to the subgraph of $G'$ induced by $Q$ as the \emph{image} of $Q$.

\begin{lemma}
    \label{lem:reroute-G-to-G'}
    Let $H$ be a fixed $k$-vertex graph.
    Then there is $n_0 \in \NN$ such that for every graph $G$ with $n \geq n_0$ vertices and maximum degree $\Delta$, it holds that $H \boxtimes K_t$ contains a topological $G$-minor for every
    \[t \geq 3\Delta \cdot \frac{n}{\gamma(H)}.\]

    Moreover, setting $f(k) = 3 k^{k + 2}$, there is an $O(n^{f(k)})$ time algorithm that, given a graph $G$ with $n \geq n_0$ vertices and maximum degree $\Delta$, computes a topological $G$-minor model in $H \boxtimes K_{t^*}$ where $t^* \coloneqq \lceil 3\Delta \cdot n/\gamma(H) \rceil$.
\end{lemma}

\begin{figure}[t]
\centering
\begin{minipage}{0.46\linewidth}
\centering
\begin{tikzpicture}[xscale=2.2,yscale=2.2]
\xdef\lxslack{0.4}
\xdef\lyslack{0.7}
\def\gridpos#1#2{({(#1-1)+\lxslack*(#2-1)},{\lyslack*(#2-1)})}
\def\gridtextpos#1#2{({(#1-1)+\lxslack*(#2-1)-0.2},{\lyslack*(#2-1)+0.2})}

\node [shape=circle,fill=black,scale=0.5] (v1) at \gridpos{1}{3} {};   
\node [shape=circle,fill=black,scale=0.5] (v2) at \gridpos{2}{3} {};   
\node [shape=circle,fill=black,scale=0.5] (v3) at \gridpos{3}{3} {};   
\node [shape=circle,fill=black,scale=0.5] (v4) at \gridpos{3.2}{2.6} {};   
\node [shape=circle,fill=black,scale=0.5] (v5) at \gridpos{1}{2} {};   
\node [shape=circle,fill=black,scale=0.5] (v6) at \gridpos{2}{2} {};   
\node [shape=circle,fill=black,scale=0.5] (v7) at \gridpos{3}{2} {};   
\node [shape=circle,fill=black,scale=0.5] (v8) at \gridpos{1}{1} {};   
\node [shape=circle,fill=black,scale=0.5] (v9) at \gridpos{1.3}{0.7} {};   
\node [shape=circle,fill=black,scale=0.5] (v10) at \gridpos{2}{1} {};   
\node [shape=circle,fill=black,scale=0.5] (v11) at \gridpos{2.3}{0.7} {};   
\node [shape=circle,fill=black,scale=0.5] (v12) at \gridpos{3}{1} {};   
\node [shape=circle,fill=black,scale=0.5] (v13) at \gridpos{3.3}{0.7} {};

\path [draw=black!70!white] (v1) -- (v2) -- (v3) -- (v7) -- (v4) -- (v3);
\path [draw=black!70!white] (v1) -- (v5) -- (v6) -- (v2);
\path [draw=black!70!white] (v6) -- (v7) -- (v12) -- (v13) -- (v7);
\path [draw=black!70!white] (v12) -- (v11) -- (v9) -- (v8) -- (v5) -- (v6) -- (v10);
\path [draw=black!70!white] (v8) -- (v11);

\path [draw=cbfp1, ultra thick] (v1) edge [bend right] node {} (v8);
\path [draw=cbfp1, ultra thick] (v3) -- (v6);
\path [draw=cbfp2, ultra thick] (v2) edge [bend right=15] node {} (v8);
\path [draw=cbfp2, ultra thick] (v4) edge [bend right=15] node {} (v11);
\path [draw=cbfp3, ultra thick] (v1) -- (v10);
\path [draw=cbfp3, ultra thick] (v2) -- (v7);
\end{tikzpicture}
\end{minipage}
\begin{minipage}{0.52\linewidth}
\centering
\begin{tikzpicture}[xscale=2.2,yscale=2.2]
\xdef\lxslack{0.4}
\xdef\lyslack{0.7}
\xdef\lzslack{0.15}
\def\gridpos#1#2#3{({(#1-1)+\lxslack*(#2)-sin(#3*120-75)*\lzslack},{\lyslack*(#2-1)+cos(#3*120-75)*\lzslack})}
\def\gridcentrepos#1#2{({(#1-1)+\lxslack*(#2)},{\lyslack*(#2-1)})}

\foreach \x in {1,2}{
\foreach \y in {1,2}{
\foreach \tt in {0,1,2}{
\foreach \ttt in {0,1,2}{
\path [draw=black!16!white] \gridpos{\x}{\y}{\tt} -- \gridpos{\x+1}{\y}{\ttt};
\path [draw=black!16!white] \gridpos{\x}{\y}{\tt} -- \gridpos{\x}{\y+1}{\ttt};
}}}}
\foreach \x in {1,2}{
\foreach \y in {3}{
\foreach \tt in {0,1,2}{
\foreach \ttt in {0,1,2}{
\path [draw=black!16!white] \gridpos{\x}{\y}{\tt} -- \gridpos{\x+1}{\y}{\ttt};
}}}}
\foreach \x in {3}{
\foreach \y in {1,2}{
\foreach \tt in {0,1,2}{
\foreach \ttt in {0,1,2}{
\path [draw=black!16!white] \gridpos{\x}{\y}{\tt} -- \gridpos{\x}{\y+1}{\ttt};
}}}}
\foreach \x in {1,2,3}{
\foreach \y in {1,2,3}{
\path [draw=black!16!white] \gridpos{\x}{\y}{0} -- \gridpos{\x}{\y}{1} -- \gridpos{\x}{\y}{2} -- \gridpos{\x}{\y}{0};
}
}

\node [shape=circle,fill=black,scale=0.5] (v1) at \gridpos{1}{3}{0} {};   
\node [shape=circle,fill=black,scale=0.5] (v2) at \gridpos{2}{3}{2} {};   
\node [shape=circle,fill=black,scale=0.5] (v3) at \gridpos{3}{3}{1} {};   
\node [shape=circle,fill=black,scale=0.5] (v4) at \gridpos{3}{3}{2} {};   
\node [shape=circle,fill=black,scale=0.5] (v5) at \gridpos{1}{2}{2} {};   
\node [shape=circle,fill=black,scale=0.5] (v6) at \gridpos{2}{2}{2} {};   
\node [shape=circle,fill=black,scale=0.5] (v7) at \gridpos{3}{2}{0} {};   
\node [shape=circle,fill=black,scale=0.5] (v8) at \gridpos{1}{1}{0} {};   
\node [shape=circle,fill=black,scale=0.5] (v9) at \gridpos{1}{1}{2} {};   
\node [shape=circle,fill=black,scale=0.5] (v10) at \gridpos{2}{1}{1} {};   
\node [shape=circle,fill=black,scale=0.5] (v11) at \gridpos{2}{1}{2} {};   
\node [shape=circle,fill=black,scale=0.5] (v12) at \gridpos{3}{1}{2} {};   
\node [shape=circle,fill=black,scale=0.5] (v13) at \gridpos{3}{1}{0} {};

\node [shape=circle,fill=cbfp1,scale=0.4] (r1) at \gridpos{1}{2}{1} {};   
\node [shape=circle,fill=cbfp1,scale=0.4] (r2) at \gridpos{2}{3}{1} {};   
\node [shape=circle,fill=cbfp2,scale=0.4] (r3) at \gridpos{2}{2}{0} {};   
\node [shape=circle,fill=cbfp2,scale=0.4] (r4) at \gridpos{2}{1}{0} {};   
\node [shape=circle,fill=cbfp2,scale=0.4] (r5) at \gridpos{3}{2}{1} {};   
\node [shape=circle,fill=cbfp2,scale=0.4] (r6) at \gridpos{3}{1}{1} {};   
\node [shape=circle,fill=cbfp3,scale=0.4] (r7) at \gridpos{1}{2}{0} {};   
\node [shape=circle,fill=cbfp3,scale=0.4] (r8) at \gridpos{2}{2}{1} {};   
\node [shape=circle,fill=cbfp3,scale=0.4] (r9) at \gridpos{3}{3}{0} {};   

\node [shape=circle,fill=black!30!white,scale=0.3] () at \gridpos{1}{1}{1} {};
\node [shape=circle,fill=black!30!white,scale=0.3] () at \gridpos{3}{2}{2} {};
\node [shape=circle,fill=black!30!white,scale=0.3] () at \gridpos{1}{3}{1} {};
\node [shape=circle,fill=black!30!white,scale=0.3] () at \gridpos{1}{3}{2} {};
\node [shape=circle,fill=black!30!white,scale=0.3] () at \gridpos{2}{3}{0} {};

\path [draw=black!70!white] (v1) -- (v2) -- (v3) -- (v7) -- (v4) -- (v3);
\path [draw=black!70!white] (v1) -- (v5) -- (v6) -- (v2);
\path [draw=black!70!white] (v6) -- (v7) -- (v12) -- (v13) -- (v7);
\path [draw=black!70!white] (v12) -- (v11) -- (v9) -- (v8) -- (v5) -- (v6) -- (v10);
\path [draw=black!70!white] (v8) -- (v11);

\path [draw=cbfp1,ultra thick] (v1) -- (r1) -- (v8);
\path [draw=cbfp1,ultra thick] (v3) -- (r2) -- (v6);
\path [draw=cbfp2,ultra thick] (v2) -- (r3) -- (r4) -- (v8);
\path [draw=cbfp2,ultra thick] (v4) -- (r5) -- (r6) -- (v11);
\path [draw=cbfp3,ultra thick] (v1) -- (r7) -- (r8) -- (v10);
\path [draw=cbfp3,ultra thick] (v2) -- (r9)  -- (v7);

\end{tikzpicture}
\end{minipage}

\vspace*{0.5em}

\begin{minipage}{0.44\linewidth}
\centering
(a)
\end{minipage}
\begin{minipage}{0.50\linewidth}
\centering
(b)
\end{minipage}
\caption{(a) A graph $G$ that fails to be embedded into the blowup $\boxplus_{3} \boxtimes K_2$ due to the colored edges, which are partitioned into three matchings. (b) An embedding of $G$ into $\boxplus_{3} \boxtimes K_3$ as a topological minor, where each colored edge gets routed via new vertices from the blowup.}
\label{fig:test}
\end{figure}

\begin{proof}
    By definition of $\gamma(H)$, there is $s_H \in \NN$ such that
    \[\ell(H \boxtimes K_s) \geq \frac{3}{4} \cdot \gamma(H) \cdot s\]
    for every $s \geq s_H$.
    We fix $n_0 \in \NN$ such that $n_0 \geq \max(6,s_H) \cdot \gamma(H)$.

    Let $t' \coloneqq \lceil \frac{4}{3} \cdot n/\gamma(H) \rceil \leq \frac{3}{2} \cdot n/\gamma(H)$.
    Since $n \geq n_0$, we conclude that $t' \geq s_H$.
    Hence,
    \[\ell(H \boxtimes K_{t'}) \geq \frac{3}{4} \cdot \gamma(H) \cdot t' \geq n,\]
    i.e., the graph $H \boxtimes K_{t'}$ contains a matching-linked set $X \subseteq V(H \boxtimes K_{t'})$ of size $n$.
    In $O(n^{f(k)})$ time, \Cref{theo:find-linked-set} finds such a set $X$.
    Fix $V(G) = X$ in the following.
    
    Using the straightforward greedy algorithm, we can compute a $q$-edge-coloring $E(G) = M_1 \cup \dots \cup M_q$ of $G$ in time $O(\Delta n)$ where $q = 2 \Delta - 1$.
    Since $X$ is matching-linked, the graph $H \boxtimes K_{t'}$ contains an uncongested $M_i$-linkage $Q_i$ for every individual $i \in [q]$.
    Each linkage can be found $O(n^{f(k)})$ time via \Cref{theo:find-linked-set}.
    These linkages together induce a topological $G$-minor model in $H \boxtimes K_{qt'}$ (see Figure \ref{fig:test}):
    Consider $V(H \boxtimes K_{qt'})$ to be partitioned into $q$ layers such that layer $i \in [q]$ contains the vertices $v^{(j)}$ with $v \in V(H)$ and $j \in (t'-1)i+[t']$.
    By placing non-endpoint vertices from the linkages $Q_1,\ldots,Q_q$ into different layers and keeping all endpoints in the first layer, we obtain a topological $G$-minor model $Q$ in $H \boxtimes K_{qt'}$.
    To complete the proof, we observe that $qt' \leq 2\Delta \cdot t' \leq t$.
\end{proof}

\begin{corollary}
    \label{cor:reroute-G-to-G'}
    Let $H$ be a fixed $k$-vertex graph and let $f(k) = 3 k^{k + 2}$.
    Then there is an $O(n^{f(k)})$ time algorithm that, 
    given an instance $G$ for $\threecol$ with $n$ vertices and maximum degree $4$, 
    either lists all 3-colorings of $G$ or outputs a $\threeass$ instance with graph $G'$ such that
    \begin{enumerate}
        \item $G'$ is the image of a topological $G$-minor in $H \boxtimes K_t$ for $t \coloneqq 12\lceil n / \gamma(H) \rceil$, and 
        \item the proper $3$-colorings of $G$ correspond bijectively to the proper $3$-assignments of $G'$.
    \end{enumerate}
\end{corollary}

\begin{proof}
    Let $n_0$ denote the integer from \Cref{lem:reroute-G-to-G'}.
    If $n < n_0$ then we list all 3-colorings of $G$ by brute-force in constant time.
    Otherwise, we can compute a topological $G$-minor model $Q$ in $H \boxtimes K_t$ in time $O(n^{f(k)})$ by Lemma \ref{lem:reroute-G-to-G'}.
    Let us write $G'$ for the image of $Q$.
    
    We finalize the construction of the $\threeass$ instance by specifying a partition of $E(G')$ into $E_=$ and $E_{\neq}$: For each path $P_{uv}$ in $Q$, place one arbitrary edge into $E_{\neq}$ and all other edges into $E_=$.
    Then the proper $3$-assignments to $G'$ correspond to the proper $3$-colorings of $G$, since contracting all equality edges in $G'$ (which does not change the number of $3$-assignments) yields an isomorphic copy of $G$ on disequality edges.
\end{proof}

Combining the above, the proof of \Cref{thm:linkage-ETH} is complete.

\begin{proof}[Proof of \Cref{thm:linkage-ETH}]
By \Cref{thm:3col-hard}, ETH implies a constant $\beta > 0$ such that no $O(2^{\beta \cdot n})$-time algorithm solves $\threecol$ on $n$-vertex graphs $G$ of maximum degree $4$.
We set $\alpha = \beta / 39$
and derive a contradiction from an $O(s^{\alpha \cdot \gamma})$-time algorithm for $\colsub{H}$ on $s$-vertex input graphs, where $H$ is any fixed graph with $\gamma = \gamma(H)$. 
Moreover, we only need to consider the case where $\gamma \geq 39/\beta$, because otherwise the theorem is trivial.

In the following, let $G$ be an instance for $\threecol$ of maximum degree $4$.
In time $O(n^{f(k)})$, \Cref{cor:reroute-G-to-G'} computes from $G$ an equivalent instance for $\threeass$ with a graph $G' \subseteq H \boxtimes K_t$ for $t = 12\lceil n / \gamma \rceil$ (or lists all 3-colorings of $G$).
In time $9^t \cdot \mathrm{poly}(k,t)$,
\Cref{lem:split-list} then yields a graph $X$ with $|V(X)| \leq k \cdot 3^t$ such that $3$-assignments in $G'$ correspond to colorful $H$-copies in $X$.
The overall running time to construct $X$ is 
\begin{equation}
\label{eq:eth-conclusion-1}
    O(n^{f(k)}) + 9^{12\lceil n / \gamma \rceil} \cdot \mathrm{poly}(k,t) 
\ = \ O(2^{39 n / \gamma})
\ = \ O(2^{\beta \cdot n}).
\end{equation}
In the last step, we use the aforementioned assumption $\gamma \geq 39 / \beta$.
Then use the assumed $O(s^{\beta/39 \cdot \gamma})$-time algorithm for $\colsub{H}$ on $s$-vertex input graphs.
Its running time on the graph $X$ constructed before, with $s \leq k \cdot 3^t$ vertices, is
\begin{equation}
\label{eq:eth-conclusion-2}
O({(k \cdot 3^t)}^{\beta/39 \cdot \gamma}) 
\ = \ O(3^{12\lceil n / \gamma \rceil \cdot \beta/39 \cdot \gamma}) 
\ = \ O(3^{\beta / 2 \cdot n})
\ = \ O(2^{\beta \cdot n}).
\end{equation}

Combining \Cref{eq:eth-conclusion-1,eq:eth-conclusion-2}, we conclude that both constructing $X$ and solving $\colsub{H}$ on $X$ can be achieved in overall time $O(2^{\beta \cdot n})$.
This contradicts \Cref{thm:3col-hard}.
The proof for the counting version is analogous.
\end{proof}

\section{Switching Networks}

\begin{figure}[t]
\centering
\begin{minipage}{0.48\linewidth}
\centering
\begin{tikzpicture}[xscale=1.05,yscale=1.05]
\xdef\magicnum{0.75}
\xdef\rectslack{0.18}
\draw[draw opacity=0, fill opacity=0.1, fill=black!90!white] (1-\rectslack,0*\magicnum-\rectslack) rectangle (4+\rectslack,3*\magicnum+\rectslack);
\draw[draw opacity=0, fill opacity=0.1, fill=black!90!white] (1-\rectslack,4*\magicnum-\rectslack) rectangle (4+\rectslack,7*\magicnum+\rectslack);
\foreach \y in {0,...,7} {
    \pgfmathparse{int(\y+1)} \xdef\yone{\pgfmathresult}
    \node [shape=circle,fill=black,scale=0.5] (V0{\y}) at ({0},{\y*\magicnum}) {};
    \node [shape=circle,fill=black,scale=0.5] (V1{\y}) at ({1},{\y*\magicnum}) {};
    \node [shape=circle,fill=black,scale=0.5] (V4{\y}) at ({4},{\y*\magicnum}) {};
    \node [shape=circle,fill=black,scale=0.5] (V5{\y}) at ({5},{\y*\magicnum}) {};
    \node [scale=0.9] () at ({-0.3},{(7-\y)*\magicnum}) {$v_{\yone}$};
    \node [scale=0.9] () at ({5.3},{(7-\y)*\magicnum}) {$w_{\yone}$};
}
\foreach \y in {1,...,4} {
    \node [scale=0.9] () at (1.3,{5.5*\magicnum+\magicnum*(2.5-\y)*1}) {$v^\uparrow_{\y}$};
    \node [scale=0.9] () at (3.7,{5.5*\magicnum+\magicnum*(2.5-\y)*1}) {$w^\uparrow_{\y}$};
    \node [scale=0.9] () at (1.3,{1.5*\magicnum+\magicnum*(2.5-\y)*1}) {$v^\downarrow_{\y}$};
    \node [scale=0.9] () at (3.7,{1.5*\magicnum+\magicnum*(2.5-\y)*1}) {$w^\downarrow_{\y}$};
}
\foreach \y in {0,...,7} {
    \pgfmathparse{int(\y+4*(1-2*isodd(div(\y,4))))} \xdef\laby{\pgfmathresult}
    \path [] (V0{\y}) edge node [] {} (V1{\laby});
    \path [] (V0{\y}) edge node [] {} (V1{\y});
    \path [] (V4{\y}) edge node [] {} (V5{\laby});
    \path [] (V4{\y}) edge node [] {} (V5{\y});
}
\node [scale=1.2] () at (2.5,1.5*\magicnum) {$B^\downarrow_{2}$};
\node [scale=1.2] () at (2.5,5.5*\magicnum) {$B^\uparrow_{2}$};
\end{tikzpicture}
\end{minipage}
\begin{minipage}{0.48\linewidth}
\centering
\begin{tikzpicture}[xscale=1.05,yscale=1.05]
\xdef\magicnum{0.75}
\xdef\ccnt{3} 
\pgfmathparse{int(\ccnt-1)} \xdef\maxx{\pgfmathresult}
\pgfmathparse{int(pow(2,\ccnt)-1)} \xdef\maxy{\pgfmathresult}

\xdef\rectslack{0.18}
\foreach \x in {\maxx,...,1} {
    \pgfmathparse{int(pow(2,\ccnt-\x)-1)} \xdef\bcntm{\pgfmathresult}
    \pgfmathparse{\ccnt-\x} \xdef\boxlx{\pgfmathresult}
    \pgfmathparse{\ccnt+\x-1} \xdef\boxrx{\pgfmathresult}
    \foreach \t in {0,...,\bcntm} {
        \pgfmathparse{int(pow(2,\x))*\t} \xdef\boxby{\pgfmathresult}
        \pgfmathparse{int(pow(2,\x))*(\t+1)-1} \xdef\boxty{\pgfmathresult}
        \draw[draw opacity=0, fill opacity=0.1, fill=black!90!white] ({\boxlx-\rectslack},{\boxby*\magicnum-\rectslack}) rectangle ({\boxrx+\rectslack},{\boxty*\magicnum+\rectslack});
    }
}

\foreach \x in {0,...,\maxx} {
\foreach \y in {0,...,\maxy} {
    \node [shape=circle,fill=white,scale=0.5] (V{\x}L{\y}) at ({\ccnt-1-\x},{\y*\magicnum}) {};
    \node [shape=circle,fill=white,scale=0.5] (V{\x}R{\y}) at ({\ccnt+\x},{\y*\magicnum}) {};
}
}

\foreach \y in {0,...,\maxy} {
    \pgfmathparse{int(\maxy-\y+1)} \xdef\yback{\pgfmathresult}
    \node [scale=0.9] () at ({-0.3},{(\y)*\magicnum}) {$v_{\yback}$};
    \node [scale=0.9] () at ({1.3+2*\maxx},{(\y)*\magicnum}) {$w_{\yback}$};
    \path [] (V{0}L{\y}) edge node [] {} (V{0}R{\y});
    \pgfmathparse{int(\y+1-2*isodd(\y))} \xdef\laby{\pgfmathresult}
    \path [] (V{0}L{\y}) edge node [] {} (V{0}R{\laby});
}

\foreach \x in {1,...,\maxx} {
\foreach \y in {0,...,\maxy} {
    \pgfmathparse{int(\y+pow(2,\x)*(1-2*isodd(div(\y,pow(2,\x)))))} \xdef\laby{\pgfmathresult}
    \pgfmathparse{int(\x-1)} \xdef\labx{\pgfmathresult}
    \path [] (V{\x}L{\y}) edge node [] {} (V{\labx}L{\laby});
    \path [] (V{\x}L{\y}) edge node [] {} (V{\labx}L{\y});
    \path [] (V{\x}R{\y}) edge node [] {} (V{\labx}R{\laby});
    \path [] (V{\x}R{\y}) edge node [] {} (V{\labx}R{\y});
}
}

\usetikzlibrary {arrows.meta}
\def\benespath#1#2#3#4#5#6#7{
\draw [#7, very thick] (V{2}L{#1}) edge (V{1}L{#2}) (V{1}L{#2}) edge (V{0}L{#3}) (V{0}L{#3}) edge (V{0}R{#4}) (V{0}R{#4}) edge (V{1}R{#5}) (V{1}R{#5}) edge (V{2}R{#6});
\node [shape=circle,fill=#7,scale=0.5] () at (V{2}L{#1}) {};
\node [shape=circle,fill=#7,scale=0.5] () at (V{1}L{#2}) {};
\node [shape=circle,fill=#7,scale=0.5] () at (V{0}L{#3}) {};
\node [shape=circle,fill=#7,scale=0.5] () at (V{0}R{#4}) {};
\node [shape=circle,fill=#7,scale=0.5] () at (V{1}R{#5}) {};
\node [shape=circle,fill=#7,scale=0.5] () at (V{2}R{#6}) {};
}
\benespath{3}{3}{3}{2}{0}{0}{cbfp1}
\benespath{0}{0}{2}{3}{1}{1}{cbfp1}
\benespath{6}{2}{0}{0}{2}{2}{cbfp2}
\benespath{5}{5}{5}{5}{7}{3}{cbfp2}
\benespath{4}{4}{4}{4}{4}{4}{cbfp3}
\benespath{2}{6}{6}{7}{5}{5}{cbfp3}
\benespath{7}{7}{7}{6}{6}{6}{cbfp4}
\benespath{1}{1}{1}{1}{3}{7}{cbfp4}
\path [ultra thick,bend right=45,cbfp1] (V{2}R{0}) edge node [] {} (V{2}R{1});
\path [ultra thick,bend right=45,cbfp2] (V{2}R{2}) edge node [] {} (V{2}R{3});
\path [ultra thick,bend right=45,cbfp3] (V{2}R{4}) edge node [] {} (V{2}R{5});
\path [ultra thick,bend right=45,cbfp4] (V{2}R{6}) edge node [] {} (V{2}R{7});

\end{tikzpicture}
\end{minipage}

\begin{minipage}{0.48\linewidth}
\centering
(a)
\end{minipage}
\begin{minipage}{0.48\linewidth}
\centering
(b)
\end{minipage}
\caption{(a) Recursive construction of Bene\v{s} network $B_3$ with $8$ inputs and $8$ outputs from two copies of $B_2$.
(b) The augmented Bene\v{s} network $\check{B}_3$ is obtained by adding a matching to the outputs of $B_3$, shown as curved edges.
Thick paths indicate an $M$-linkage in $\check{B}_3$ for the matching $M = \{\textcolor{cbfp4}{v_1v_7}, \, \textcolor{cbfp2}{v_2v_3}, \, \textcolor{cbfp3}{v_4v_6}, \, \textcolor{cbfp1}{v_5v_8} \}$ on the input vertices.
}  
\label{fig:benes-network}
\end{figure}

In this section, we consider a construction by Bene\v{s} \cite{Benes64a} that yields $k$-vertex graphs with degree $4$ and a linkage capacity of $\Omega(k / \log k)$.
In particular, this allows us to complete the fully self-contained proof of \cref{thm:CYBT-sparse}.

The Bene\v{s} network $B_\ell$ for $\ell \in \NN$ has $2^\ell$ distinguished input and output vertices.
In our terms, for every matching $M$ between the inputs and outputs, the network $B_\ell$ admits an uncongested $M$-linkage.
By ``short-circuiting'' the outputs, we obtain an \emph{augmented Bene\v{s} network} $\check{B}_\ell$, which allows routing paths from inputs back to inputs.
In our terms, the inputs form a matching-linked set, since every matching $M$ on the inputs admits an uncongested $M$-linkage in $\check{B}_\ell$.

\begin{algorithm}[ht]
\caption{Construct plain Bene\v{s} networks}\label{alg:construct}
\begin{algorithmic}
\Procedure{Benes}{$\ell$} returns $B_\ell$ with $s=2^\ell$ in-/outputs
    \If{$\ell =1$} \Return $K_{2,2}$ with in-/outputs $v_i,w_i$ for $i\in[2]$
    \Else
    \State $B^\uparrow \gets$ \Call{Benes}{$\ell-1$} with in-/outputs $v_i^\uparrowsm,w_i^\uparrowsm$ for $i \in [s/2]$
    \State $B^\downarrow \gets$ \Call{Benes}{$\ell-1$} with in-/outputs $v_i^\downarrowsm,w_i^\downarrowsm$ for $i \in [s/2]$
    \State $B \gets$ vertex-disjoint union of $B^\uparrowsm$ and $B^\downarrowsm$
    \For{$i \in [s/2]$} add to $B$ 
    \State all four edges between $\{v_i,v_{i+s/2}\}$ and $\{v_i^\uparrowsm, v_i^\downarrowsm \}$,
    \State all four edges between $\{w_i,w_{i+s/2}\}$ and $\{w_i^\uparrowsm, w_i^\downarrowsm \}$
    \EndFor
    \State \Return $B$ with in-/outputs $v_i, w_i$ for $i \in [s]$
    \EndIf
\EndProcedure
\end{algorithmic}
\end{algorithm}

\begin{definition}[Bene\v{s} networks]
    The plain \emph{Bene\v{s} network} $B_\ell$ for $\ell \in \NN$ is the graph with distinguished inputs $v_i$ and outputs $w_i$, for $i\in [s]$ with $s=2^\ell$, returned by \textsc{Benes}$(\ell)$ in \Cref{alg:construct}.
    The \emph{augmented Bene\v{s} network} $\check{B}_\ell$ is obtained from $B_\ell$ by adding an edge between outputs $w_{2i-1}$ and $w_{2i}$, for each $i \in [s/2]$.
\end{definition}

A visualization can be found in \Cref{fig:benes-network}.
Both $B_\ell$ and $\check{B}_\ell$ clearly have maximum degree $4$.
Let $T(s)$ for $s = 2^\ell$ count the vertices in the $s$-input Bene\v{s} network $B_\ell$ or $\check{B}_\ell$.
By construction, we have $T(s) = 2 \cdot T(s/2) + 2s$, and thus $T(s) = 2s \log_2 s$.
Bene\v{s} networks are designed to admit uncongested linkages between the inputs and outputs~\cite{Benes64a}:

\begin{theorem} \label{thm:augmented-benes-linkage}
For $\ell \in \NN$, the set $V$ of inputs in $\check{B}_\ell$ is matching-linked, with $|V|=s=2^\ell$.
Moreover, given as input $\ell \in \NN$ and a matching $M$ on $V$, an uncongested $M$-linkage in $\check{B}_\ell$ can be computed in $O(s \log s)$ time.
\end{theorem}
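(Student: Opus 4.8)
The plan is to prove Theorem~\ref{thm:augmented-benes-linkage} by induction on $\ell$, mirroring the recursive structure of the Bene\v{s} network in \Cref{alg:construct}. The base case $\ell = 1$ is immediate: $\check B_1$ is $K_{2,2}$ with the extra edge $w_1w_2$, and any matching $M$ on the two inputs $\{v_1, v_2\}$ is either empty or the single edge $v_1v_2$, which is realized by the length-$3$ path $v_1, w_1, w_2, v_2$ (using any one of the two internal routings). For the inductive step, I would take a matching $M$ on the $s = 2^\ell$ inputs of $\check B_\ell$ and show how to split it between the two copies $\check B^{\uparrow}_{\ell-1}$ and $\check B^{\downarrow}_{\ell-1}$ so that each sub-copy receives a matching on its own input set, apply the induction hypothesis to each, and then reconnect.

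\textbf{The heart of the argument is the routing/splitting step}, which is the classical looping algorithm for Bene\v{s} (Clos) networks. Each input $v_i$ with $i \in [s/2]$ is linked in the first layer to both $v_i^{\uparrow}$ and $v_i^{\downarrow}$, and similarly $v_{i+s/2}$ connects to the same pair; thus the pair $\{v_i, v_{i+s/2}\}$ must split across the two sub-networks (one goes up, one goes down), and symmetrically each output pair $\{w_{2j-1}, w_{2j}\}$ — which in $\check B_\ell$ is joined by the short-circuit edge — must be ``entered'' from the two sub-networks, one from above and one from below. I would model this as a $2$-coloring problem: form an auxiliary graph on the $s/2$ input-pairs and $s/2$ output-pairs as vertices, where each path of $M$ (recall $M$ is now interpreted as connecting inputs to inputs, but after ``short-circuiting'' we may equivalently think of demands threading through output pairs) contributes an edge, giving a graph of maximum degree $2$, i.e.\ a disjoint union of paths and even cycles; such a graph is properly $2$-edge-colorable, and the two color classes tell each demand whether to route its two ``halves'' through $B^{\uparrow}$ or $B^{\downarrow}$. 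Concretely, each edge $v_av_b \in M$ is split into two demands passing through one shared output pair; the $2$-coloring assigns one demand-half to the up-copy and one to the down-copy in a consistent way, so that within each copy the induced set of demands is again a matching on that copy's inputs. Feeding these to the induction hypothesis yields uncongested linkages in $\check B^{\uparrow}_{\ell-1}$ and $\check B^{\downarrow}_{\ell-1}$; prepending the first-layer edges $v_i \!-\! v_i^{\uparrow/\downarrow}$ and appending the short-circuit edges at the outputs assembles an uncongested $M$-linkage in $\check B_\ell$, using vertex-disjointness of the two sub-copies plus the fact that distinct inputs and distinct output-pairs are used.

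\textbf{For the running time}, I would observe that the $2$-edge-coloring of a degree-$2$ graph on $O(s)$ vertices (tracing each path/cycle and alternating colors) takes $O(s)$ time, and the first/last layer assignments take $O(s)$ time. This gives the recurrence $T(s) = 2\,T(s/2) + O(s)$, which solves to $T(s) = O(s \log s)$, as claimed. The matching-linked conclusion for the input set $V$ is then just the definition unpacked: every matching on $V$ admits an uncongested $M$-linkage.

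\textbf{The main obstacle I anticipate} is bookkeeping the short-circuit edges carefully in the inductive invariant. The statement to be maintained is not merely ``$B_\ell$ routes any input–output matching'' but rather ``$\check B_\ell$ routes any matching \emph{on the inputs alone}'', and these differ precisely by the output short-circuits; I must make sure the recursion hands down the \emph{augmented} sub-networks $\check B^{\uparrow}_{\ell-1}, \check B^{\downarrow}_{\ell-1}$ (whose own outputs are short-circuited), not the plain ones, and that the $2$-coloring step correctly accounts for demands that ``turn around'' at an output pair of $\check B_\ell$ versus demands that descend into a sub-copy. A clean way to handle this is to first prove the plain statement — $B_\ell$ admits an uncongested linkage for every input–output matching — by the same induction, and then derive the augmented statement by a short reduction: given a matching $M$ on the inputs of $\check B_\ell$, pair up its edges arbitrarily with the short-circuit output edges to manufacture an equivalent input–output matching $M'$ for $B_\ell$, route $M'$ in $B_\ell$, and splice in the short-circuit edges. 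This isolates the recursive core into the cleaner plain-network claim and keeps the augmentation as a one-line postprocessing step.
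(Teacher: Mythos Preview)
Your proposal is correct, and the approach you single out at the end --- first prove by induction that the \emph{plain} Bene\v{s} network $B_\ell$ routes every perfect input--output matching (via a $2$-coloring of a $2$-regular conflict graph), then reduce the augmented statement to it by pairing each edge $ab \in M$ with one short-circuit output edge $w_{2i-1}w_{2i}$ and splicing --- is exactly what the paper does. The paper's conflict graph lives on the input indices $[s]$, with one perfect matching from input-pair conflicts and another from output-pair conflicts translated back through the permutation; its bipartiteness and the recurrence $T(s)=2T(s/2)+O(s)$ match your analysis.
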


With \Cref{lem:linkage-capacity-via-blowup}, we obtain:

\begin{corollary}
    \label{cor:benes-linkage-capacity}
    For $s = 2^\ell$, we have $\gamma(\check{B}_\ell) \geq s/3$.
\end{corollary}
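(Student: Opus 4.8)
The plan is to apply \Cref{lem:linkage-capacity-via-blowup} directly, taking the blowup order to be $q = 1$. First I would record the (essentially trivial) observation that $\check{B}_\ell \ootimes J_1$ is isomorphic to $\check{B}_\ell$ itself: unwinding the blowup definition with $t = 1$, each vertex $v$ has exactly one clone $v^{(1)}$, the clique edges $u^{(i)}u^{(j)}$ with $i \neq j \in [1]$ do not exist, and the remaining edges are precisely $u^{(1)}v^{(1)}$ for $uv \in E(\check{B}_\ell)$. Hence a matching-linked set of $\check{B}_\ell$ is the same thing as a matching-linked set of $\check{B}_\ell \ootimes J_1$.

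Second, I would invoke \Cref{thm:augmented-benes-linkage}, which tells us that the set $V$ of $s = 2^\ell$ input vertices of $\check{B}_\ell$ is matching-linked, with $|V| = s$. Feeding $H = \check{B}_\ell$, $q = 1$, and the matching-linked set $X = V$ in $H \ootimes J_1 = H$ into \Cref{lem:linkage-capacity-via-blowup} gives $\gamma(\check{B}_\ell) \geq \tfrac{1}{3}\,|X|/q = s/3$, which is exactly the claimed bound.

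There is no real obstacle in this corollary: all of the substantive work sits in the two results we are quoting, namely the Bene\v{s} routing guarantee \Cref{thm:augmented-benes-linkage} (proved in \Cref{app:benes}) and the blowup-amplification \Cref{lem:linkage-capacity-via-blowup}. The only point requiring any care is verifying the degenerate instance $q = 1$ of the blowup so that \Cref{lem:linkage-capacity-via-blowup} applies verbatim; once that is noted, the proof is a one-line application.
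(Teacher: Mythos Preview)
Your proposal is correct and matches the paper's own argument exactly: the paper simply states that the corollary follows from \Cref{lem:linkage-capacity-via-blowup}, which is precisely your application with $q=1$ and the matching-linked input set $V$ of size $s$ supplied by \Cref{thm:augmented-benes-linkage}.
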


By combining \cref{thm:linkage-ETH} and \cref{cor:benes-linkage-capacity}, we can give an elementary proof of \cref{thm:CYBT-sparse} (in a slightly modified form; see \Cref{rem:benes-cybt-caveat}).

\begin{theorem}
    \label{thm:CYBT-sparse-simple}
    Assuming ETH, there exists a universal constant $\alpha > 0$ and an infinite sequence of graphs $H_1,H_2,\ldots$ such that, for all $k \in \NN$, the graph $H_k$ has $k$ vertices and maximum degree $4$,
    and $\colsub{H_k}$ does not admit an $O(n^{\alpha \cdot k / \log k})$-time algorithm.
\end{theorem}

\begin{proof}
    For every $k \in \NN$, pick $\ell \in \NN$ maximal such that $|V(\check{B}_\ell)| \leq k$.
    Let $H_k$ be obtained from $\check{B}_\ell$ by adding isolated vertices until the number of vertices is $k$. 
    Since $|V(\check{B}_\ell)| = 2^{\ell+1}\ell$, we conclude that $2^{\ell+1}\ell \leq k < 2^{\ell+2}(\ell+1)$ which implies that $k/\log_2 k < 2^{\ell+2}$.
    So
    \[\gamma(H_k) \geq \gamma(\check{B}_\ell) \geq 2^{\ell}/3 > \frac{1}{12} \cdot k/\log_2 k\]
    by \cref{cor:benes-linkage-capacity}.
    Now the theorem follows from \cref{thm:linkage-ETH}.
\end{proof}

\begin{remark}
\label{rem:benes-cybt-caveat}
    Observe that \cref{thm:CYBT-sparse} provides a sequence of graphs of maximum degree $3$ whereas \cref{thm:CYBT-sparse-simple} ``only'' guarantees maximum degree $4$.
    However, the augmented Bene\v{s} networks $\check{B}_\ell$ can easily be modified to have maximum degree $3$ by replacing each vertex with an edge, so
    \tikz[baseline=-.5*(height("$+$")-depth("$+$")),xscale=0.25,yscale=0.15]{\node [shape=circle,fill=black,scale=0.3] (v) at (0,0) {};
    \node [scale=0.2] (v1) at (1,1) {};
    \node [scale=0.2] (v2) at (1,-1) {};
    \node [scale=0.2] (v3) at (-1,1) {};
    \node [scale=0.2] (v4) at (-1,-1) {};
    \path [draw=black] (v) -- (v1);
    \path [draw=black] (v) -- (v2);
    \path [draw=black] (v) -- (v3);
    \path [draw=black] (v) -- (v4);
    }
    becomes
    \tikz[baseline=-.5*(height("$+$")-depth("$+$")),xscale=0.25,yscale=0.15]{\node [shape=circle,fill=black,scale=0.3] (v) at (0,0) {};
    \node [shape=circle,fill=black,scale=0.3] (v0) at (1,0) {};
    \node [scale=0.2] (v1) at (2,1) {};
    \node [scale=0.2] (v2) at (2,-1) {};
    \node [scale=0.2] (v3) at (-1,1) {};
    \node [scale=0.2] (v4) at (-1,-1) {};
    \path [draw=black] (v0) -- (v1);
    \path [draw=black] (v0) -- (v2);
    \path [draw=black] (v0) -- (v);
    \path [draw=black] (v) -- (v3);
    \path [draw=black] (v) -- (v4);
    },
    and all other relevant properties remain the same.
\end{remark}

For readers familiar with expander graphs, let us also remark that the Bene\v{s} network $B_\ell$ with $s = 2^\ell$ does \emph{not} have constant expansion, as witnessed by its ``upper half'' $U$ that contains the vertices of $B_{\ell-1}^\uparrowsm$ and all inputs and outputs with indices $i \in [s/2]$: 
We have $|U| = s\log_2 s$, but the $2s$ neighbors of $U$ are all contained in the first two and last two columns of $B_\ell$.
This also holds for the augmented $\check{B}_\ell$.

\subsection*{Universality of augmented Bene\v{s} networks}

As an independent point of interest, let us remark that blowups of Bene\v{s} networks  are universal for bounded-degree graphs with respect to topological minor containment:
Mimicking the proof of \Cref{lem:reroute-G-to-G'}, every $n$-vertex graph of maximum degree $\Delta$ can be found as a topological minor in the $2\Delta$-blowup of an augmented Bene\v{s} network with $n$ inputs.

For comparison, every graph that contains every $n$-vertex graph of maximum degree $\Delta$ as a \emph{subgraph} must necessarily have $\Omega(n^{2-2/\Delta})$ edges~\cite{DBLP:conf/focs/AlonCKRRS00}.
Under the more relaxed notion of universality via topological minor containment, Bene\v{s} networks show that universal graphs with only $O(\Delta^2 \cdot n \log n)$ vertices and edges are achievable.

\begin{theorem}
    For every $n,\ell \in \NN$, every graph $G$ of maximum degree $\Delta$ and $n \leq 2^\ell$ vertices is a topological minor of $\check{B}_\ell \boxtimes K_{2\Delta-1}$. Moreover, on input $G$, a topological $G$-minor model in $\check{B}_\ell \boxtimes K_{2\Delta-1}$ can be computed in polynomial time.
\end{theorem}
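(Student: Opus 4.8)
\emph{Proof sketch.} The plan is to mimic the proof of \Cref{lem:reroute-G-to-G'}, using the input set of $\check{B}_\ell$ in place of a generic matching-linked set in a blowup. Write $s = 2^\ell \ge n$ and identify $V(G)$ with an arbitrary $n$-subset of the inputs $v_1,\dots,v_s$ of $\check{B}_\ell$; the branch vertex representing $v_j\in V(G)$ in the blowup will be its clone $v_j^{(1)}$. A straightforward greedy argument partitions $E(G)$ into at most $2\Delta-1$ matchings $M_1,\dots,M_{2\Delta-1}$ (pad with empty matchings if necessary), each of which is a matching on a subset of the inputs of $\check{B}_\ell$. By \Cref{thm:augmented-benes-linkage}, for every $i\in[2\Delta-1]$ there is an uncongested $M_i$-linkage $Q_i=(P_e)_{e\in M_i}$ in $\check{B}_\ell$, computable in $O(s\log s)$ time.

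Next I would stack the $Q_i$ into the $2\Delta-1$ copy-layers of $\check{B}_\ell\ootimes J_{2\Delta-1}$. Concretely, for an edge $e=v_av_b\in M_i$ with $P_e=(v_a,y_1,\dots,y_{m-1},v_b)$ in $\check{B}_\ell$, route the corresponding path as $(v_a^{(1)},y_1^{(i)},\dots,y_{m-1}^{(i)},v_b^{(1)})$; this is a valid simple path of $\check{B}_\ell\ootimes J_{2\Delta-1}$ since $uv\in E(\check{B}_\ell)$ forces $u^{(p)}v^{(q)}\in E(\check{B}_\ell\ootimes J_{2\Delta-1})$ for all $p,q$. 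Paths coming from the same $M_i$ are already pairwise vertex-disjoint in $\check{B}_\ell$ (the linkage $Q_i$ is uncongested and $M_i$ is a matching), while paths coming from distinct $M_i,M_j$ have all their internal vertices in the disjoint copy-layers $i\ne j$, hence can only meet in copy-layer $1$, i.e.\ at clones $v^{(1)}$. It remains to verify that every such meeting point is a branch vertex, and that no branch vertex lies in the interior of a path; granting this, $\{P_e\}_{e\in E(G)}$ is the desired topological $G$-minor model. All the steps above run in time polynomial in $|V(\check{B}_\ell\ootimes J_{2\Delta-1})|$, which gives the algorithmic statement.

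The crux — and the reason the blowup order can be improved from the $2\Delta$ of the universality remark above to $2\Delta-1$ — is that copy-layer $1$ carries both the branch vertices $v_j^{(1)}$ and the internal vertices of $Q_1$. A spurious overlap could occur only if some path of $Q_1$ passed through an \emph{input} vertex of $\check{B}_\ell$ as an internal vertex. I would rule this out by inspecting the recursive Bene\v{s} routing behind \Cref{thm:augmented-benes-linkage}: a route starts at an input, descends into the switching network, is short-circuited at the outputs, and climbs back up to its target input, never revisiting the input column in between; hence no path of $Q_1$ uses an input internally, so copy-layer $1$ is conflict-free and, moreover, no path has a branch vertex in its interior. (If one prefers not to open up the routing, the weaker bound $\check{B}_\ell\ootimes J_{2\Delta}$ follows immediately by reserving a separate copy-layer for the branch vertices, as in the remark above.) This last observation about Bene\v{s} routing is the only place where the argument goes beyond a verbatim adaptation of the proof of \Cref{lem:reroute-G-to-G'}.
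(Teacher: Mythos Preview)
Your proposal is correct and follows the same approach as the paper: embed $V(G)$ in the input set of $\check{B}_\ell$, greedily $(2\Delta-1)$-edge-colour $G$, route each colour class via \Cref{thm:augmented-benes-linkage}, and stack the resulting linkages into copy-layers of the blowup. The paper's own proof in fact places the internal vertices of each linkage in a private layer of $\check{B}_\ell \ootimes J_{2\Delta}$ (not $J_{2\Delta-1}$), so it does not explicitly justify why layer~$1$ can serve both as the branch-vertex layer and as the home of $Q_1$'s internal vertices; your observation that the Bene\v{s} routes produced by \Cref{thm:augmented-benes-linkage} never revisit the input column is precisely what closes this gap and delivers the stated bound of $2\Delta-1$. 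In that sense your argument is slightly more complete than the paper's.
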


\begin{proof}
    By \Cref{thm:augmented-benes-linkage}, the inputs in $\check{B}_\ell$ form a matching-linked set $X$ of size $s = 2^\ell \geq n$.
    We view $V(G) \subseteq X$ and decompose $E(G)$ into $2\Delta-1$ matchings via the greedy edge-coloring algorithm. 
    For each matching $M$, we use \cref{thm:augmented-benes-linkage} to find an $M$-linkage $Q$ in $\check{B}_\ell$ and place the internal vertices of $Q$ in a private layer of $\check{B}_\ell \boxtimes K_{2\Delta-1}$, as in the proof of \cref{lem:reroute-G-to-G'}.
    The union of the linkages constructed this way is a topological $G$-minor model in $\check{B}_\ell \boxtimes K_{2\Delta-1}$.
\end{proof}

\section{Patterns of Superlinear Density}

We turn our attention to dense patterns, i.e., $k$-vertex patterns $H$ of average degree $d(H) \in \omega(1)$.
Unlike the sparse setting discussed earlier, a linkage capacity of $\Theta(k)$ is achievable in the dense case, which implies tight lower bounds for $\colsub{H}$ under ETH.

\subsection{Worst Case}
\label{sec:dense-worstcase}

We show that, for every graph $H$, the average degree $d(H) = 2|E(H)|/|V(H)|$ is a lower bound on the linkage capacity of $H$, up to a constant factor. 
First, we use Mader's Theorem \cite[Corollary 1]{Mader72} to extract a highly connected subgraph from $H$.
A graph $H$ is \emph{$\ell$-connected} if $|V(H)| > \ell$ and $H - X$ is connected for every set $X \subseteq V(H)$ with $|X| < \ell$.

\begin{theorem}[see {\cite[Theorem 1.4.3]{graph}}]\label{theo:mader}
    Every graph $H$ with $d(H) \geq 4\ell$ contains a $(\ell+1)$-connected subgraph $H'$ with $d(H') > d(H) - 2\ell$.
\end{theorem}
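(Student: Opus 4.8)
The plan is to prove this theorem of Mader by the extremal (minimal‑subgraph) method, essentially as in Diestel, but carrying the average‑degree estimate along so that it survives into the final subgraph rather than only recovering $(\ell+1)$-connectivity. Write $D:=d(H)/2$ and let $\mathcal F$ be the family of all subgraphs $H'\subseteq H$ satisfying both (i) $|V(H')|\ge 2\ell+1$ and (ii) $|E(H')|>D(|V(H')|-\ell)$. The family is nonempty: $d(H)\ge 4\ell$ forces $|V(H)|\ge\Delta(H)+1\ge 4\ell+1$, so $H$ satisfies (i), and it satisfies (ii) because $|E(H)|=D\,|V(H)|>D\,|V(H)|-D\ell$. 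Pick $H^\ast\in\mathcal F$ with the fewest vertices and set $n^\ast:=|V(H^\ast)|$; I claim $H^\ast$ witnesses the theorem.

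Two quick consequences of minimality come first. If $n^\ast=2\ell+1$, then (ii) would require $|E(H^\ast)|>D(\ell+1)\ge 2\ell(\ell+1)>\binom{2\ell+1}{2}$, which is impossible, so $n^\ast\ge 2\ell+2$. For any vertex $v$, the graph $H^\ast-v$ still satisfies (i), hence by minimality it violates (ii); subtracting that from (ii) for $H^\ast$ gives $\deg_{H^\ast}(v)>D\ge 2\ell$. So $\delta(H^\ast)\ge 2\ell+1$ and, since $n^\ast\ge\delta(H^\ast)+1$, also $n^\ast>D$. The density bound is now immediate from (ii): $d(H^\ast)=2|E(H^\ast)|/n^\ast>2D(n^\ast-\ell)/n^\ast=d(H)-d(H)\ell/n^\ast>d(H)-2\ell$, the last step using $n^\ast>D=d(H)/2$.

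The heart of the argument is $(\ell+1)$-connectivity. Suppose $H^\ast$ is not $(\ell+1)$-connected; since $n^\ast>D\ge 2\ell$ and $\ell\ge1$ we have $n^\ast>\ell+1$, so there is a separator $S$ with $|S|\le\ell$. Write $V(H^\ast)\setminus S=A\cup B$ with $A,B$ nonempty and disjoint and no edge between $A$ and $B$, and put $H_A:=H^\ast[A\cup S]$, $H_B:=H^\ast[B\cup S]$. A vertex of $A$ keeps all of its (at least $2\ell+1$) neighbors inside $A\cup S$, so $|V(H_A)|\ge 2\ell+2$, and similarly $|V(H_B)|\ge 2\ell+2$; thus $H_A$ and $H_B$ satisfy (i) yet each has fewer than $n^\ast$ vertices. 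Minimality of $H^\ast$ therefore forces $|E(H_A)|\le D(|V(H_A)|-\ell)$ and $|E(H_B)|\le D(|V(H_B)|-\ell)$. Every edge of $H^\ast$ lies in $H_A$ or in $H_B$, and $|V(H_A)|+|V(H_B)|=n^\ast+|S|\le n^\ast+\ell$, so adding the two inequalities yields $|E(H^\ast)|\le D(n^\ast-\ell)$, contradicting (ii). Hence $H^\ast$ is $(\ell+1)$-connected, which completes the proof.

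The one genuinely delicate point is the choice of the offset in condition (ii): it must be exactly $D\ell$. It needs to be at least this large so that in the connectivity step the bounds for $H_A$ and $H_B$ telescope past the separator overlap (of size at most $\ell$); but it must also be small enough that the leftover density still exceeds $d(H)-2\ell$, which works only because the minimum‑degree step already forces $n^\ast>d(H)/2$. Getting these two requirements to coexist — together with the small‑case check that excludes $n^\ast=2\ell+1$ — is precisely where the hypothesis $d(H)\ge 4\ell$ gets used, and it is the only place where a short but careful calculation is required.
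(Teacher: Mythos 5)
Your proof is correct and is, up to a harmless shift in the vertex-count threshold (you start from $|V(H')|\geq 2\ell+1$ rather than $\geq 2\ell$), exactly Diestel's extremal argument for Theorem 1.4.3, which the paper cites as a black box without reproducing the proof. All three steps check out: the small-case exclusion $2\ell(\ell+1)>\binom{2\ell+1}{2}$ holds for $\ell\geq 1$, the minimum-degree bound $\delta(H^\ast)>D$ and the resulting $n^\ast>D$ give $d(H^\ast)>d(H)-2\ell$, and the separator step correctly telescopes $|V(H_A)|+|V(H_B)|\leq n^\ast+\ell$ to contradict the defining inequality.
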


Second, within the scope of this subsection only, we say that a graph $H$ is \emph{$\ell$-globally linked} if $|V(H)| \geq 2\ell$ and each set $X \subseteq V(H)$ of size at most $2\ell$ is matching-linked in $H$.
(In graph theory, this notion is usually just called $\ell$-\emph{linked}---see, e.g., \cite{graph}. In our paper, we refer to it as \emph{$\ell$-globally linked} to distinguish it from our previous definitions of linkedness.)
This definition implies in particular that $H$ contains a matching-linked set $X$ with $|X| \geq 2\ell$.
Thomas and Wollan \cite[Corollary 1.2]{TW05} show that high connectivity implies high global linkedness. 

\begin{theorem}[see {\cite[Theorem 3.5.3]{graph}}]
    \label{theo:Thomas:Wollan}
    Let $H$ be a graph and $\ell \in \NN$. If $H$ is $2\ell$-connected and $d(H) \geq 16 \ell$, then $H$ is $\ell$-globally linked.
\end{theorem}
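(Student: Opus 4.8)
The plan is to first peel off the definitional bookkeeping and then attack the hard core, which is the Thomas--Wollan linkage bound~\cite[Corollary 1.2]{TW05}. Since $d(H)\ge 16\ell$ gives $|E(H)|\ge 8\ell\,|V(H)|$, and since $|E(H)|\le\binom{|V(H)|}{2}$ then forces $|V(H)|\ge 16\ell+1>2\ell$, the size condition for $\ell$-global linkedness holds automatically, so it remains to show that every $X\subseteq V(H)$ with $|X|\le 2\ell$ is matching-linked in $H$. A matching $M$ on $X$ is a family of at most $\ell$ pairwise disjoint pairs of distinct vertices, and an uncongested $M$-linkage is exactly a family of pairwise vertex-disjoint paths joining the two vertices of each pair. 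An $\ell$-linked graph handles any such family of at most $\ell$ pairs: extend to exactly $\ell$ disjoint pairs using arbitrary further vertices (there are enough, as $|V(H)|\ge 16\ell+1$), link all $\ell$ pairs, and discard the superfluous paths. Hence it suffices to prove the standard statement: every $2\ell$-connected graph $G$ with $|E(G)|\ge 8\ell\,|V(G)|$ is $\ell$-linked.

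For this, I would run the minimal-counterexample argument of Thomas and Wollan. Assuming the statement fails, pick a counterexample $G$ with $|V(G)|$ minimum and then $|E(G)|$ minimum, and fix a pairing $\mathcal P=\{(s_i,t_i)\}_{i\in[\ell]}$ of distinct vertices---with terminal set $Z$, $|Z|=2\ell$---that admits no disjoint linking paths in $G$. The first phase extracts structural rigidity from minimality: (i) every non-terminal vertex has degree at least $8\ell$, otherwise deleting it contradicts minimality, so $G-Z$ is still dense; (ii) $G$ has no separation of order $2\ell$ that confines $Z$ to one side with the other side nonempty, since such a separation would let one shortcut through the separator and recurse on the smaller side; and (iii) contracting any edge that does not join two terminals would destroy the $2\ell$-connectivity or drop the density bound, since otherwise the contracted graph---which is smaller---is $\ell$-linked by minimality and the linkage lifts back through the contracted edge. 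Running (iii) over all such edges shows that the endpoints of every non-terminal edge have heavily overlapping neighborhoods, i.e.\ the non-terminal part of $G$ is locally very dense. The second phase turns this local density into a ``hub'' $C\subseteq V(G)\setminus Z$ that is highly connected internally; applying the induction hypothesis to $C$ (the base case, a complete graph, being immediate) makes $C$ internally $\ell$-linked, and the $2\ell$-connectivity of $G$ supplies, via Menger's theorem, $2\ell$ vertex-disjoint paths from $Z$ to distinct vertices of $C$ that are internally disjoint from $C$. Concatenating these approach paths with an internal linkage of $C$ realizing the pattern $\mathcal P$ produces the forbidden linkage in $G$, a contradiction.

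The hard part will be the structural dissection of the minimal counterexample in the first phase: delimiting precisely which separations of order $2\ell$ can occur, and showing that the failure of every ``safe'' contraction concentrates enough density to carve out a hub $C$ that is simultaneously strictly smaller than $G$, still $2\ell$-connected, still has at least $8\ell\,|V(C)|$ edges, and is reachable from $Z$ by disjoint paths. This is precisely where the constant $16\ell$ (equivalently, $8\ell$ edges per vertex) is consumed: the slack above the bare $2\ell$ needed for connectivity must pay for the density lost in the vertex-cleanup step, in the contraction-failure analysis, and in passing from $G$ down to $C$ with room left for the recursion. A final routine point is the opening reduction: one should note that a matching $M$ on $X$ need not be a subgraph of $H$---only its endpoints lie in $V(H)$---and that the case $|X|<2\ell$ (fewer than $\ell$ pairs) is covered by the monotonicity used above.
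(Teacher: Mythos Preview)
The paper does not prove this theorem; it is stated as a known result, cited from Diestel's textbook and from Thomas--Wollan~\cite{TW05}, and used purely as a black box in the proof of \cref{lem:linkage-capacity-average-degree}. There is therefore no in-paper argument to compare your proposal against.

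That said, your opening bookkeeping---deriving $|V(H)|\ge 16\ell+1>2\ell$ from the density bound, and reducing $\ell$-global linkedness to the standard notion of $\ell$-linkedness by padding a short matching with dummy pairs---is exactly the bridge the paper tacitly relies on when invoking the cited result, and it is correct. Your sketch of the Thomas--Wollan argument has the right high-level shape (minimal counterexample, separation control, density concentration, Menger to reach an internal hub), but step~(i) as written does not quite go through: deleting a low-degree non-terminal vertex preserves the inequality $|E|\ge 8\ell|V|$ but need not preserve $2\ell$-connectivity, so minimality of $G$ cannot be invoked directly on $G-v$. The actual proof in~\cite{TW05} sidesteps this by inducting on a stronger auxiliary statement in which global connectivity is replaced by a separation hypothesis tailored to the terminal set; this is precisely part of the ``hard part'' you already flag. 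For the paper's purposes none of this is needed---only the statement is consumed.
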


Together, these two theorems imply a lower bound on the linkage capacity that is linear in the average degree.
\begin{lemma}
    \label{lem:linkage-capacity-average-degree}
    For every graph $H$, we have $\gamma(H) \geq d(H)/48$.
\end{lemma}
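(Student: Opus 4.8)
The plan is to chain the two quoted structural theorems—Mader's \Cref{theo:mader} and Thomas--Wollan's \Cref{theo:Thomas:Wollan}—to manufacture a matching-linked set inside $H$ itself of size $\Omega(d(H))$, and then to convert this into a bound on $\gamma(H)$ by a single application of \Cref{lem:linkage-capacity-via-blowup} with $q=1$ (noting that $H = H\ootimes J_1$, so a matching-linked set in $H$ is a matching-linked set in $H\ootimes J_1$). First I would dispose of the degenerate regime: if $d(H) < 48$ then $d(H)/48 < 1 \le \gamma(H)$ and there is nothing to prove, so from now on set $d := d(H)$ and assume $d \ge 48$.

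The concrete steps are as follows. Put $\ell := \lceil d/16\rceil$, so $d/16 \le \ell < d/16 + 1$. One checks that $4\ell < d/4 + 4 \le d$, so \Cref{theo:mader} applies and yields an $(\ell+1)$-connected subgraph $H' \subseteq H$ with $d(H') > d - 2\ell \ge 7d/8 - 2$. Next, set $\ell' := \lfloor (\ell+1)/2\rfloor$, which satisfies $\ell \le 2\ell' \le \ell+1$; in particular $H'$, being $(\ell+1)$-connected, is also $2\ell'$-connected. Since $16\ell' \le 8(\ell+1) < d/2 + 16$ and $d(H') > 7d/8 - 2 \ge d/2 + 16$ (this last inequality is exactly where $d \ge 48$ is used), the hypothesis $d(H') \ge 16\ell'$ of \Cref{theo:Thomas:Wollan} holds, so $H'$ is $\ell'$-globally linked. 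By definition this gives a matching-linked set $X \subseteq V(H')$ with $|X| = 2\ell' \ge \ell \ge d/16$; and since any $M$-linkage living in the subgraph $H'$ is also an $M$-linkage in $H$, the set $X$ is matching-linked in $H$ as well. Finally, \Cref{lem:linkage-capacity-via-blowup} applied with $q = 1$ gives $\gamma(H) \ge \tfrac13 |X| \ge \ell/3 \ge d/48$, which is the claim.

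I do not expect a genuine obstacle here; the only thing demanding care is the bookkeeping of constants. The factor $16$ emerging from Thomas--Wollan and the factor $3$ lost in \Cref{lem:linkage-capacity-via-blowup}—which itself stems from Shannon's bound $\chi'(M') \le \lfloor \tfrac32 d(M')\rfloor$ (\Cref{thm:shannon}) together with the factor-$2$ blowup in \Cref{lem:congestion-vs-blowup}—multiply to exactly $48$. The choice $\ell \approx d/16$ has to be threaded between two competing demands: it must be large enough that $2\ell' \gtrsim d/16$ (so the final bound is strong enough), yet small enough that the average degree drops only by a lower-order amount both when passing to the $(\ell+1)$-connected subgraph $H'$ and when invoking Thomas--Wollan on it. Verifying that $d \ge 48$ suffices for all these inequalities simultaneously is the one computation worth writing out explicitly.
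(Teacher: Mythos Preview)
Your proposal is correct and follows essentially the same route as the paper's proof: chain Mader's theorem to Thomas--Wollan to produce a matching-linked set of size $\Omega(d(H))$ in (a subgraph of) $H$, then invoke \Cref{lem:linkage-capacity-via-blowup} with $q=1$. The only cosmetic difference is your choice of parameters (you take $\ell \approx d/16$ in Mader rather than the paper's $\ell \approx d/4$), and you are more explicit about the degenerate regime $d(H) < 48$ and the constant bookkeeping; both arrive at the same factor $48$.
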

\begin{proof}
    We assume $d(H) \geq 48$ since otherwise the lemma holds trivially because $\gamma(H) \geq 1$ for all graphs $H$.
    \cref{theo:mader} yields a $\lceil d(H)/4 \rceil$-connected subgraph $H'$ of $H$ with $d(H') > d(H)/2$.
    \cref{theo:Thomas:Wollan} shows that $H'$ is $\lceil d(H)/32 \rceil$-globally linked and thus contains a matching-linked set $X$ of size at least $d(H)/16$.
    Then \cref{lem:linkage-capacity-via-blowup} shows that $\gamma(H) \geq \gamma(H') \geq d(H)/48$, where the first inequality uses that $H'$ is subgraph of $H$.
\end{proof}

Now, \cref{thm:dense-hard} follows from \cref{thm:linkage-ETH} and \cref{lem:linkage-capacity-average-degree}.

\subsection{Average Case}

To show the hardness in the average case, we consider the linkage capacity of the Erd\H{o}s-R\'{e}nyi random graph. 
Let $\mathcal G(k,p)$ denote the distribution over $k$-vertex graphs where each edge is included independently with probability $p$. 
We need the following theorem adapted from \cite{BFSU96}, where ``with high probability'' refers to a probability tending to $1$ for $k\to \infty$.

\begin{theorem} \label{thm:BFSU-gnp}
    Let $\varepsilon>0$ be a constant. 
    For all $p\geq (1+\varepsilon)\log (k)/k$ the following holds:
    With high probability, for a random graph $H \sim \mathcal G (k, p)$, every matching $M$ on $V(H)$ can be partitioned into $r = O( \log k / \log kp )$ matchings $M_1, \dots, M_r$ such that $H$ contains an uncongested $M_i$-linkage for all $i\in [r]$.
\end{theorem}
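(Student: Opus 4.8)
The plan is to obtain \Cref{thm:BFSU-gnp} by combining a routing result for random graphs of Broder, Frieze, Suen, and Upfal~\cite{BFSU96} with a trivial partitioning step; essentially all the content sits on the BFSU side. Recall that $kp$ is the expected degree in $\mathcal G(k,p)$ and $\rho = \log k/\log(kp)$ is the typical inter-vertex distance. In the regime $p \ge (1+\varepsilon)\log k/k$ (essentially the connectivity threshold, and the same hypothesis as in \Cref{thm:avg-dense-hard}), BFSU guarantee that a random graph can route \emph{any} prescribed family of at most $\kappa = \Theta_\varepsilon(k/\rho)$ pairwise disjoint terminal pairs by internally vertex-disjoint paths. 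The factor $\rho$ here is unavoidable, since almost all vertex pairs lie at distance $\sim\rho$, so a vertex-disjoint linkage on substantially more than $k/\rho$ pairs would run out of vertices. Given this, \Cref{thm:BFSU-gnp} is just a matter of chopping a matching into pieces of size at most $\kappa$.

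In detail, I would first invoke the BFSU theorem to fix, with probability $1-o(1)$, a graph $H \sim \mathcal G(k,p)$ enjoying the routing property. A subtle but essential point: this must be read as ``\emph{for all} families of at most $\kappa$ disjoint terminal pairs there is a vertex-disjoint linkage in $H$'', for a \emph{single} random $H$ --- not merely ``for each fixed family, with high probability $\ldots$''. The strong form is what BFSU-type arguments actually deliver (the routing succeeds for worst-case terminals in a fixed, sufficiently expanding host), and it is exactly what is needed, since \Cref{thm:BFSU-gnp} quantifies over the exponentially many matchings on $V(H)$. Now fix such an $H$ and let $M$ be any matching on $V(H)$, so $|M| \le \lfloor k/2 \rfloor$. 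Partition $E(M)$ arbitrarily into $r \coloneqq \lceil |M|/\kappa \rceil$ groups $M_1,\dots,M_r$, each with at most $\kappa$ edges. Every $M_i$, being a subset of the matching $M$, is again a matching; interpreting its edges as disjoint terminal pairs (which need not be edges of $H$ --- harmless, since BFSU route arbitrary pairs), the routing property of $H$ gives a vertex-disjoint $M_i$-linkage, and since $M_i$ is a matching the paths share no vertices at all, hence the linkage is uncongested. Finally, $\kappa = \Theta_\varepsilon(k/\rho)$ together with $|M| \le k/2$ yields $r \le k/(2\kappa) + 1 = O_\varepsilon(\rho) = O(\log k/\log(kp))$, which is the desired bound on the number of parts.

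The main obstacle is therefore extracting the BFSU result in precisely the stated shape. One has to verify (i) the uniformity over all terminal sets of size at most $\kappa$ for a fixed random graph, as emphasized above; (ii) that the admissible number of terminal pairs is indeed $\Theta(k/\rho)$ throughout the whole range $p \ge (1+\varepsilon)\log k/k$, from $p$ just above the connectivity threshold (where $\rho$ can be as large as $\Theta(\log k/\log\log k)$ and $\kappa$ correspondingly small) down to polynomially small $p = k^{-\delta}$ (where $\rho = \Theta(1)$ and $\kappa = \Theta(k)$, so a near-perfect matching needs only $O(1)$ pieces); and (iii) bookkeeping of constants, noting that everything is asymptotic and small-$k$ rounding is absorbed by the ``with high probability'' clause. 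Accordingly I would structure the write-up as: a precise restatement of the relevant BFSU routing lemma specialized to $\mathcal G(k,p)$ in this regime, then the short partitioning argument above, then the arithmetic bounding $r$. (If an efficient construction is wanted, the BFSU routing is itself polynomial-time, but this is not needed here, as linkages in the blowups used later can in any case be found by the brute-force procedure of \Cref{theo:find-linked-set}.)
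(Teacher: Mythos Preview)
Your approach is correct and essentially matches the paper's: invoke the BFSU routing result for random graphs, then chop any matching into $O(\rho)$ pieces each small enough to be linkable. The only difference lies in the bookkeeping you already anticipated under ``extracting the BFSU result in precisely the stated shape'': the paper quotes BFSU in the exact form of \cite[Corollary~1.1]{BFSU96}, which is stated for the fixed-size model $G(k,m)$, even $k$, and \emph{perfect} matchings (phrased via random equipartition into $r$ parts rather than an explicit $\kappa$-bound on linkable matchings), so most of the paper's write-up is spent on (a) transferring the property to $\mathcal G(k,p)$ via monotonicity of the routing property plus a Chernoff bound on the edge count, and (b) reducing arbitrary matchings and odd $k$ to that setting by extending $M$ to a perfect matching and deleting an unmatched vertex.
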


The original theorem statement and proof in \cite[Corollary 1.1]{BFSU96} are concerned with the fixed-sized random graph model $G(k,m)$, and only deals with even $k$. 
But on the other hand, they give a stronger statement concerning the algorithmic efficiency of finding the desired partition, that it can be obtained with high probability by a random partition. 
A proof of the version stated here can be found in the full version.

The last theorem can be used to find large matching-linked sets inside a proper blowup of a random graph, which implies a high linkage capacity by \cref{lem:linkage-capacity-via-blowup}.

\begin{lemma}
    \label{lem:linkage-capacity-random}
    Let $\varepsilon>0$ be a constant. 
    For all $p\geq (1+\varepsilon)\log (k)/k$, the linkage capacity of $H \sim \mathcal G(k,p)$ is at least $\Omega(\frac{k\log(k p)}{\log k})$ with high probability.
\end{lemma}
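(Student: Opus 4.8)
The plan is to combine \Cref{thm:BFSU-gnp} with \Cref{lem:linkage-capacity-via-blowup} exactly as the paragraph preceding the lemma suggests. First I would fix $\varepsilon > 0$ and $p \geq (1+\varepsilon)\log(k)/k$, and condition on the high-probability event $\mathcal E$ from \Cref{thm:BFSU-gnp}: there is a constant $C = C(\varepsilon)$ such that every matching $M$ on $V(H)$ splits into $r \leq C \log(k)/\log(kp)$ matchings $M_1,\dots,M_r$, each admitting an uncongested $M_i$-linkage in $H$. The goal is then to exhibit a single integer $q$ (depending only on $k$, not on $M$) and a matching-linked set $X$ inside $H \ootimes J_q$ whose size is $\Omega(k)$; feeding this into \Cref{lem:linkage-capacity-via-blowup} yields $\gamma(H) \geq \tfrac13 |X|/q = \Omega(k \log(kp)/\log k)$, which is the claim.

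The key step is the choice of $q$ and $X$. I would take $q := 2r$ (using the worst-case bound $r \leq C\log(k)/\log(kp)$, so $q$ is a fixed function of $k$) and let $X := \{v^{(1)} \mid v \in V(H)\}$, the ``first clones'' of all $k$ vertices, so $|X| = k$. To verify $X$ is matching-linked in $H \ootimes J_q$, take any matching $M$ on $X$; its $H$-projection $\pi(M)$ is then literally a matching $M'$ on $V(H)$ (no two endpoints of an edge of $M$ collide in a block since all clones have superscript $1$, so $\pi$ loses nothing here and introduces no parallel edges beyond what a matching has). By event $\mathcal E$, $M'$ partitions into $r' \leq r$ matchings each with an uncongested linkage in $H$; their union is an $r'$-congested, hence $q/2$-congested (in fact $r$-congested, and $r \leq q/2$), $\pi(M)$-linkage in $H$. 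Now \Cref{lem:congestion-vs-blowup} (applied with parameter $r$, noting $M' = \pi(M)$ and the linkage is $r$-congested) produces an uncongested $M$-linkage in $H \ootimes J_{2r} = H \ootimes J_q$. Since $M$ was arbitrary, $X$ is matching-linked. Then \Cref{lem:linkage-capacity-via-blowup} with $|X| = k$ and this $q = \Theta(\log k / \log(kp))$ gives $\gamma(H) \geq \tfrac{k}{3q} = \Omega(k\log(kp)/\log k)$ on the event $\mathcal E$, i.e.\ with high probability.

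A small technical point I would be careful about: \Cref{lem:congestion-vs-blowup} as stated requires a $q$-congested $\pi(M)$-linkage and outputs an uncongested $M$-linkage in $H \ootimes J_{2q}$, so to match notation cleanly I should invoke it with ``$q$'' set to $r$ (our congestion bound), obtaining a linkage in $H \ootimes J_{2r}$; this is why I set $q = 2r$ from the start. One must also check that $r$ can be bounded by a single constant-times-$\log(k)/\log(kp)$ quantity uniformly over all matchings $M$ — this is exactly what \Cref{thm:BFSU-gnp} guarantees, so there is no dependence of $q$ on the particular matching. The main (and really only) obstacle is bookkeeping the rounding and the constant factors so that the final bound is stated as $\Omega(k\log(kp)/\log k)$ with an honest absolute constant; the conceptual content is entirely contained in the two cited lemmas and the random-graph routing theorem, so no genuinely new idea is needed beyond choosing the blowup order and the ``first-clone'' set correctly.

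\begin{proof}
    Fix $\varepsilon > 0$ and $p \geq (1+\varepsilon)\log(k)/k$. By \Cref{thm:BFSU-gnp}, there is a constant $C = C(\varepsilon)$ such that, with high probability over $H \sim \mathcal G(k,p)$, every matching $M$ on $V(H)$ can be partitioned into $r \leq C \cdot \log(k)/\log(kp)$ matchings $M_1,\dots,M_r$, each of which admits an uncongested $M_i$-linkage in $H$. We condition on this event. Let $q \coloneqq 2\lceil C \log(k)/\log(kp) \rceil$ and consider the set
    \[ X \coloneqq \{v^{(1)} \mid v \in V(H)\} \subseteq V(H \ootimes J_q), \]
    so that $|X| = k$.

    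We claim that $X$ is matching-linked in $H \ootimes J_q$. Let $M$ be any matching on $X$. Since all vertices in $X$ carry the superscript $1$, no edge of $M$ lies inside a single block, so the $H$-projection $\pi(M)$ is a matching $M'$ on $V(H)$ with $|E(M')| = |E(M)|$. By the event we conditioned on, $M'$ partitions into $r \leq C\log(k)/\log(kp)$ matchings $M'_1,\dots,M'_r$ on $V(H)$, each admitting an uncongested $M'_i$-linkage $Q_i$ in $H$. Then $Q \coloneqq \bigcup_{i \in [r]} Q_i$ is an $r$-congested $\pi(M)$-linkage in $H$. Since $q \geq 2r$ and $H \ootimes J_{2r}$ is a subgraph of $H \ootimes J_q$, \Cref{lem:congestion-vs-blowup} (applied with parameter $r$) yields an uncongested $M$-linkage in $H \ootimes J_q$. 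As $M$ was an arbitrary matching on $X$, the set $X$ is matching-linked in $H \ootimes J_q$.

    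Finally, \Cref{lem:linkage-capacity-via-blowup} gives
    \[ \gamma(H) \geq \frac{1}{3} \cdot \frac{|X|}{q} = \frac{k}{3q} = \Omega\!\left(\frac{k \log(kp)}{\log k}\right), \]
    which holds with high probability.
\end{proof}
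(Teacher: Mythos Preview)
Your proof is correct and follows essentially the same approach as the paper: take $X$ to be the first clones of all vertices, use \Cref{thm:BFSU-gnp} to obtain an $r$-congested $\pi(M)$-linkage in $H$ for any matching $M$ on $X$, lift it to an uncongested linkage in $H\ootimes J_{2r}$ via \Cref{lem:congestion-vs-blowup}, and conclude with \Cref{lem:linkage-capacity-via-blowup}. Your version is in fact slightly more careful than the paper's in making the blowup order $q$ a fixed quantity independent of the matching and in handling the containment $H\ootimes J_{2r}\subseteq H\ootimes J_q$ explicitly.
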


\begin{proof}
    Let $r$ be the bound specified in \Cref{thm:BFSU-gnp} and consider the blowup graph $H\boxtimes K_{2r}$.
    Let $X \coloneqq \{v^{(1)} \mid v \in V(H)\}$. 
    We show that $X$ is matching-linked in $H\boxtimes K_{2r}$ with high probability, and the lemma then follows using \cref{lem:linkage-capacity-via-blowup}.

    Let $M'$ be a matching on $X$. 
    By the definition of $X$, its $H$-projection, $M \coloneqq \pi(M')$, is also a matching on $V(H)$.  
    We invoke \Cref{thm:BFSU-gnp} on the graph $H$ with respect to the matching $M$ to obtain a partition $M_1,\ldots,M_r$, such that, for all $i\in[r]$ there is an uncongested $M_i$-linkage $Q_i$ in $H$. 
    Then $Q = \bigcup_{i \in [r]} Q_i$ is an $r$-congested $M$-linkage in $H$.
    So there is an uncongested $M'$-linkage in $H\boxtimes K_{2r}$ by Lemma \ref{lem:congestion-vs-blowup}.
\end{proof}

Now, \cref{thm:avg-dense-hard} follows from \cref{thm:linkage-ETH} and \cref{lem:linkage-capacity-random}.

\section{Large-Treewidth Patterns and Concurrent Flows}
\label{sec:treewidth}

In this section, we relate the linkage capacity of a graph to its treewidth.
Towards this end, we first connect the linkage capacity to certain (fractional) multicommodity flows, and afterward rely on existing connections between such flows and treewidth \cite[Section 3.1]{Marx10}.

More specifically, given a graph $H$ and $W \subseteq V(H)$, we consider the following multicommodity flow problem.
For every pair $(u,v) \in W^2$, there is a distinct commodity $\circleuv$ that can be sent in arbitrary fractional amounts along different paths from $u$ to $v$ in $H$.
The goal is to determine whether all pairs $(u,v)$ can concurrently send an $\epsilon$ amount of $\circleuv$ to each other, while the total flow through every vertex $w \in V(H)$ is at most some globally fixed capacity $C$.
Formally, this is captured by the following LP:

\begin{definition}
    Let $H$ be a graph. For $u,v \in V(H)$, write $\paths{H}{u}{v}$ for the set of paths from $u$ to $v$ in $H$; the set $\paths{H}{u}{v}$ for $u = v$ contains only the path $(u)$.
    Given $W \subseteq V(H)$, the concurrent flow LP (for $H$ and $W$) with vertex capacity $C>0$ asks to
    \begin{alignat*}{3}
        &\text{maximize } \varepsilon\\
        &\text{subject to} \quad& \sum_{p \in \paths{H}{u}{v}} x_p &\geq \varepsilon &\quad& \forall u,v \in W\\
                               && \sum_{u,v \in W}\sum_{p \in \paths{H}{u}{v}\colon w \in p} x_p &\leq C && \forall w \in V(G)\\
                               && x_p &\geq 0 && \forall u,v \in W, p \in \paths{H}{u}{v}.
    \end{alignat*}
    We write $\varepsilon(H,W)$ to denote the optimal LP value for capacity $C = 1$.
\end{definition}

While an optimal solution for $C=1$ may assign fractional values to the variables $x_p$, every solution can be scaled to an integral solution, increasing the required capacity and the optimal LP value by the same factor.
This integral solution then induces a congested model of the multigraph $K_{t,(q)}$ in $H$, where $t \coloneqq |W|$ and $q \in \NN$ is suitably chosen, and $K_{t,(q)}$ has $t$ vertices and contains each possible (undirected) edge with multiplicity $q$.

\begin{lemma}
    \label{lem:fractional-flow-to-congested-minor-model}
    Let $H$ be a graph and $W \subseteq V(H)$ be a set of size $t$.
    Then there is some $D \in \NN$ such that $q \coloneqq D \cdot \varepsilon(H,W)$ is an integer and $H$ contains a $D$-congested $K_{t,(q)}$-linkage, where we set $V(K_{t,(q)}) = W$.
\end{lemma}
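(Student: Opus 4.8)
The plan is to start from an optimal solution of the concurrent flow LP for $H$ and $W$ with capacity $C=1$, scale it up to become integral, and then reinterpret the resulting collection of paths as a congested linkage of the complete multigraph $K_{t,q}$ with vertex set $W$. Let $\varepsilon = \varepsilon(H,W)$ and let $(x_p)_{p}$ be an optimal solution, so that $\sum_{p \in \paths{H}{u}{v}} x_p \geq \varepsilon$ for every ordered pair $(u,v) \in W^2$ and the vertex-load inequality $\sum_{u,v}\sum_{p \ni w} x_p \leq 1$ holds for all $w \in V(H)$. Since the LP has rational coefficients, it has a rational optimal solution; let $D$ be a common denominator chosen large enough that $D \cdot x_p \in \NN_0$ for all $p$ with $x_p > 0$ \emph{and} such that $q := D\varepsilon$ is a positive integer. (One may first fix any rational optimal solution, then take $D$ to be the least common multiple of all the denominators appearing, multiplied by the denominator of $\varepsilon$ if needed.)

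Next I would assemble the linkage. For each unordered pair $\{u,v\} \subseteq W$ we need $q$ paths between $u$ and $v$ in $H$, counted with multiplicity, to realise the $q$ parallel edges of $K_{t,q}$ joining $u$ and $v$. The flow constraint guarantees $\sum_{p \in \paths{H}{u}{v}} D x_p \geq D\varepsilon = q$, so we have \emph{at least} $q$ paths in the multiset $\{p \text{ with multiplicity } Dx_p : p \in \paths{H}{u}{v}\}$; simply discard the excess so that exactly $q$ remain. Doing this for every pair $\{u,v\}$ yields a family $Q = (P_e)_{e \in E(K_{t,q})}$ where $P_e$ is a $u$-$v$-path whenever $e$ is one of the $q$ copies of $uv$; this is by definition a $K_{t,q}$-linkage in $H$ with $V(K_{t,q}) = W$. (A cosmetic point: for a pair $u=v$ — i.e. if $W$-to-$W$ commodities with equal endpoints are counted — the only path in $\paths{H}{u}{u}$ is the trivial path $(u)$; but $K_{t,q}$ as a multigraph has no loops, so these commodities do not contribute edges and can be ignored. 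This is consistent with the statement setting $V(K_{t,q}) = W$ with $t = |W|$.)

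Finally I would verify the congestion bound. Fix $w \in V(H)$. The number of paths $P_e \in Q$ passing through $w$ is at most $\sum_{u,v \in W}\sum_{p \in \paths{H}{u}{v} : w \in p} Dx_p = D \cdot \sum_{u,v}\sum_{p \ni w} x_p \leq D \cdot 1 = D$, using the vertex-capacity constraint of the LP scaled by $D$. Hence $Q$ is a $D$-congested $K_{t,q}$-linkage, as required, and the lemma is proved.

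**Main obstacle.** The only genuinely delicate point is the bookkeeping around the denominator $D$: one must pick $D$ so that it simultaneously integralizes every path variable $x_p$ and makes $D\varepsilon$ an integer. This is harmless — take an LP-optimal rational solution (which exists since the LP is rational and bounded) and clear all denominators — but it deserves an explicit sentence. The path-counting and congestion verifications themselves are immediate term-by-term comparisons with the two families of LP constraints, so no real work is hidden there.
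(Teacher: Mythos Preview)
Your proof is correct and follows essentially the same approach as the paper: take a rational optimal LP solution, clear denominators by scaling by a common denominator $D$, and read off the resulting integral path multiplicities as a $D$-congested linkage. You are in fact slightly more careful than the paper in two places---you explicitly trim each pair's multiset down to exactly $q$ paths, and you worry about $D\varepsilon$ being an integer (which is automatic once $D$ clears all $x_p$'s, since at optimality some flow constraint is tight and hence $\varepsilon$ is itself a finite sum of $x_p$'s).
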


\begin{proof}
    Let $D$ be the common denominator of the values for all $x_p$ in a (rational) optimal solution of the concurrent flow LP for $H$ and $W$ with capacity $C=1$.
    Scaling all values by $D$ yields an integral solution of value $q \coloneqq D \cdot \varepsilon(H,W)$ for the LP with capacity $D$.
    Now, consider the multiset $Q$ which, for every distinct $u,v \in W$, contains every path $p \in \paths{H}{u}{v}$ with multiplicity $x_p$.
    Then $Q$ is a $D$-congested $K_{t,(q)}$-linkage where $V(K_{t,(q)}) = W$.
\end{proof}

Using this congested $K_{t,(q)}$-linkage, we will establish lower bounds on the linkage capacity of $H$. The following lemma will be useful, as it allows us to route arbitrary multigraphs of bounded degree via short paths in this $K_{t,(q)}$.

\begin{lemma}
    \label{lem:route-multigraph-in-complete-graph}
    Let $q \in \NN$ and let $M$ be a multigraph with $V(M) = [t]$ and maximum degree at most $qt$.
    Then there is an $M$-linkage $Q = (P_{uv})_{uv \in E(M)}$ in $K_t$ such that every edge $e \in E(K_t)$ appears in at most $18q$ paths in $Q$.
\end{lemma}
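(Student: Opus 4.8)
The goal is to route a multigraph $M$ on vertex set $[t]$ of maximum degree at most $qt$ inside $K_t$ via paths, so that each edge of $K_t$ carries at most $O(q)$ paths. The natural idea is a two-stage routing: first decompose $E(M)$ into few matchings via edge-coloring, then route each matching by a "random intermediate vertex" scheme à la Valiant, where an edge $uv$ is realized by the two-edge path $u \to z \to v$ for an appropriately chosen intermediate vertex $z$. Since $K_t$ contains all edges, any such two-edge path is available. The congestion on a fixed edge $xy$ of $K_t$ is then controlled by how many times $xy$ is used as a first or second leg across all matchings, and a counting/pigeonhole (or probabilistic) argument over the choices of intermediate vertices keeps this at $O(q)$.

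More concretely, here is the sequence of steps I would carry out. \emph{Step 1: Edge-color $M$.} Since $M$ has maximum degree $\Delta(M) \le qt$, Shannon's theorem (\Cref{thm:shannon}) gives a proper edge-coloring of $M$ with at most $\lfloor \tfrac{3}{2} qt \rfloor$ colors, i.e. a partition $E(M) = N_1 \cup \dots \cup N_r$ into $r \le \tfrac{3}{2}qt$ matchings. \emph{Step 2: Route each matching through a "rotation".} For a single matching $N_j$ on $[t]$, fix a shift $a_j \in \{0,1,\dots,t-1\}$ and route each edge $uv \in N_j$ by the path $(u, \sigma_{a_j}(?), v)$ — more cleanly, route $uv$ as $u \to w \to v$ where $w$ is the "$a_j$-th vertex" in some canonical ordering depending on $u$ (or just use a direct edge $uv$ when $uv \in E(K_t)$, which is always, so in fact each matching can be realized with length-one paths!). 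Wait — since $K_t$ is complete, each matching edge $uv$ is itself an edge of $K_t$, so $N_j$ is realized by $|N_j|$ single edges, each used exactly once.

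\emph{Step 3: Bound the congestion.} If we route every edge $uv \in E(M)$ simply as the single edge $uv$ in $K_t$, then a fixed edge $xy \in E(K_t)$ is used exactly by the parallel copies of $xy$ in $M$, i.e. with multiplicity equal to the edge-multiplicity $\mu_M(xy)$. This can be as large as $qt$ (all of $x$'s degree concentrated on $y$), which is far more than $18q$. So direct routing fails, and we genuinely need to spread load via intermediate vertices: route each copy of $xy$ as a two-edge path $x \to z \to y$ with $z$ chosen so that no intermediate vertex and no $K_t$-edge is overloaded. I would argue this via a random choice: for each edge-copy pick $z$ uniformly in $[t]$; the expected number of paths through a fixed vertex $z_0$ as intermediate is $|E(M)|/t \le \tfrac{1}{2}qt \cdot t / t = \tfrac12 q t$... no, $|E(M)| \le \tfrac12 \Delta(M) t \le \tfrac12 q t^2$, so expected load per intermediate vertex is $\le \tfrac12 qt$, still too big — but we only care about \emph{edge} congestion, not vertex. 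The expected number of paths using a fixed $K_t$-edge $x z_0$ as its \emph{first leg} is (number of $M$-edges at $x$)$/t \le qt/t = q$; similarly for second legs. So expected edge-congestion is $\le 2q$, and a union bound / Chernoff argument over the at most $\binom t2$ edges of $K_t$ gives congestion $O(q \log t)$ with positive probability — which is $q\log t$, not $18q$.

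\emph{The main obstacle}, then, is shaving the $\log t$ factor: a naive random intermediate-vertex scheme only gives $O(q\log t)$ edge-congestion, whereas the lemma demands a \emph{constant} $18$. To get this, I would instead combine edge-coloring with a \emph{deterministic, balanced} assignment of intermediate vertices: within each matching $N_j$ (which is a partial permutation on $[t]$), decompose its routing into the classical two-phase permutation routing on $K_t$, assigning to the $i$-th edge of $N_j$ the intermediate vertex $i \bmod t$, or more robustly, use a Birkhoff–von Neumann / Hall-type argument to choose the intermediates so that each vertex serves as intermediate for at most $\lceil |N_j|/t\rceil \le 1$ edge of $N_j$ with first-leg load and second-leg load on each $K_t$-edge at most $1$ per matching; summing over $r \le \tfrac32 qt$ matchings would give $O(qt)$, still wrong. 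The correct bookkeeping must be: across \emph{all} matchings, first-leg usage of edge $xz$ is bounded by $\deg_M(x) \le qt$ divided by... no. I will need to be careful here: the right statement is that choosing intermediates to balance load \emph{per $K_t$-edge globally} (not per matching) is what yields the constant. I would set it up as a fractional relaxation — send $1/t$ of each $M$-edge's flow through each potential intermediate — giving fractional edge-congestion exactly $\tfrac1t \sum_{e \ni x}\mu_M(e) = \deg_M(x)/t \le q$ on first legs and similarly $\le q$ on second legs, hence total fractional congestion $\le 2q$ on every $K_t$-edge; then round the fractional flow to an integral $M$-linkage losing at most a small constant factor (this is where the constant blows up from $2$ to $18$, absorbing the rounding of the edge-coloring and the path-integrality). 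The delicate part is performing this rounding while preserving the integral path structure, for which I would invoke an integral-flow / Eulerian-orientation argument on a suitable auxiliary bipartite multigraph, and track constants to confirm $18q$ suffices.
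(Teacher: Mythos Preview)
Your plan eventually lands on the right picture---route every $uv \in E(M)$ as a length-two path $u \to z \to v$ through an intermediate vertex $z$, and observe that if intermediates are chosen uniformly at random (equivalently, fractionally), the expected load on any $K_t$-edge is $O(q)$. That part is correct and matches the paper's setup exactly. The edge-coloring detour is a red herring: Shannon's theorem plays no role here, and routing matching-by-matching cannot beat $O(qt)$ congestion, as you noticed.

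The genuine gap is the ``rounding'' step. You acknowledge it is the delicate part and gesture at ``an integral-flow / Eulerian-orientation argument on a suitable auxiliary bipartite multigraph,'' but you neither specify the auxiliary graph nor explain why such an argument would simultaneously control first-leg and second-leg loads on every $K_t$-edge with a constant-factor loss. This is precisely where the work lies, and the tools you name do not obviously apply: the feasible region of routings with bounded edge-congestion is not an integral polytope, and there is no evident Eulerian structure to exploit. Your proposal stops at ``fractional congestion is $O(q)$, now round''; that is a restatement of the lemma, not a proof of it.

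The paper's proof is far more elementary and bypasses rounding entirely. It builds the paths \emph{greedily}: process $M$-edges one at a time, always choosing an intermediate $x_e$ so that two invariants are maintained---(a) no $K_t$-edge is on more than $18q$ paths, and (b) no vertex is the middle vertex of more than $qt$ paths. A short counting argument shows a valid $x_e$ always exists: at most $t/2$ vertices are saturated as middles (since $|E(M)| \le qt^2/2$), and at most $t/6$ vertices $x$ have the edge $ux$ saturated (since $u$ is on at most $3qt$ path-legs total), likewise for $v$; for $t > 12$ these exclusions leave a vertex available. Invariant (b) is the key bookkeeping device you are missing---it caps the \emph{second-leg} contribution to any edge's load, which is what makes the greedy choice go through with a fixed constant instead of $\log t$.
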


\begin{proof}
    If $t \leq 12$ the statement trivially holds by choosing $P_{uv} = (u,v)$ for every $uv \in E(M)$.
    So in the remainder of the proof, we assume that $t > 12$.
    
    First observe that $|E(M)| \leq qt^2/2$ since $\deg_M(v) \leq qt$ for all $v \in V(M)$.
    For every $e = uv \in E(M)$ we set $P_{uv} = (u,x_e,v)$ for some suitable \emph{middle vertex} $x_e \in V(K_t) \setminus \{u,v\}$.
    We construct the paths one by one in a greedy fashion.
    Suppose $\mathcal{P}$ is the collection of paths constructed so far.
    We ensure that
    \begin{enumerate}[(a)]
        \item\label{item:route-multigraph-in-complete-graph-1} every edge of $K_t$ appears in at most $18q$ paths in $\mathcal{P}$, and
        \item\label{item:route-multigraph-in-complete-graph-2} every vertex is the middle vertex on at most $qt$ paths in $\mathcal{P}$.
    \end{enumerate}
    Now consider an edge $e = uv \in E(M)$ that is not yet covered by $\mathcal{P}$.
    We argue that there is some $x \in V(K_t) \setminus \{u,v\}$ such that $\mathcal{P} \cup \{(u,x,v)\}$ still satisfies Conditions \ref{item:route-multigraph-in-complete-graph-1} and \ref{item:route-multigraph-in-complete-graph-2}.

    Since $|E(M)| \leq qt^2/2$, there are at most $t/2$ vertices that are the middle vertex of exactly $qt$ paths in $\mathcal{P}$ (i.e., they cannot be selected as a middle vertex).
    Also, the total number of edges incident to $u$ used in $\mathcal{P}$ is at most $3qt$ since $\deg_M(u) \leq qt$ and $u$ is the middle vertex of at most $qt$ paths.
    So there are most $t/6$ vertices $x \in V(G) \setminus \{u\}$ such that the edge $ux$ is \emph{full}, i.e., $ux$ already appears in $18q$ paths in $\mathcal{P}$.
    Similarly, there are most $t/6$ vertices $x \in V(G) \setminus \{v\}$ such that the edge $vx$ is full.
    Since
    \[\frac{t}{2} + 2 \cdot \left(1 + \frac{t}{6}\right) = \frac{5}{6}t + 2 < t,\]
    there exists at least one $x_e \in V(K_t) \setminus \{u,v\}$ such that $\mathcal{P} \cup \{(u,x_e,v)\}$ still satisfies Conditions  \ref{item:route-multigraph-in-complete-graph-1} and \ref{item:route-multigraph-in-complete-graph-2}.
\end{proof}

We can conclude that a large value of the concurrent flow LP implies large linkage capacity.

\begin{theorem}
    \label{thm:linkage-capacity-vs-concurrent-flow}
    Let $H$ be a graph and $W \subseteq V(H)$. Then $\gamma(H) \geq  \varepsilon(H,W) \cdot |W|^2 / 108$.
\end{theorem}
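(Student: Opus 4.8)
The plan is to use the $D$-congested $K_{t,q}$-linkage supplied by \Cref{lem:fractional-flow-to-congested-minor-model} as a ``virtual complete graph'' embedded in $H$, to route arbitrary bounded-degree multigraphs on $W$ through this virtual $K_t$ by \Cref{lem:route-multigraph-in-complete-graph}, and then to lift the resulting congested linkage into an uncongested linkage in a blowup of $H$ using \Cref{lem:congestion-vs-blowup} and \Cref{lem:linkage-capacity-via-blowup}. Write $t := |W|$ and $\varepsilon := \varepsilon(H,W)$; the claim is trivial if $\varepsilon = 0$, so assume $\varepsilon > 0$. \Cref{lem:fractional-flow-to-congested-minor-model} gives $D_0 \in \NN$ with $q_0 := D_0 \varepsilon \in \NN$ and a $D_0$-congested $K_{t,q_0}$-linkage in $H$ on vertex set $W$; since $\varepsilon > 0$ we have $q_0 \geq 1$. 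Replacing every path in this linkage by $18$ parallel copies of itself yields a $D$-congested $K_{t,q}$-linkage $\mathcal{L}$ in $H$ with $D := 18 D_0$ and $q := 18 q_0 = D\varepsilon$, so that now $18 \mid q$ and $q \geq 18$; this divisibility is what will make the constants line up exactly.

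Next I would fix the candidate matching-linked set $X := \{v^{(a)} \mid v \in W,\ a \in [m]\} \subseteq V(H \ootimes J_N)$ with $m := q_0 t$ and $N := 2D$, so that $|X| = q_0 t^2$ and \Cref{lem:linkage-capacity-via-blowup} would give $\gamma(H) \geq \tfrac13 |X| / N = \varepsilon t^2 / 108$, exactly the claimed bound. To be sure that $X$ sits inside $H \ootimes J_N$ --- and, later, inside $H \ootimes J_D$ --- I first record the elementary inequality $t\varepsilon \leq 1$: reading the capacity-$1$ constraint at a vertex $u \in W$, the commodity for the pair $(u,u)$ and each of the $t-1$ commodities for pairs $(u,v)$ with $v \neq u$ route at least $\varepsilon$ units through $u$, so $t\varepsilon \leq 1$. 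Hence $m = D_0 \varepsilon t \leq D_0 \leq D \leq N$, and all the blowup orders involved are consistent.

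The heart of the proof is verifying that $X$ is matching-linked in $H \ootimes J_N$. Given a matching $M'$ on $X$, its $H$-projection $\pi(M')$ is a multigraph on $W$ of maximum degree at most $m = q_0 t$, since each of the $m$ clones of a fixed $v \in W$ lies on at most one edge of $M'$. Applying \Cref{lem:route-multigraph-in-complete-graph} with parameter $q_0$ (identifying $W$ with $[t]$) produces a $\pi(M')$-linkage in $K_t$ in which every edge of $K_t$ is used by at most $18 q_0 = q$ paths. I would then ``realize'' this $K_t$-linkage inside $H$: for every edge $ij$ of $K_t$, assign to its at most $q$ occurrences among these paths pairwise distinct members of the $q$ paths of $\mathcal{L}$ joining $i$ and $j$; substituting these $H$-paths for the edges of each $K_t$-path turns it into a $u$-$v$ walk in $H$, from which an honest $u$-$v$ path can be extracted. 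Since every path of $\mathcal{L}$ is used at most once in total and $\mathcal{L}$ is $D$-congested, every vertex of $H$ lies on at most $D$ of these walks, hence on at most $D$ of the extracted paths, giving a $D$-congested $\pi(M')$-linkage in $H$. As $V(M') \subseteq V(H \ootimes J_m) \subseteq V(H \ootimes J_D)$, \Cref{lem:congestion-vs-blowup} converts this into an uncongested $M'$-linkage in $H \ootimes J_{2D} = H \ootimes J_N$. Since $M'$ was arbitrary, $X$ is matching-linked, and $\gamma(H) \geq \tfrac13 |X|/N = \tfrac{q_0 t^2}{108 D_0} = \tfrac{\varepsilon t^2}{108}$ follows from \Cref{lem:linkage-capacity-via-blowup}.

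I do not expect a conceptual obstacle; the difficulty is purely in making the constants and blowup orders fit. The delicate points are: (i) matching the ``$18q$ paths per edge of $K_t$'' guaranteed by \Cref{lem:route-multigraph-in-complete-graph} against the ``$q$ parallel copies per edge'' available in $K_{t,q}$, which is why the $18$-fold superposition (and the divisibility $18 \mid q$) are needed to land on the clean constant $108$; (ii) ensuring $m \leq D$ so that \Cref{lem:congestion-vs-blowup} may be invoked with its blowup parameter equal to the congestion bound $D$, which is exactly the role of the inequality $t\varepsilon \leq 1$; and (iii) remembering that the projection $\pi$ silently discards the edges of $M'$ contained in a single block of the blowup, a loss that is absorbed by the factor $2$ inside \Cref{lem:congestion-vs-blowup}.
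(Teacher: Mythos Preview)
Your proposal is correct and follows essentially the same route as the paper: invoke \Cref{lem:fractional-flow-to-congested-minor-model} to get a $D$-congested $K_{t,q}$-linkage, route the projected matching through the virtual $K_t$ via \Cref{lem:route-multigraph-in-complete-graph}, substitute the $H$-paths for $K_t$-edges, and finish with \Cref{lem:congestion-vs-blowup} and \Cref{lem:linkage-capacity-via-blowup}. The only cosmetic difference is that you replicate each path of the $K_{t,q}$-linkage $18$ times up front (turning the congestion bound from $D_0$ into $D = 18D_0$ and the multiplicity from $q_0$ into $q = 18q_0$), whereas the paper keeps the original linkage and instead tracks that each of its paths is reused at most $18$ times during substitution; the arithmetic is identical and both arrive at $|X|/(3N) = \varepsilon t^2/108$. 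You are in fact slightly more careful than the paper in two places: you explicitly justify $t\varepsilon \leq 1$ (needed so that the clone index $m$ does not exceed the blowup order required by \Cref{lem:congestion-vs-blowup}), and you note that the concatenation of substituted $H$-paths is a priori only a walk, from which a genuine path must be extracted---a step the paper silently assumes.
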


\begin{proof}[Proof of \cref{thm:linkage-capacity-vs-concurrent-flow}]
    Let $D \in \NN$ be the integer obtained from Lemma \ref{lem:fractional-flow-to-congested-minor-model} and set $D' \coloneqq 18 \cdot D$.
    Let $q \coloneqq D \cdot \varepsilon(H,W)$ which, by Lemma \ref{lem:fractional-flow-to-congested-minor-model}, is an integer.
    Observe that $\varepsilon(H,W) \leq 1/|W|$, so $D \geq q \cdot |W|$.
    Finally, let $s \coloneqq q \cdot |W|$.

    Consider the graph $H \boxtimes K_{2D'}$ and let
    \[X \coloneqq \{w^{(i)} \mid w \in W, i \in [s]\}.\]
    We show that $X$ is matching-linked in $H \boxtimes K_{2D'}$.
    Let $M$ be a matching on $X$.
    Let $\widehat{M} \coloneqq \pi(M)$ be the $H$-projection of $M$.
    Observe that $\deg_{\widehat{M}}(w) \leq s = q \cdot |W|$ for all $w \in W$.
    
    Lemma \ref{lem:route-multigraph-in-complete-graph} finds an $\widehat{M}$-linkage $Q = (P_{uv})_{uv \in E(\widehat{M})}$ in $K_{t,(q)}$ (where $V(K_{t,(q)}) = W$) such that every edge of $K_{t,(q)}$ appears in at most $18$ of those paths.
    Moreover, by Lemma \ref{lem:fractional-flow-to-congested-minor-model}, the graph $H$ contains a $D$-congested $K_{t,(q)}$-linkage $Q' = (P_{uv}')_{uv \in E(K_{t,(q)})}$.
    We construct a $D'$-congested $\widehat{M}$-linkage $\widehat{Q} = (\widehat{P}_{uv})_{uv \in E(\widehat{M})}$ in $H$ as follows.
    For every $uv \in E(\widehat{M})$ we obtain $\widehat{P}_{uv}$ from $P_{uv}$ by substituting $P'_e$ for every edge $e$ appearing on $P_{uv}$.
    Clearly, $\widehat{Q}$ is $D'$-congested since $D' = 18 \cdot D$.
    So there is an uncongested $M$-linkage in $H \boxtimes K_{2D'}$ by Lemma \ref{lem:congestion-vs-blowup}. 
    
    Overall, we get that $X$ is matching-linked in $H \boxtimes K_{2D'}$.
    So
    \[\gamma(G) \geq \frac{1}{3} \cdot \frac{|X|}{2 \cdot D'} = \frac{1}{3} \cdot \frac{s \cdot |W|}{36 \cdot D} = \frac{1}{108} \cdot \frac{q \cdot |W|^2}{D} = \frac{1}{108} \cdot \varepsilon(H,W) \cdot |W|^2\]
    by Lemma \ref{lem:linkage-capacity-via-blowup}.
\end{proof}

To bound the linkage capacity by the treewidth, we combine Theorem \ref{thm:linkage-capacity-vs-concurrent-flow} with the following lemma that is (implicitly) shown by Marx \cite{Marx10}.

\begin{lemma}[\cite{Marx10}]
    Let $H$ be a graph of treewidth $t$.
    Then there is a set $W \subseteq V(H)$ such that $|W| = t$ and $\varepsilon(H,W) = \Omega(1/(t \log t))$.
\end{lemma}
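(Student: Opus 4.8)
The plan is to derive the lemma from LP duality for the concurrent flow LP, using the approximate max-flow/min-cut theorem for \emph{vertex}-capacitated multicommodity flow \cite{DBLP:journals/siamcomp/FeigeHL08,DBLP:journals/jacm/LeightonR99} as a black box, following Marx's original argument. There are two ingredients: (i) the treewidth hypothesis produces a vertex set $W$ of the right size that is \emph{well-linked}; and (ii) well-linkedness forces every small vertex cut to separate only few terminal pairs, so the vertex-sparsest-cut value for the uniform demand on $W$ is $\Omega(1/t)$, which combined with the $O(\log t)$ flow-cut gap for $t=|W|$ terminals yields $\varepsilon(H,W)=\Omega(1/(t\log t))$.

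In more detail, I would first invoke the standard correspondence between treewidth and well-linked sets (see, e.g., \cite{graph}): since $\tw(H)=t$, the graph $H$ contains a set $W$ with $|W|=\Theta(t)$ that is \emph{well-linked}, i.e., for every partition of $W$ into $A$ and $B$ there are $\min(|A|,|B|)$ pairwise vertex-disjoint $A$--$B$ paths in $H$; shrinking $W$ and absorbing a constant, we may take $|W|=t$. Consider the concurrent flow LP for $H$ and $W$ with vertex capacity $C=1$ and one commodity per terminal pair, so $\varepsilon(H,W)$ is the largest $\varepsilon$ for which $\varepsilon$ units of every demand are routable simultaneously. Define $\mathrm{sep}_W(Z)$, for $Z\subseteq V(H)$, as the number of terminal pairs that either lie in distinct components of $H-Z$ or have an endpoint in $Z$; every such pair must route its flow through $Z$, whose total capacity is $|Z|$, so $\varepsilon(H,W)\le |Z|/\mathrm{sep}_W(Z)$, and hence $\varepsilon(H,W)\le \Phi := \min_Z |Z|/\mathrm{sep}_W(Z)$. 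The vertex-capacitated approximate max-flow/min-cut theorem \cite{DBLP:journals/siamcomp/FeigeHL08,DBLP:journals/jacm/LeightonR99} gives the converse up to the flow-cut gap for $|W|=t$ terminals, namely $\Phi \le O(\log t)\cdot \varepsilon(H,W)$.

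It then remains to show $\Phi=\Omega(1/t)$. Fix $Z\subseteq V(H)$, let $C^\star$ be the component of $H-Z$ containing the most terminals, and set $a := t-|W\cap C^\star|$. On one hand $\mathrm{sep}_W(Z)\le \binom{t}{2}-\binom{t-a}{2}=O(at)$, since pairs with both endpoints in $C^\star$ are not counted. On the other hand, applying well-linkedness of $W$ to a partition of $W$ separating $W\cap C^\star$ from the remaining terminals yields $\Omega(\min(a,t-a))$ vertex-disjoint paths between the two sides, each of which must pass through $Z$; together with $|Z|\ge |W\cap Z|$ (which handles the heavy-$Z$ case), a short case analysis gives $|Z|=\Omega(\min(a,t))$, i.e.\ $|Z|=\Omega(\mathrm{sep}_W(Z)/t)$. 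Hence $\Phi=\Omega(1/t)$, and therefore $\varepsilon(H,W)\ge \Phi/O(\log t)=\Omega(1/(t\log t))$.

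I expect the main obstacle to be exactly this last step: carefully verifying that well-linkedness of $W$ forces every vertex cut separating many terminal pairs to be large, including the bookkeeping for terminals lying inside $Z$ and the case distinction on which side of $Z$ carries most terminals — this is where the treewidth hypothesis is genuinely used. Everything else is routine, modulo the cited flow-cut gap theorem, whose $O(\log t)$ (rather than $O(\log n)$) dependence, owing to there being only $|W|=t$ terminals, is precisely what yields the stated bound.
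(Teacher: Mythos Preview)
Your proposal is correct and follows essentially the same approach as the paper's sketch (which in turn follows Marx): extract from the treewidth hypothesis a set $W$ of size $t$ with strong separation properties, then combine the vertex-capacitated flow--cut gap of \cite{DBLP:journals/siamcomp/FeigeHL08,DBLP:journals/jacm/LeightonR99} with an $\Omega(1/t)$ sparsest-cut lower bound derived from those properties. The only cosmetic difference is that the paper phrases the key property as ``no small balanced separator'' while you use ``well-linked''; these are equivalent up to constants and lead to the same sparsest-cut argument, and your anticipated case analysis (including the sub-case where no component of $H-Z$ holds a majority of terminals, which needs a second application of well-linkedness to a regrouped partition) is exactly the bookkeeping Marx carries out.
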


The basic idea to prove the lemma is to consider a set $W \subseteq V(H)$ of size $t$ that does not admit balanced separators; large treewidth guarantees such a set (see \cite[Lemma 3.2]{Marx10}). 
Then, using results from \cite{DBLP:journals/siamcomp/FeigeHL08,DBLP:journals/jacm/LeightonR99}, we obtain a bound on the optimal value of the dual LP, which gives $\varepsilon(H,W) = \Omega(1/(t \log t))$ (see \cite[Proof of Lemma 3.6]{Marx10}).

\begin{corollary}
    \label{cor:linkage-capacity-vs-treewidth}
    Let $H$ be a graph of treewidth $t$.
    Then $\gamma(H) = \Omega(t / \log t)$.
\end{corollary}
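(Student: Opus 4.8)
The plan is to combine \Cref{thm:linkage-capacity-vs-concurrent-flow} with the cited lemma of Marx in the most direct way possible; once both are in hand, the corollary is a one-line computation. The point is that \Cref{thm:linkage-capacity-vs-concurrent-flow} has already done all the combinatorial work of turning a fractional concurrent flow into a matching-linked set in a bounded blowup, so the only remaining task is to certify that graphs of large treewidth admit a large concurrent flow on some witness set.

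First I would invoke the lemma attributed to Marx: since $H$ has treewidth $t$, there is a set $W \subseteq V(H)$ with $|W| = t$ such that $\varepsilon(H,W) = \Omega(1/(t \log t))$. Concretely, one picks $W$ to be a set of $t$ vertices admitting no balanced separator of size $o(t)$ — the existence of such a $W$ is precisely what large treewidth guarantees, via \cite[Lemma 3.2]{Marx10} — and then one argues that the dual LP of the concurrent flow LP has small optimal value, equivalently that $\varepsilon(H,W)$ is large, using the approximate min-cut/max-multicommodity-flow theorem of \cite{DBLP:journals/siamcomp/FeigeHL08,DBLP:journals/jacm/LeightonR99}. I would simply cite this passage as a black box rather than reproving it.

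Second, I would feed this $W$ into \Cref{thm:linkage-capacity-vs-concurrent-flow}, obtaining
\[
\gamma(H) \;\geq\; \frac{\varepsilon(H,W)\cdot |W|^2}{108} \;=\; \Omega\!\left(\frac{1}{t\log t}\right)\cdot \frac{t^2}{108} \;=\; \Omega\!\left(\frac{t}{\log t}\right),
\]
which is the claimed bound.

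The only genuine difficulty is the one buried inside the cited Marx lemma: the step from ``$W$ has no balanced separator'' to ``$\varepsilon(H,W) = \Omega(1/(t\log t))$'' rests on LP duality together with the non-trivial $O(\log t)$-approximate max-flow/min-cut theorem for multicommodity flows, and that logarithmic loss is exactly the source of the slack in the final bound. On our side of the argument there is no obstacle: \Cref{thm:linkage-capacity-vs-concurrent-flow}, built from \Cref{lem:fractional-flow-to-congested-minor-model}, \Cref{lem:route-multigraph-in-complete-graph}, and \Cref{lem:congestion-vs-blowup}, is entirely elementary, so assembling the corollary from these two ingredients is routine.
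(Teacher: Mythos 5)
Your proposal matches the paper's argument exactly: it plugs the Marx lemma (existence of $W$ with $|W|=t$ and $\varepsilon(H,W)=\Omega(1/(t\log t))$) into \Cref{thm:linkage-capacity-vs-concurrent-flow} and reads off $\gamma(H)\geq \varepsilon(H,W)\cdot |W|^2/108 = \Omega(t/\log t)$. Nothing to add.
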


In particular, combining Theorem \ref{thm:linkage-ETH} and Corollary \ref{cor:linkage-capacity-vs-treewidth} allows us to recover the complexity lower bounds on $\colsub{H}$ proved in \cite{Marx10}.

We note without a proof that the bound in \cref{cor:linkage-capacity-vs-treewidth} is asymptotically optimal, since $k$-vertex 3-regular expander graphs have treewidth $\Theta(k)$ and linkage capacity $\Theta(k/\log k)$ (see also \cite{DBLP:journals/siamdm/AlonM11}).
We complement \cref{cor:linkage-capacity-vs-treewidth} with the following upper bound on the linkage capacity.

\begin{lemma}
    \label{lem:linkage-capacity-vs-treewidth-upper-bound}
    Let $H$ be a graph of treewidth $t$.
    Then $\gamma(H) \leq 3(t+1)$.
\end{lemma}

\begin{proof}
    We first observe that $\tw(H \boxtimes K_q) \leq q(t+1) - 1$ for every $q \in \NN$.
    Indeed, given a tree decomposition $(T,\beta)$ of $H$, we can obtain a tree decomposition for $H \boxtimes K_q$ by adding all copies of $v \in V(H)$ to all bags containing $v$.

    Now, suppose towards a contradiction that $\gamma(H) > 3(t+1)$.
    Then there is some $q \in \NN$ such that $H \boxtimes K_q$ contains a matching-linked set $X \subseteq V(H \boxtimes K_q)$ such that $|X| = 3q(t+1) + 3$.
    By \cite[Lemma 7.20]{CyganFKLMPPS15} there is a \emph{balanced separation for $X$}, i.e., there are sets $A,B \subseteq V(H \boxtimes K_q)$ such that
    \begin{enumerate}
        \item $A \cup B = V(H \boxtimes K_q)$,
        \item $|A \cap B| \leq q(t+1)$,
        \item there is no edge between $A \setminus B$ and $B \setminus A$, and
        \item $|X \cap A| \leq \frac{2}{3}|X|$ and $|X \cap B| \leq \frac{2}{3}|X|$.
    \end{enumerate}
    Hence, $|X \setminus A| \geq \frac{1}{3}|X| = q(t+1) + 1$ and $|X \setminus B| \geq \frac{1}{3}|X| = q(t+1) + 1$.
    Observe that the two sets $X \setminus A$ and $X \setminus B$ are disjoint (since $A \cup B = V(H \boxtimes K_q)$).
    So there is a matching $M$ on the vertex set $X$ (but not necessarily in $H \boxtimes K_q$) containing $q(t+1) + 1$ edges $M' \subseteq M$ with one endpoint in $X \setminus A$ and the other in $X \setminus B$.
    Now consider an uncongested $M$-linkage in $H \boxtimes K_q$ (which exists since $X$ is matching-linked).
    Then every edge $e \in M'$ is realized by a path that needs to visit a vertex from $A \cap B$.
    However, this is a contradiction since $|M'| = q(t+1) + 1$ and $|A \cap B| \leq q(t+1)$.
\end{proof}

The upper bound is also asymptotically optimal by \cref{lem:linkage-capacity-grid}.

\section{Implications for Counting Small Induced Subgraphs}
\label{sec:indsub-main}

We conclude with an application of our lower bounds for the complexity of counting induced $k$-vertex subgraphs.
A \emph{$k$-vertex graph invariant} $\Phi$ is an isomorphism-invariant map from $k$-vertex graphs $H$ to the real numbers.
We consider $\Phi$ to be fixed and wish to sum $\Phi(G[X])$ over all $k$-vertex subsets $X$ of an input graph $G$ to count, e.g., the planar or Hamiltonian induced $k$-vertex subgraphs of $G$.
Formally, for a $k$-vertex graph invariant $\Phi$, the problem $\sindsub{\Phi}$ takes as input a graph $G$, and asks to compute
\[\numindsub{\Phi}{G} \coloneqq \sum_{X \subseteq V(G)} \Phi(G[X]).\]
This problem was first studied in its parameterized version (where $k$ is part of the input) by Jerrum and Meeks \cite{JerrumM15,JerrumM15b,JerrumM17} and received significant attention in recent years \cite{CurticapeanDM17,CurticapeanN25,DorflerRSW22,DoringMW24,DoringMW25,FockeR24,RothS20,RothSW24}.

To determine the complexity of $\sindsub{\Phi}$, recent works usually analyze the \emph{alternating enumerator} to build a generic reduction from $\ccolsub{H}$. 
Formally, the \emph{alternating enumerator} of a graph invariant $\Phi$ on a graph $H$ is defined as\footnote{The precise formula is not relevant here, but we still give it for completeness.}
\[\widehat{\Phi}(H) = (-1)^{|E(H)|} \sum_{S \subseteq E(H)} (-1)^{|S|} \,\Phi(H[S]),\]
where $H[S]$ has vertex set $V(H)$ and edge set $S$.
Now, suppose $H$ is a $k$-vertex graph with $\widehat{\Phi}(H) \neq 0$.
Then the problem $\ccolsub{H}$ can be reduced to $\sindsub{\Phi}$ in polynomial time (see, e.g., \cite{CurticapeanN25}).
Hence, we also obtain new lower bounds for $\sindsub{\Phi}$ assuming $\widehat{\Phi}(H) \neq 0$ for suitable graphs $H$.
In particular, we obtain the following result via Theorem~\ref{thm:dense-hard}, which improves over the corresponding lower bound in \cite[Theorem 3.2(a)]{CurticapeanN25} (see also \cite{DorflerRSW22} and \cite[Lemma 2.2]{DoringMW24}).

\begin{theorem}
    \label{thm:indsub-hardness-eth}
    There is a universal constant $\alpha_{\textsc{ind}} > 0$ and an integer $N_0 \geq 1$ such that for all numbers $k,\ell \geq 1$, the following holds:
    If $\Phi$ is a $k$-vertex graph invariant and there exists a graph $H$ with $\widehat{\Phi}(H) \neq 0$ and $E(H) \geq k \cdot \ell \geq N_0$, then $\sindsub{\Phi}$ cannot be solved in time $O(n^{\alpha_{\textsc{ind}} \cdot \ell})$ unless ETH fails.
\end{theorem}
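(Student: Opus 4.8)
The plan is to combine three ingredients: the known reduction from $\ccolsub{H}$ to $\sindsub{\Phi}$, which is available precisely because $\widehat{\Phi}(H) \neq 0$; the conditional lower bound for $\colsub{H}$ from \Cref{thm:linkage-ETH}; and \Cref{lem:linkage-capacity-average-degree}, which shows that the (necessarily dense) pattern $H$ has large linkage capacity. Recall from \cite[Lemmas 3.3 \& A.3]{CurticapeanN24} that $\widehat{\Phi}(H) \neq 0$ yields a polynomial-time Turing reduction from $\ccolsub{H}$ to $\sindsub{\Phi}$; and since $\ccolsub{H}$ is at least as hard as $\colsub{H}$, \Cref{thm:linkage-ETH} also forbids $O(n^{\alpha \cdot \gamma(H)})$-time algorithms for $\ccolsub{H}$ under ETH, where $\alpha, \gamma_0 > 0$ are the constants from \Cref{thm:linkage-ETH}.

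First I would lower-bound the linkage capacity: since $|E(H)| \geq k\ell$ and $|V(H)| = k$, the average degree is $d(H) = 2|E(H)|/k \geq 2\ell$, so \Cref{lem:linkage-capacity-average-degree} gives $\gamma(H) \geq d(H)/48 \geq \ell/24$. Hence, as soon as $\ell \geq 24\gamma_0$, we have $\gamma(H) \geq \gamma_0$, so \Cref{thm:linkage-ETH} rules out, under ETH, any algorithm for $\ccolsub{H}$ running in time $O(n^{\alpha \cdot \gamma(H)})$, and in particular any running in time $O(n^{\alpha \ell / 24})$. On the other side, I would track the cost of the reduction from \cite{CurticapeanN24}: on an $n$-vertex instance of $\ccolsub{H}$ it produces instances of $\sindsub{\Phi}$ of size $O(n^{c_1})$, makes a bounded number of oracle calls, and does $O(n^{c_0})$-time postprocessing, where $c_0, c_1 \geq 1$ are \emph{absolute} constants — all dependence on $k = |V(H)|$ is confined to leading factors. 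Thus a hypothetical $O(m^{\alpha_{\textsc{ind}} \cdot \ell})$-time algorithm for $\sindsub{\Phi}$ would solve $\ccolsub{H}$, hence $\colsub{H}$, in time $O(n^{c_1 \alpha_{\textsc{ind}} \ell + c_0})$.

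It then remains to calibrate the constants. Taking $\alpha_{\textsc{ind}} \leq \alpha/(48 c_1)$ makes the exponent $c_1 \alpha_{\textsc{ind}} \ell + c_0 \leq \alpha\ell/48 + c_0$, which is strictly below $\alpha\ell/24 \leq \alpha \cdot \gamma(H)$ once $\ell > 48 c_0/\alpha$. Writing $L \coloneqq \max\{24\gamma_0,\ \lceil 48 c_0/\alpha\rceil + 1\}$ (an absolute constant), we conclude that for every $\ell \geq L$ the composed algorithm beats \Cref{thm:linkage-ETH}, so ETH fails. The remaining regime $\ell < L$ — where $k\ell \geq N_0$ forces $k$, hence $|E(H)|$, to be large — is handled directly: there the conclusion only asks to rule out time $O(n^{\alpha_{\textsc{ind}}\ell})$ with exponent below the absolute constant $\alpha_{\textsc{ind}} L$, and since $\widehat{\Phi}(H) \neq 0$ for a graph with at least one edge (indeed $|E(H)| \geq N_0 \geq 1$), the quantity $\numindsub{\Phi}{G}$ depends genuinely on the edge set of $G$, so already an elementary input-inspection argument shows $\sindsub{\Phi} \notin O(n^{c})$ for all $c < 2$. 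Hence setting $\alpha_{\textsc{ind}} \coloneqq \min\{\alpha/(48 c_1),\ 2/L\}$, a positive absolute constant, and $N_0$ a sufficiently large absolute integer (ensuring $|E(H)| \geq 1$ and that all ``$n$ large enough'' provisos inside \Cref{thm:linkage-ETH} and the reduction hold) covers both regimes.

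I expect the main obstacle to be the bookkeeping in the second paragraph: one must check that the reduction of \cite{CurticapeanN24} keeps the exponent of $n$ bounded by an absolute constant times $\alpha_{\textsc{ind}}\ell$ plus an absolute additive constant, with all dependence on $|V(H)|$ in the leading factors only — otherwise $\alpha_{\textsc{ind}}$ could not be chosen as a universal constant. Everything else is routine constant-chasing, together with the (minor) observation that a lower bound stated for $\colsub{H}$ transfers to $\ccolsub{H}$ because counting subsumes deciding.
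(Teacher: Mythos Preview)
Your approach is essentially the paper's: reduce $\ccolsub{H}$ to $\sindsub{\Phi}$ via the alternating-enumerator machinery of \cite{CurticapeanN24} (restated in the paper as \cref{lem:phi=sub} and \cref{lem:sub-monotonicity}), then invoke the ETH lower bound for $\colsub{H}$ based on the average-degree bound on linkage capacity. The paper compresses the second step into a single citation of \cref{thm:dense-hard} (which is stated for all $d$) and is correspondingly terse about constants; your explicit treatment of the small-$\ell$ regime via an unconditional input-reading $\Omega(n^2)$ bound is extra and somewhat informal, and can be sidestepped entirely by citing \cref{thm:dense-hard} directly rather than unpacking it into \cref{thm:linkage-ETH} plus \cref{lem:linkage-capacity-average-degree}.
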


The proof of \cref{thm:indsub-hardness-eth} is given in Appendix \ref{sec:app-indsub}.
As pointed out in \Cref{sec:intro-applications}, the weaker version, which only rules out an exponent of $\alpha_{\textsc{ind}} \cdot \ell/\sqrt{\log \ell}$, has been used to derive various lower bounds for specific types of invariants in \cite{CurticapeanN25,DoringMW24,RothSW24}.
All these lower bounds are improved by our new results.
Let us give one concrete example, which improves over \cite[Corollary 5.1]{CurticapeanN25}.
For a $k$-vertex graph invariant $\Phi$, we write $\supp(\Phi)$ for the set of all graphs $H$ with $V(H) = [k]$ and $\Phi(H) \neq 0$.

\begin{corollary}
    \label{cor:indsub-hardness-small}
    For every $0 < \varepsilon < 1$ there are $N_0,\delta > 0$ such that the following holds.
    Let $k \geq N_0$ and let $\Phi$ be a $k$-vertex graph invariant with $1 \leq |\supp(\Phi)| \leq (2 - \varepsilon)^{\binom{k}{2}}$.
    Then no algorithm solves $\sindsub{\Phi}$ in time $O(n^{\delta \cdot k})$ unless ETH fails.
\end{corollary}

We stress that the exponent in the lower bound of \cref{cor:indsub-hardness-small} is asymptotically optimal.

\begin{proof}
    Let $0 < \varepsilon < 1$.
    By \cite[Theorem 5.1]{CurticapeanN25} there is some $\delta' > 0$ such that for every $k \geq 1$ and every $k$-vertex graph invariant $\Phi$ satisfying the condition of the theorem,
    there is a $k$-vertex graph $H$ such that $\widehat{\Phi}(H) \neq 0$ and $|E(H)| \geq \delta' \cdot \binom{k}{2}$.
    Therefore, the theorem follows from \cref{thm:indsub-hardness-eth} by setting $\delta \coloneqq \frac{1}{3} \cdot \delta' \cdot \alpha_{\textsc{ind}}$ and choosing $\ell \coloneqq \frac{1}{3} \cdot \delta' \cdot k$.
\end{proof}

For the other implications of \cref{thm:indsub-hardness-eth}, we refer the reader to \cite{CurticapeanN24} (the latest arXiv version of \cite{CurticapeanN25}).
Indeed, following the first publication of this work, the latest version of \cite{CurticapeanN25} contains updated lower bounds for $\sindsub{\Phi}$ based on \cref{thm:indsub-hardness-eth}.
However, let us stress that these improved lower bounds should (at least in part) be attributed to this work.

\bibliographystyle{plainurl}
\bibliography{refs}

\clearpage

\appendix

\section{Uncolored vs.\ Colored Subgraph Isomorphism}
\label{app:uncolored}

In this section, we discuss the insights necessary to obtain \Cref{cor:uncolored-sparse,cor:uncolored-avg-dense-hard}.
All details can also be found in \cite{Marx10}.

A homomorphism from a graph $H_1$ to a graph $H_2$ is a mapping $\varphi\colon V(H_1) \to V(H_2)$ such that $\varphi(v)\varphi(w) \in E(H_2)$ for every edge $vw \in E(H_1)$.
Recall that a graph $H$ is a \emph{homomorphic core} if every homomorphism from $H$ to itself is injective.
Observe that an injective homomorphism from a finite graph $H$ to itself necessarily is an automorphism of $H$.

Now, let $H$ be a fixed graph with $k$ vertices.
Also, let $G = (V,E,c)$ be a colored graph where $c\colon V(G) \to V(H)$.
We define the graph  $G_{\upharpoonright H} = (V,E_{\upharpoonright H},c)$ where
\[E_{\upharpoonright H} \coloneqq \{vw \in E \mid c(v)c(w) \in E(H)\}.\]
Also, we write $G_{\upharpoonright H}^\circ = (V,E_{\upharpoonright H})$ to denote the uncolored version of $G_{\upharpoonright H}$.

\begin{lemma}
    \label{lem:subgraph-count-core}
    We have $\numsub{\can{H}}{G} = \numsub{\can{H}}{G_{\upharpoonright H}}$.
    Additionally, if $H$ is a homomorphic core, then $\numsub{\can{H}}{G_{\upharpoonright H}} = \numsub{H}{G_{\upharpoonright H}^{\circ}}$.
\end{lemma}

\begin{proof}
    The first part is trivial, since the edges from $E(G) \setminus E(G_{\upharpoonright H})$ cannot be used to form a subgraph isomorphic to $\can{H}$.

    For the second part, let $G' = (V',E')$ be a subgraph of $G_{\upharpoonright H}^{\circ}$ that is isomorphic to $H$.
    Let $\varphi\colon V(H) \to V'$ be an isomorphism from $H$ to $G'$.
    Then $\psi\colon V(H) \to V(H)$ defined via $\psi(v) = c(\varphi(v))$ is a homomorphism from $H$ to itself.
    Since $H$ is a core, it follows that $\psi$ is an automorphism of $H$.
    This implies that the colored version $G' = (V',E',c_{\upharpoonright V'})$ is isomorphic to $\can{H}$.
    In particular, $\numsub{\can{H}}{G_{\upharpoonright H}} = \numsub{H}{G_{\upharpoonright H}^{\circ}}$.
\end{proof}

In particular, if $H$ is a homomorphic core, we obtain a linear-time reduction from $\colsub{H}$ to $\sub{H}$:
On input $G$, we compute $G_{\upharpoonright H}^{\circ}$ in linear time and use it as input for an algorithm for $\sub{H}$.
This provides the missing piece to obtain \Cref{cor:uncolored-avg-dense-hard} from \Cref{thm:avg-dense-hard} and \cite{BonatoP09}.

Next, we turn to the proof of \Cref{cor:uncolored-sparse} which is based on the following lemma.

\begin{lemma}[see {\cite[Lemma 6.5]{Marx10}}]
    \label{lem:core}
    Given a graph $H$ without isolated vertices, there exists a graph $H'$ such that 
    \begin{itemize}
        \item $H'$ is a homomorphic core,
        \item $H$ is a topological minor of $H'$,
        \item $|V(H')| \leq 7(|E(H)| + |V(H)| + 1)$ and $|E(H')| \leq 13(|E(H)| + |V(H)| + 1)$, and
        \item the maximum degree of $H'$ is $\max(5, \Delta +1)$, where $\Delta$ is the maximum degree of $H$.
    \end{itemize}
\end{lemma}

The proof of this lemma is identical to \cite{Marx10}, but we still include it for completeness.

\begin{proof}
    We start by constructing an auxiliary graph $T_n$ with vertices $v_i$, $i \in [n]$, and $x_i,y_i$, $i \in \{2,\dots,n\}$.
    For every $i \in \{2,\dots,n\}$, we add the edges $x_iy_i$, $v_{i-1}x_i$, $v_{i-1}y_i$, $v_ix_i$ and $v_iy_i$.
    We also add the edges $v_1v_n$ and $v_nx_{n-1}$ to $T_n$.
    Observe that $|V(T_n)| \leq 3n$, $|E(T_n)| \leq 5n$ and the maximum degree of $T_n$ is 4.

    \begin{claim}
        \label{claim:tn-core}
        $T_n$ is a homomorphic core.
        Additionally, for every homomorphism $\varphi$ from $T_n$ to $T_n$, it holds that $\varphi(v_i) = v_i$ for all $i \in [n]$.
    \end{claim}
    
    \begin{claimproof}
        First observe that $T_n$ is not 3-colorable.
        Indeed, in a $3$-coloring the vertices $v_{i-1}$ and $v_i$ need to get the same color for every $i \in {2,\dots,n}$, since both vertices form a triangle together with $x_i$ and $y_i$.
        However, this is impossible since $v_1v_n$ is an edge of $T_n$.
        Additionally, it is easy to see that $T_n$ becomes 3-colorable after deleting any vertex.
        This immediately implies that $T_n$ is a homomorphic core since the chromatic number of the image of a homomorphism $\varphi$ is always at least as the chromatic number of the original graph.
        Thus, $\varphi$ has to be injective and therefore an automorphism.
        
        Next, we argue that $\varphi(v_i) = v_i$ for all $i \in [n]$.
        First note that $v_2,\dots,v_n,x_{n-1}$ are the only vertices of degree 4.
        Moreover, $v_1$ is the only vertex of degree 3 that is only part of a single triangle (each $x_i/y_i$ is part of two triangles).
        Thus $\varphi(v_1) = v_1$.
        Lastly, note that $\varphi$ perseveres distances and  each $v_i$, $i \in \{2,\dots,n\}$, is uniquely described (among $v_2,\dots,v_n,x_{n-1}$) by its distance to $v_1$.
        So we get $\varphi(v_i) = v_i$ for all $i \in [n]$. 
    \end{claimproof} 

    Now, we turn to the construction of $H'$.
    We start by subdividing every edge of $H$ to obtain a new graph $B$ with $|V(B)| = |V(H)| + |E(H)|$, $|E(B)| = 2 \cdot |E(H)|$, and maximum degree $\Delta$.
    Observe that $B$ is a bipartite graph.
    Let $n \coloneqq |V(B)|$ and let $w_1, \dots, w_n$ denote the vertices of $B$.
    We define the graph $H'$ by adding a copy of the graph $T_{2n + 1}$ to $B$, and adding an edge between $w_i$ and $v_{2i}$ for every $i \in [n]$.
    Observe that $H$ is a topological minor if $H'$ since $H$ is a topological minor of $B$ and $B$ is a subgraph of $H'$.
    Also, the maximum degree of $H$ is $\max(5, \Delta +1)$ since the maximum degree of $T_{2n+1}$ is $4$ and each vertex of $T_{2n+1}$ is connected to at most one vertex in $B$.
    Next, observe that
    \begin{align*}
        |V(H')| &= |V(B)| + |V(T_{2n+1})| \leq n + 3(2n+1) = 7(n + 1) \\
        |E(H')| &= |E(B)| + |E(T_{2n+1})| + n \leq 2|E(H)| + 5(2n + 1) + n \leq 13(n + 1)
    \end{align*}
    Lastly, we verify that $H'$ is a core.
    Let $\varphi$ be a homomorphism from $H'$ to $H'$.
    First observe that the vertices of $H'$, which appear in a triangle, are exactly the vertices of $T_{2n+1}$ since $B$ is bipartite.
    Since $\varphi$ maps triangles to triangles, we get that $\varphi$ maps vertices of $T_{2n+1}$ to vertices of $T_{2n+1}$.
    It follows that the restriction of $\varphi$ to $T_{n+1}$ is a homomorphism from $T_{2n + 1}$ to $T_{2n+1}$. 
    Hence, $\varphi$ injective on $T_{2n + 1}$ and $\varphi(v_i) = v_i$ for all $i \in [2n+1]$ by Claim \ref{claim:tn-core}.
    
    Now, we argue that $\varphi(w_i) = w_i$ for all $i \in [n]$ which yields that $\varphi$ is injective on the whole graph $H'$. 
    Let $w_j$ be a neighbor of $w_i$ (recall that $H$ has no isolated vertices).
    Now, the path $P = (v_{2i},w_i,w_j,v_{2j})$ has length 3. Since $\varphi$ maps $v_{2i}$ to $v_{2i}$, $v_{2j}$ to $v_{2j}$, and $v_{2i}$ is non-adjacent to $v_{2j}$, we obtain that $\varphi$ maps $P$ to a path $\varphi(P)$ of length $3$ from $v_{2i}$ to $v_{2j}$.
    However, since the distance of $v_{2i}$ and $v_{2j}$ in $T_{2n + 1}$ is greater than $3$, the path $\varphi(P)$ leaves $T_{2n + 1}$.
    Since $w_i$ is the only vertex outside of $T_{2n + 1}$ that is adjacent to $v_{2i}$, we get that $\varphi(w_i) = w_i$. 
\end{proof}

\begin{proof}[Proof of \Cref{cor:uncolored-sparse}]
    For every $k \in \NN$, pick $\ell \in \NN$ maximal such that $|V(\check{B}_\ell)| \leq k$, and set $H_k \coloneqq \check{B}_\ell$.
    Since $|V(\check{B}_\ell)| = 2^{\ell+1}\ell$, we conclude that $2^{\ell+1}\ell \leq k < 2^{\ell+2}(\ell+1)$ which implies that $k/\log_2 k < 2^{\ell+2}$.
    So
    \[\gamma(H_k) = \gamma(\check{B}_\ell) \geq 2^{\ell}/3 > \frac{1}{12} \cdot k/\log_2 k\]
    by \cref{cor:benes-linkage-capacity}.

    Let $H_k'$ be the graph obtained from applying \Cref{lem:core} to $H_k$.
    For every $k \in \NN$ we get that $H_k'$ is a homomorphic core, $V(H_k') \leq 24 k$, and $H_k'$ has maximum degree 5.
    Also, $\gamma(H_k') \leq \gamma(H_k)$ using \Cref{lem:minor-closed} and the fact that $H_k$ is a topological minor of $H'_k$.

    By \Cref{thm:linkage-ETH}, there exists a universal constant $\alpha > 0$ such that for all $k \in \NN$, no algorithm solves $\colsub{H_k'}$ in time $O(n^{\alpha \cdot k / \log k})$, unless ETH fails.
    Since every graph $H_k'$ is a homomorphic core, we conclude that for all $k \in \NN$, no algorithm solves $\sub{H_k'}$ in time $O(n^{\alpha \cdot k / \log k})$, unless ETH fails (see \Cref{lem:subgraph-count-core} and the linear-time reduction discussed below the lemma).
\end{proof} 

\section{Computing Linkages}\label{app:linkage-capacity}

In this section we prove \Cref{theo:find-linked-set}. Towards this, we first consider the following claim:
\begin{claim}\label{claim:algo:linkage}
    There exists an algorithm $\mathbb{A}$ that, given a set $X \subseteq V(H \boxtimes K_t)$ and a matching $M$ on vertex set $X$, checks if there is an uncongested $M$-linkage in $H \boxtimes K_t$ in time $O(t^{f(k)})$.
\end{claim}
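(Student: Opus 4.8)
The plan is to reduce the existence of an uncongested $M$-linkage in $H \ootimes J_t$ to the existence of a suitable ``profile'' with only $O_k(1)$ integer coordinates, each bounded by $|M| \le kt$, and then to decide the latter by brute force over all such profiles. Throughout, fix $V(H) = [k]$ and partition $V(H\ootimes J_t)$ into blocks $V_w = \{w^{(1)},\dots,w^{(t)}\}$; write $b(x) = w$ for $x \in V_w$. For a path $P = (x_0,\dots,x_\ell)$ in $H\ootimes J_t$, its \emph{trace} is the sequence $(b(x_0),\dots,b(x_\ell))$ with consecutive duplicates collapsed; this is always a walk in $H$.

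First I would prove a normalization lemma: \emph{if an uncongested $M$-linkage exists, then one exists in which the trace of every path is a simple path of $H$} (hence of length $< k$ and with at most $k-1$ internal blocks). To see this, pick an uncongested $M$-linkage $Q$ using the fewest vertices in total. If some path $P = (x_0,\dots,x_\ell)$ in $Q$ has $x_i, x_j$ in the same block with $i < j$, then $x_i x_j \in E(H\ootimes J_t)$ (distinct clones of one block are adjacent), so $(x_0,\dots,x_i,x_j,\dots,x_\ell)$ is a path with the same endpoints using strictly fewer vertices, and replacing $P$ by it keeps the linkage uncongested, contradicting minimality --- unless $j = i+1 = \ell$, i.e.\ $P$ is a single clique-edge inside one block, whose collapsed trace is already a single vertex. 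Hence in $Q$ every trace is a simple path of $H$.

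Next, I would encode solutions by profiles. Orient each edge of $M$ arbitrarily, and for $a,b \in [k]$ let $m_{ab}$ be the number of $M$-edges oriented from a $V_a$-vertex to a $V_b$-vertex. Let $\mathcal{D}$ be the set of simple directed paths of $H$; elementary counting gives $|\mathcal{D}| \le k^{k+1}$. Call $n\colon \mathcal{D} \to \{0,\dots,|M|\}$ \emph{feasible} if (i) $\sum_{\pi\colon a \to b} n(\pi) = m_{ab}$ for all $a,b \in [k]$, and (ii) $\sum_{\pi\colon w \text{ internal vertex of } \pi} n(\pi) \le t - |X \cap V_w|$ for all $w \in [k]$. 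The claim then follows from: \emph{an uncongested $M$-linkage exists iff a feasible profile exists.} For the forward direction, take the minimal $Q$ above and set $n(\pi)$ to the number of paths of $Q$ with trace $\pi$; (i) and (ii) hold because traces are simple paths (each contributing one vertex to each of its internal blocks), the paths are vertex-disjoint, and every clone of $X$ is an endpoint of exactly one $M$-edge. For the converse, assign traces to $M$-edges according to a feasible $n$ (possible by (i)) and route each $M$-edge along its assigned trace, picking at every internal block $V_w$ a fresh clone of $V_w \setminus X$; by (ii) there are enough free clones, and simplicity of the traces guarantees that each path's internal blocks avoid its endpoint blocks, so the resulting paths are pairwise vertex-disjoint and form an uncongested $M$-linkage. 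Finally, the algorithm computes $\mathcal{D}$, the numbers $m_{ab}$, and the values $|X \cap V_w|$ in $\poly(k,t)$ time, then tests all at most $(|M|+1)^{|\mathcal{D}|} \le (kt)^{k^{k+1}}$ candidate profiles, each test costing $\poly(k,\log t)$ time; the total is $(kt)^{k^{k+1}} \cdot \poly(k,t) = O(t^{f(k)})$ since $k^{k+1} + O(1) \le 3k^{k+2} = f(k)$ and the factor $k^{k^{k+1}}$ depends only on $k$.

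The main obstacle is the normalization lemma: an arbitrary uncongested linkage in the blowup could \emph{a priori} route demands along long, winding walks that revisit blocks, so the space of path ``shapes'' is infinite; the point is that the vertex-minimal solution must route every demand along a short simple trace in $H$, which makes the profile space finite (of size bounded in $k$ alone) and the brute-force search polynomial in $t$. The realization direction of the equivalence is routine but requires the small bookkeeping that forced endpoint clones and freely chosen internal clones never collide, which again hinges on traces being simple.
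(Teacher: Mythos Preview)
Your approach is correct and is essentially the same as the paper's: both observe that paths in an uncongested linkage may be assumed to project to simple paths of $H$, and then brute-force over all multisets of such simple paths with multiplicities bounded polynomially in $t$. Your reduction to an explicit integer-feasibility condition on a profile $n\colon \mathcal{D}\to\{0,\dots,|M|\}$ is arguably cleaner than the paper's greedy routing subroutine, whose correctness relies implicitly on the symmetry of the blowup; but the underlying idea and the resulting running time are the same.

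One small wording issue that does not affect correctness: in your normalization step, the shortcut $(x_0,\dots,x_i,x_j,\dots,x_\ell)$ is strictly shorter only when $j\ge i+2$, so the stated exception should be ``$j=i+1$'' rather than ``$j=i+1=\ell$''. This is harmless for your purposes, because a repeat in the \emph{collapsed} trace already forces a same-block pair with an intermediate vertex in a different block, hence $j\ge i+2$, and that is the only case you need for the forward direction of the profile equivalence.
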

\begin{claimproof}
    Our algorithm works in two steps. In the first step, we select a collection of paths in $H$. In the second step, we embed the paths into $H \boxtimes K_t$.
    First, for each pair of vertices $u, v \in H$, we write $\mathcal{P}_{u, v}$ for the set of simple paths from $u$ to $v$ in $H$. We write $\mathcal{P}$ for the union of all $\mathcal{P}_{u, v}$.
    Note that $|\mathcal{P}| \leq {k^{2 + k}}$ and that we can enumerate all elements in $\mathcal{P}$ in time $O(k^{3 + k})$. 

    In the next step, we consider all multisets $Q$ over $\mathcal{P}$ with multiplicity at most $t$.\footnote{This means that each path $P$ appears at most $t$ times in $Q$.} We check if $Q$ defines an uncongested $M$-linkage in $H \boxtimes K_t$ using the following subroutine.
    
    Let $V \coloneqq V(H \boxtimes K_t) \setminus V(M)$, where $V(M)$ denotes the endpoints of all edges in $M$, and $N \coloneq M$.
    Also set $Q' \coloneqq \emptyset$.
    For an edge $v^{(i)}u^{(j)} \in N$, check if there is a path $P = (v, a_2, \dots, a_{s-1}, u)$ in $Q$.
    If so, check if there is a vertex $a_i^{(b_i)}$ in $V$ for each $i \in \{2,\dots,s-1\}$.
    This defines a path $P' = (v^{(i)}, a_1^{(b_1)}, \dots, a_{s-1}^{(b_{s-1})}, u^{j})$ in $H \boxtimes K_t$ that goes from $v^{(i)}$ to $u^{(j)}$ since $P$ is a path in $H$.
    We add $P'$ to $Q'$.
    Lastly, remove $P$ from $Q$, $v^{(i)}u^{(j)}$ from $N$, and the vertices of $P'$ from $V$.
    We repeat this process until either we abort, or $N$ is empty.
    If $N$ is empty, then $Q'$ defines a valid uncongested $M$-linkage since for each edge $v^{(i)}u^{(j)} \in M$ there is a path in $Q'$ from $v^{(i)}$ to $u^{(j)}$, and each vertex appears in at most one path. This subroutine runs in time $O(t \cdot k^3)$ since there are at most $t \cdot k$ many edges in $M$, we can find a path $P$ in $Q$ in time $O(\log(k^{2+k}))$ using some basic data structure, and we can embed $P$ in $H \boxtimes K_t$ in time $O(k)$.

    We apply this subroutine to all collections $Q$ of $\mathcal{P}$ with multiplicity at most $t$.  If we find an uncongested $M$-linkage this way, we return that $X$ contains an uncongested $M$-linkage. Otherwise, we return false. Since there are at most $(t+1)^{k^{2 + k}}$ many possible collections, the algorithm runs in $O((t+1)^{k^{2 + k}} \cdot t k^3)$ which is in $O(t^{f(k)})$ for $f(k) = 3k^{2 + k}$.
\end{claimproof}

This allows us to prove the main result of this section.

\begin{proof}[Proof of \Cref{theo:find-linked-set}]
    To find a matching-linked set $X$ of maximum size, we first observe that each vertex set $X \subseteq H \boxtimes K_t$ corresponds to a multiset $X'$ of $H$ where $X'$ contains the vertex $v$ with multiplicity $|\{v^{(i)} \in X \mid i \in [t]\}|$.
    Thus $X'$ has multiplicity at most $t$.

    It is easy to see that if two vertex sets $X$ and $Y$ define the same multiset $X'$ then there is an automorphism in $H \boxtimes K_t$ that maps a vertex $v^{(i)}$ only to a vertex of the form $v^{(j)}$ and that maps $X$ to $Y$. 
    Given a matching $M$ on $X$ and a matching $N$ on $Y$, this automorphism also maps each uncongested $M$-linkage to an uncongested $N$-linkage.
    Thus $X$ is a matching-linked set if and only if $Y$ is a matching-linked set. So, we only have to consider the $(t + 1)^{k}$ many multisets of $H$ with multiplicity at most $t$.

    Further, we apply the same idea to matchings $M$ in $H \boxtimes K_t$. For each edge $v^{(i)}u^{(j)} \in M$, we add $vu$ to a multiset $M'$. Observe that $M'$ has multiplicity at most $t$. It is now easy to see that if two matchings $M$ and $N$ correspond to the same multiset $M'$ then $H \boxtimes K_t$ contains a $M$-linkage if and only if $H \boxtimes K_t$ contains a $N$-linkage. Thus we only have to consider the $(t + 1)^{k^2}$ many multisets $M'$ with multiplicity at most $t$.

    Lastly, to find the maximal linked set we first iterate over all multisets $X'$ of $V(H)$ with multiplicity at most $t$. Each $X'$ defines a vertex set $X = \{v^{(i)} : 1 \leq i \leq d(X'_v)\}$ where $d(X'_v)$ is the multiplicity of the vertex $v$ in $X'$. To check if $X$ is matching-linked, we iterate over all multisets $M'$ with multiplicity at most $t$. Each set $M'$ defines a matching $M$ in the following way. Let $V \coloneqq V(H \boxtimes K_t)$. For each $vu \in M'$, we add $v^{(i)}u^{(j)}$ to $M$ where $i \coloneqq \min(s : v^{(s)} \in V)$ and  $j \coloneqq \min(s : u^{(s)} \in V)$.\footnote{If $v = u$, then we add $v^{(i)} v^{(i+1)}$ to $M$.} If not possible, we just ignore $M'$ and consider the next multiset. Then remove  $v^{(i)}$ and $u^{(j)}$ from $V$ and continue with the next element in the multiset $M'$.

    This way, we obtain a matching $M$ that corresponds to $M'$. We check $M$ is a matching on $X$. If not, we check the next multiset $M'$. Otherwise, we use algorithm $\mathbb{A}$ from \cref{claim:algo:linkage} to check if $X$ is a $M$-linkage. If this is not the case, then $X$ is not matching-linked. After considering all $(t+1)^{k^2}$ many multiset $M'$, we know if $X$ is matching-linked. By applying this procedure to all multisets $X'$, we find the largest matching-linked set $X$ in time 
    \[O( \underbrace{(t+1)^k}_{\text{vertex sets}} \cdot \underbrace{(t + 1)^{k^2}}_{\text{matchings}} \cdot  \underbrace{(t+1)^{k^{2 + k}} \cdot t  k^3}_{\text{Algorithm } \mathbb{A}}),\]
    which is in $O(t^{f(k)})$ for $f(k) = 3k^{2 + k}$.
\end{proof}

\section{Linkages in Bene\v{s} Networks}
\label{app:benes}

We prove \Cref{thm:augmented-benes-linkage}.
First, \Cref{alg:route} constructs linkages between inputs and outputs in plain Bene\v{s} networks, as shown in \Cref{lem:benes-route-correct}.
Such linkages in plain Bene\v{s} networks then readily imply linkages among inputs in augmented Bene\v{s} networks.

\begin{figure}
\begin{minipage}{0.65\linewidth}
\begin{algorithm}[H]
\caption{Routing in plain Bene\v{s} networks}\label{alg:route}
\begin{algorithmic}[1]
\Procedure{BenesLink}{$\ell$, $M$} returns an $M$-linkage in $B_\ell$
    \If{$\ell =1$} \Return $M$ (interpreted as linkage) \EndIf
        \State $s \gets 2^\ell$
        \State Define $\pi\colon[s] \to [s]$ via $\pi(i) = j$ for $v_iw_j \in M$
        \State Define $t(i) \coloneqq
            \begin{cases}
                i,&i\leq s/2\\
                i-s/2,&i>s/2
            \end{cases}$
        \State Let $D = ([s],\{\textcolor{blue}{\{i,i+s/2\}} \mid i \in [s/2]\}$
        \Statex $\quad \cup \quad \{\textcolor{red}{\{\pi^{-1}(j),\pi^{-1}(j+s/2)\}} \mid j \in [s/2]\})$\label{line:conflict}
        \State Compute a proper $2$-coloring $f\colon [s] \to \{\uparrow,\downarrow\}$ of $D$ \label{line:conflict-bipartite}
        \Statex \Comment{the path starting in $v_i$ is routed via $B^\uparrowsm_{\ell-1}$ or $B^\downarrowsm_{\ell-1}$}
        \State $M^{\uparrowsm}\gets\left\{v_{t(i)}^{\uparrowsm}w_{t(j)}^{\uparrowsm} \mid f(i) =~\uparrow, v_iw_j\in M\right\}$
    	\State $\{P_{i} \mid f(i) =~\uparrow\} \gets$ \Call{BenesLink}{$\ell-1$, $M^\uparrowsm$} \label{line:B-up-linkage} 
        \State $M^{\downarrowsm}\gets\left\{v_{t(i)}^{\downarrowsm}w_{t(j)}^{\downarrowsm} \mid f(i) =~\downarrow, v_iw_j\in M\right\}$
        \State $\{P_{i} \mid f(i) =~\downarrow\} \gets$ \Call{BenesLink}{$\ell-1$, $M^\downarrowsm$} \label{line:B-down-linkage} 
        \Statex\Comment{find linkages in $B^\uparrowsm_{\ell-1}$ and $B^\downarrowsm_{\ell-1}$}
        \For{$i \in [s]$}
            \State Add path $(v_i,v_{t(i)}^{f(i)}) \circ P_i \circ (w_{t(j)}^{f(i)},w_j)$ to $Q$
            \Statex where $j \coloneqq \pi(i)$\label{line:build-path}
        \EndFor
        \State \Return $Q$
\EndProcedure
\end{algorithmic}
\end{algorithm}
\end{minipage}
\quad
\begin{minipage}{0.3\linewidth}
\centering

\begin{tikzpicture}[xscale=0.9,yscale=0.9]
\hspace{0.1cm}
\node [shape=circle,fill=black,scale=0.5,fill opacity=0.3] (FF1) at ( 112.5:1.6) {};
\node [shape=circle,fill=black,scale=0.5,fill opacity=0.3] (FF2) at (  22.5:1.6) {};
\node [shape=circle,fill=black,scale=0.5,fill opacity=0.3] (FF3) at ( -67.5:1.6) {};
\node [shape=circle,fill=black,scale=0.5,fill opacity=0.3] (FF4) at (-157.5:1.6) {};
\node [shape=circle,fill=black,scale=0.5,fill opacity=0.3] (FF5) at (  67.5:1.6) {};
\node [shape=circle,fill=black,scale=0.5,fill opacity=0.3] (FF6) at ( -22.5:1.6) {};
\node [shape=circle,fill=black,scale=0.5,fill opacity=0.3] (FF7) at (-112.5:1.6) {};
\node [shape=circle,fill=black,scale=0.5,fill opacity=0.3] (FF8) at ( 157.5:1.6) {};

\node [] () at ( 112.5:2) {1};
\node [] () at (  22.5:2) {2};
\node [] () at ( -67.5:2) {3};
\node [] () at (-157.5:2) {4};
\node [] () at (  67.5:2) {5};
\node [] () at ( -22.5:2) {6};
\node [] () at (-112.5:2) {7};
\node [] () at ( 157.5:2) {8};

\path [very thick, draw=blue] (FF1) to [bend left] (FF5);
\path [very thick, draw=blue] (FF2) to [bend left] (FF6);
\path [very thick, draw=blue] (FF3) to [bend left] (FF7);
\path [very thick, draw=blue] (FF4) to [bend left] (FF8);

\path [very thick, draw=red] (FF2) to [] (FF1);
\path [very thick, draw=red] (FF5) to [] (FF4);
\path [very thick, draw=red] (FF7) to [bend left] (FF3);
\path [very thick, draw=red] (FF8) to [] (FF6);
    
\end{tikzpicture}

\vspace{10pt}

\begin{tikzpicture}[xscale=0.95,yscale=1.05]
\xdef\magicnum{0.75}
\xdef\rectslack{0.18}
\draw[draw opacity=0, fill opacity=0.06, fill=black!70!white] (1-\rectslack,0*\magicnum-2*\rectslack) rectangle (3+\rectslack,3*\magicnum+1.5*\rectslack);
\draw[draw opacity=0, fill opacity=0.06, fill=black!70!white] (1-\rectslack,4*\magicnum-2*\rectslack) rectangle (3+\rectslack,7*\magicnum+1.5*\rectslack);
\foreach \y in {0,...,7} {
    \pgfmathparse{int(\y+1)} \xdef\yone{\pgfmathresult}
    \node [shape=circle,fill=black,scale=0.5,fill opacity=0.3] (V0{\y}) at ({0},{\y*\magicnum}) {};
    \node [shape=circle,fill=black,scale=0.5,fill opacity=0.3] (V1{\y}) at ({1},{\y*\magicnum}) {};
    \node [shape=circle,fill=black,scale=0.5,fill opacity=0.3] (V4{\y}) at ({3},{\y*\magicnum}) {};
    \node [shape=circle,fill=black,scale=0.5,fill opacity=0.3] (V5{\y}) at ({4},{\y*\magicnum}) {};
    \node [scale=0.9] () at ({-0.3},{(7-\y)*\magicnum}) {$v_{\yone}$};
    \node [scale=0.9] () at ({4.3},{(7-\y)*\magicnum}) {$w_{\yone}$};
}
\foreach \y in {0,...,7} {
    \pgfmathparse{int(\y+4*(1-2*isodd(div(\y,4))))} \xdef\laby{\pgfmathresult}
    \path [opacity=0.1] (V0{\y}) edge node [] {} (V1{\laby});
    \path [opacity=0.1] (V0{\y}) edge node [] {} (V1{\y});
    \path [opacity=0.1] (V4{\y}) edge node [] {} (V5{\laby});
    \path [opacity=0.1] (V4{\y}) edge node [] {} (V5{\y});
}

\def\benespath#1#2#3#4#5#6#7{
\draw [#7, very thick] (V0{#1}) edge (V1{#2});
\draw [#7, thick, dashed] (V1{#2}) edge (V4{#5});
\draw [#7, very thick] (V4{#5}) edge (V5{#6});
\node [shape=circle,fill=#7,scale=0.5] () at (V0{#1}) {};
\node [shape=circle,fill=#7,scale=0.5] () at (V1{#2}) {};
\node [shape=circle,fill=#7,scale=0.5] () at (V4{#5}) {};
\node [shape=circle,fill=#7,scale=0.5] () at (V5{#6}) {};
}
\benespath{3}{3}{3}{2}{0}{0}{black}
\benespath{0}{0}{2}{3}{1}{1}{black}
\benespath{6}{2}{0}{0}{2}{2}{black}
\benespath{5}{5}{5}{5}{7}{3}{black}
\benespath{4}{4}{4}{4}{4}{4}{black}
\benespath{2}{6}{6}{7}{5}{5}{black}
\benespath{7}{7}{7}{6}{6}{6}{black}
\benespath{1}{1}{1}{1}{3}{7}{black}

\foreach \y in {1,...,4} {
    \node [scale=0.9] () at (1.3,{5.5*\magicnum+\magicnum*(2.5-\y)*1}) {$v^\uparrow_{\y}$};
    \node [scale=0.9] () at (2.7,{5.5*\magicnum+\magicnum*(2.5-\y)*1}) {$w^\uparrow_{\y}$};
    \node [scale=0.9] () at (1.3,{1.5*\magicnum+\magicnum*(2.5-\y)*1}) {$v^\downarrow_{\y}$};
    \node [scale=0.9] () at (2.7,{1.5*\magicnum+\magicnum*(2.5-\y)*1}) {$w^\downarrow_{\y}$};
}

\end{tikzpicture}

\vspace{-10pt}
\end{minipage}
\vspace{0.2cm}

\caption{
The algorithm for routing in the Bene\v{s} network (left), the conflict graph $D$ (top right), and the edges chosen on the top level of recursion (bottom right) based on the example in \Cref{fig:benes-network}. 
Here, $s=2^3=8$, and $\pi=(\pi(1),\dots,\pi(8))=(2,6,5,4,8,3,1,7)$. 
Because $v_1$ and $v_5$ cannot both occupy $v_1^{\uparrowsm}$ (or $v_1^{\downarrowsm}$) at the same time, there is an edge $\textcolor{blue}{\{1,5\}}$ in $D$. 
Because $w_4$ and $w_8$ cannot both occupy $w_4^{\uparrowsm}$ (or $w_4^{\downarrowsm}$) at the same time, there is an edge between the preimages of $w_4$ and $w_8$ under $\pi$, which is $\{\pi^{-1}(4),\pi^{-1}(8)\}=\textcolor{red}{\{4,5\}}$. 
The resulting 2-coloring of $D$ is $f=(f(1),\dots,f(8))=(\uparrow,\downarrow,\uparrow,\uparrow,\downarrow,\uparrow,\downarrow,\downarrow)$, giving the routing at the current recursive level as shown on the bottom right.
Dashes edges correspond to the matchings $M^{\uparrowsm}$ and $M^{\downarrowsm}$ for the next level of recursive calls.
}
\label{fig:benes-routing}
\end{figure}

\begin{lemma}[\cite{Benes64a}]
    \label{lem:benes-route-correct}
    Given as input $\ell \in \NN$ and a perfect matching $M$ between the $s = 2^\ell$ inputs and outputs of $B_\ell$,
    the procedure \textnormal{\textsc{BenesLink}}$(\ell,M)$ in \Cref{alg:route} computes an $M$-linkage in $B_\ell$ in time $O(s \log s)$.
\end{lemma}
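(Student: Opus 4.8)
The plan is to prove \Cref{lem:benes-route-correct} by induction on $\ell$, following the recursive structure of $B_\ell$ from \Cref{alg:construct} and of \textsc{BenesLink} from \Cref{alg:route}; the claim to establish is that the returned paths are pairwise vertex-disjoint, i.e., form an uncongested $M$-linkage, which in particular is an $M$-linkage. For the base case $\ell = 1$ we have $B_1 = K_{2,2}$ with inputs $v_1,v_2$ and outputs $w_1,w_2$, and every edge joins some $v_i$ to some $w_j$; a perfect matching $M$ between inputs and outputs is realised by taking, for each $v_iw_j \in M$, the single edge $v_iw_j$. These two edges are vertex-disjoint, giving an uncongested $M$-linkage in $O(1)$ time.

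For the inductive step, assume $\ell \geq 2$ and recall that $B_\ell$ consists of two vertex-disjoint copies $B^\uparrow,B^\downarrow$ of $B_{\ell-1}$ together with, for each $i \in [s/2]$, an ``input switch'' $K_{2,2}$ on $\{v_i,v_{i+s/2}\} \cup \{v_i^\uparrow,v_i^\downarrow\}$ and an ``output switch'' $K_{2,2}$ on $\{w_i,w_{i+s/2}\} \cup \{w_i^\uparrow,w_i^\downarrow\}$. Routing $M$ amounts to deciding, for each $vw \in M$, whether its path passes through $B^\uparrow$ or through $B^\downarrow$, subject to: (i) for every input switch $i$, the two inputs $v_i,v_{i+s/2}$ get different choices (so one uses the edge to $v_i^\uparrow$ and the other to $v_i^\downarrow$); and (ii) the analogous condition for every output switch. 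I would obtain such a choice from the following auxiliary multigraph $\Gamma$: put a vertex $a_i$ for each input switch and $b_j$ for each output switch, and for every $vw \in M$ with $v \in \{v_i,v_{i+s/2}\}$ and $w \in \{w_j,w_{j+s/2}\}$ add an edge $a_ib_j$ labelled by $vw$. Each $a_i$ and each $b_j$ has degree exactly $2$, so $\Gamma$ is a disjoint union of cycles (in the multigraph sense, including $2$-cycles), and since $\Gamma$ is bipartite every such cycle has even length, so $\Gamma$ admits a proper $2$-edge-colouring, computable in $O(s)$ time by walking along the cycles. Transporting this colouring to the labels splits $M$ into submatchings $M^\uparrow,M^\downarrow$: conditions (i) and (ii) hold since at $a_i$ (resp. $b_j$) the two incident edges receive distinct colours, and each switch therefore contributes exactly one input and one output to each copy, so $M^\uparrow$ and $M^\downarrow$ are perfect matchings between the $s/2$ inputs and outputs of $B^\uparrow$ and $B^\downarrow$ respectively. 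Recursing, \textsc{BenesLink}$(\ell-1,M^\uparrow)$ and \textsc{BenesLink}$(\ell-1,M^\downarrow)$ return uncongested linkages inside the two subnetworks.

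To assemble the linkage, for each $vw \in M$ prepend the switch edge from $v$ to the chosen input port of its subnetwork, follow the path supplied by the recursive call, and append the switch edge from the chosen output port to $w$. Disjointness holds because differently coloured paths lie in the vertex-disjoint subnetworks $B^\uparrow,B^\downarrow$; equally coloured paths are disjoint inside their subnetwork by induction; and within any single switch the at most two paths through it use different ports, hence different edges and vertices, by (i)/(ii). Thus the union is an uncongested $M$-linkage in $B_\ell$. For the running time, each recursion level performs $O(s)$ work (building $\Gamma$, $2$-edge-colouring it by cycle traversal, forming $M^\uparrow,M^\downarrow$, and splicing in the switch edges), so $T(s) = 2T(s/2) + O(s) = O(s\log s)$ with $T(2) = O(1)$.

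The main obstacle is the combinatorial core of the inductive step: producing a single $\{\uparrow,\downarrow\}$-assignment of the pairs in $M$ that simultaneously respects the \emph{input} and \emph{output} switch constraints. This is exactly the classical observation that the conflict structure $\Gamma$ decomposes into even cycles and hence $2$-edge-colours; once it is in place, the disjointness bookkeeping across switches and the time recurrence are routine. The one point requiring care is to verify that \textsc{BenesLink} as written in \Cref{alg:route} performs precisely this construction, in particular that the two recursive sub-instances it generates are genuine perfect matchings on the inputs and outputs of the respective copies of $B_{\ell-1}$.
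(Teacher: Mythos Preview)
Your proof is correct and follows the same inductive skeleton as the paper's: resolve input/output conflicts via a $2$-colouring, recurse into $B^\uparrow$ and $B^\downarrow$, splice in the switch edges, and solve the recurrence $T(s)=2T(s/2)+O(s)$. The one genuine difference is how the $\{\uparrow,\downarrow\}$-assignment is produced. The paper's \textsc{ResolveConflict} builds a conflict graph $D$ on the \emph{inputs} $[s]$ and computes a proper $2$-\emph{vertex}-colouring, arguing that $D$ is bipartite because it is the union of two perfect matchings (one from input conflicts, one from output conflicts via the involution $\pi^{-1}r\pi$). You instead build a bipartite multigraph $\Gamma$ on the \emph{switches} and compute a proper $2$-\emph{edge}-colouring, with bipartiteness (hence even cycles) immediate from the input/output bipartition. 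These are dual: $D$ is precisely the line graph of $\Gamma$, so an edge-colouring of $\Gamma$ is a vertex-colouring of $D$ and vice versa. Your formulation is arguably cleaner since no separate bipartiteness argument is needed. Your closing caveat is well placed: since the lemma asserts correctness of \textsc{BenesLink} \emph{as written}, a fully rigorous proof should note this line-graph correspondence, so that the $2$-edge-colouring of $\Gamma$ is seen to coincide with the colouring $F$ of $D$ that the algorithm actually computes.
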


\begin{proof}
    Consider the \emph{conflict graph} $D$ defined in Line~\ref{line:conflict}.
    Clearly, the edge set of $D$ is a union of two matchings, i.e., $D$ is $2$-edge-colorable (see Figure \ref{fig:benes-routing}).
    Hence, $D$ is bipartite and we can indeed compute a proper $2$-(vertex)-coloring $f\colon [s] \to \{\uparrow,\downarrow\}$ in Line~\ref{line:conflict-bipartite}.

    Since $f$ is a $2$-coloring of $D$, the pairs $\{\{v_i,v_{t(i)}^{f(i)}\} \mid i \in [s]\}$ form a matching.
    Indeed, for $i \in [s/2]$, the vertex $v_i^{\uparrowsm}$ (and $v_i^{\downarrowsm}$, respectively) can only be paired with $v_i$ and $v_{i+s/2}$.
    Because $\{i,i+s/2\}$ is an edge of $D$, we conclude that $f(i) \neq f(i + s/2)$ and hence, $v_i^{\uparrowsm}$ (and $v_i^{\downarrowsm}$, respectively) is only paired with one of the vertices $v_i$ and $v_{i+s/2}$.
    Similarly, the pairs $\{\{w_{t(j)}^{f(i)},w_j\} \mid i \in [s], j = \pi(i)\}$ form a matching.

    It follows that the sets $M^{\uparrowsm}$ and $M^{\downarrowsm}$ form matchings between the inputs and outputs of $B^\uparrowsm_{\ell-1}$ and $B^\downarrowsm_{\ell-1}$, respectively, and the paths computed in Line~\ref{line:build-path} form an uncongested $M$-linkage.
    
    It remains to analyze the running time $T(s)$ of \textsc{BenesLink}$(\ell,M)$ for $s=2^\ell$ in the word RAM model. 
    Clearly, all operations except for the recursive calls in Lines~\ref{line:B-up-linkage} and~\ref{line:B-down-linkage} cost $\Theta(s)$ time in total. 
    We obtain the recurrence $T(s) = 2T(s/2)+\Theta(s) = \Theta(s\log s)$.
\end{proof}

\begin{proof}[Proof of \Cref{thm:augmented-benes-linkage}]
    Let $\ell \in \NN$ and $s = 2^\ell$. Denote the inputs and outputs of $B_\ell$ and $\check{B}_\ell$ by $V$ and $W$.
    Given a perfect matching $\check{M} = \{e_1, \ldots, e_{s/2}\}$ on $V$, we construct an $\check{M}$-linkage in $\check{B}_\ell$.

    First, define from $\check{M}$ a perfect matching $M$ between $V$ and $W$ to be used in $B_\ell$:
    Let $W = \{w_1 ,\ldots, w_s\}$. For $i \in [s/2]$, write $e_i = ab$ and include edges $aw_{2i-1}$ and $bw_{2i}$ into $M$.
    \Cref{lem:benes-route-correct} finds an $M$-linkage $Q$ in the plain network $B_\ell$ in time $O(s \log s)$.
    We patch $Q$ to an $\check{M}$-linkage $\check{Q}$ in $\check{B}_\ell$: For $ab \in \check{M}$, writing $w$ and $w'$ for the unique partners of $a$ and $b$ in $M$,
    we add the concatenated path $\check{P}_{ab} = P_{aw} \circ (w,w') \circ P_{w'b}$.
    Because all paths $\check{P}_{ab}$ exist in $\check{B}_\ell$ and are vertex-disjoint, it follows that $\check{Q} = \{\check{P}_{ab} \mid ab \in \check{M} \}$ is an $\check{M}$-linkage in $\check{B}_\ell$. 
\end{proof}

\section{Linkages in Random Graphs} \label{app:random-graph}

We use $\mathcal{G}(k,p)$ for the Erd\H{o}s-R\'{e}nyi random graph model with edge probability $p$ on $k$ vertices, and $G(k,m)$ for the uniform distribution over all graphs with $k$ vertices and $m$ edges. 

An \emph{equipartition} of a set $M$ into $r$ parts is a partition $M_1,\dots,M_r$ such that $|M_i-M_j|\leq 1$ for all $i,j\in[r]$. 

\begin{theorem}[{\cite[Corollary 1.1]{BFSU96}}]
    \label{theo:random:pair:linked}
    There exists an absolute constant $\beta>0$ that the following holds. 
    Let $\varepsilon > 0$ be a constant and $d(k) \geq (1 + \varepsilon)\log(k)$. 
    With high probability, for random $H \sim G (k, m)$ with even $k$ and $m = k\cdot d(k)/2$ and any perfect matching $M$ on vertices $[k]$, with high probability a random equipartition of $M$ into $r = \lceil \beta \log(k) / \log(d) \rceil$ matchings $M_1, \dots, M_r$ satisfies that $H$ contains an $M_i$-linkage for all $i\in [r]$.
\end{theorem}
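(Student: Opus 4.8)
This is precisely \cite[Corollary~1.1]{BFSU96}, which we invoke as a black box; what follows is a sketch of the strategy behind it, recorded for orientation. Write $d = d(k)$ and $\rho = \log k/\log d$ for the typical inter-vertex distance. Since $d \geq (1+\varepsilon)\log k$, a graph $H \sim G(k,m)$ with $m = k\cdot d(k)/2$ is connected with high probability and has diameter $(1+o(1))\rho$. A random equipartition of the $k/2$ edges of $M$ into $r = \lceil\beta\log k/\log d\rceil$ parts produces parts $M_i$ with $|M_i| = (1+o(1))\cdot k/(2\beta\rho)$ edges each, hence at most $k/(\beta\rho)$ endpoints per part.

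The first ingredient is the standard robust expansion of sparse random graphs: with high probability over $H$, for every vertex $v$ and every set $U \subseteq V(H)\setminus\{v\}$ with $|U| \le 2k/3$, the breadth-first ball of $v$ inside $H - U$ grows by a factor $\Omega(d)$ per step until it contains more than half of the vertices of $H - U$; in particular it reaches that size within $\rho + O(1)$ steps. This is proved by a union bound over pairs $(C,U)$ together with a Chernoff estimate for the number of edges from $C$ into $V(H)\setminus(C\cup U)$.

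The core is then a greedy routing algorithm, run independently for each part $M_i$: process the pairs $(a,b)\in M_i$ one at a time, maintaining the set $U$ of vertices already used by previously routed paths of this part, together with the still-unrouted endpoints of $M_i$. For the current pair, grow BFS trees from $a$ and from $b$ inside $H - (U\setminus\{a,b\})$ until each covers more than half of the remaining vertices; by the expansion lemma this needs depth $\le \rho + O(1)$, the two trees must share a vertex $w$, and concatenating the corresponding tree-paths yields an $a$-$b$ path of length $\le 2\rho + O(1)$ avoiding $U$. Each of the $\le k/(2\beta\rho)$ routed paths uses at most $2\rho + O(1)$ vertices, so $|U| \le k/\beta + o(k) < 2k/3$ throughout once $\beta$ is a large enough constant; thus the hypothesis of the expansion lemma survives every step, the linkage for $M_i$ is completed, and a union bound over $i \in [r]$ finishes the argument.

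The main obstacle — and the reason for both ``with high probability'' quantifiers and for randomizing the partition — is the circular dependence between the graph $H$ and the adversarial matching $M$: the expansion lemma holds for a \emph{fixed} $H$, but which endpoints land in a given $M_i$ may depend on $H$ in a worst-case way, which could sabotage the greedy routing. BFSU resolve this by a deferred-decisions / edge-exposure scheme: connectivity and coarse expansion are exposed and frozen first, and the randomness of the equipartition then re-randomizes the pairs of each $M_i$ relative to the still-unexposed edges of $H$, restoring enough independence that the routing step for the $j$-th pair succeeds, conditioned on the history, with probability $1 - o(1/k)$. Carrying out this exposure bookkeeping rigorously is the technical heart of their proof, and is exactly why we cite it rather than reprove it.
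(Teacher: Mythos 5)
Your treatment matches the paper's exactly: this statement is invoked as a black box from \cite[Corollary~1.1]{BFSU96}, and the paper likewise gives no proof of it, only adapting it afterwards (transfer to $\mathcal{G}(k,p)$ in \cref{cor:BFSU_gnp} and to non-perfect matchings and odd $k$ in \cref{thm:BFSU-gnp}). The expansion-and-greedy-routing sketch you include is purely for orientation and carries no proof obligation here, so citing the result suffices.
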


We first transfer the above result from the $G(k,m)$ model to the $\mathcal{G}(k,p)$ model. 

\begin{corollary} \label{cor:BFSU_gnp}
There exists an absolute constant $\beta>0$ that the following holds. 
Let $\varepsilon' > 0$ be a constant and $p \geq (1+\varepsilon')\log(k)/k$. 
With high probability, for random $H \sim \mathcal G (k, p)$ with even $k$ and any perfect matching $M$ on vertices $[k]$, with high probability a random equipartition of $M$ into $r = \lceil \beta \log(k) / \log(kp) \rceil$ matchings $M_1, \dots, M_r$ satisfies that $H$ contains an $M_i$-linkage for all $i\in [r]$.
\end{corollary}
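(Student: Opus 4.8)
The plan is to deduce \Cref{cor:BFSU_gnp} from \Cref{theo:random:pair:linked} via the standard coupling between the $\mathcal G(k,p)$ and $G(k,m)$ models, together with a careful accounting of how many parts the equipartition has. Given $H\sim\mathcal G(k,p)$, I would condition on its edge count $m:=|E(H)|$; conditionally, $H$ is distributed as $G(k,m)$. Since $\mathbb E[m]=\binom k2 p\geq\binom k2(1+\varepsilon')\log(k)/k\to\infty$, a Chernoff bound shows that with high probability $m=(1\pm o(1))\binom k2 p$; call this event $\mathcal E$ and restrict to it. On $\mathcal E$ the average degree $d:=2m/k$ satisfies $d=(1\pm o(1))(k-1)p=\Theta(kp)$, so $d\geq(1+\varepsilon'')\log k$ for a constant $\varepsilon''>0$ and all large $k$; moreover, since $kp\to\infty$, we have $\log d=\log(kp)+O(1)=\Theta(\log(kp))$. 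Thus $G(k,m)$ meets the hypotheses of \Cref{theo:random:pair:linked} with the constant $\varepsilon''$, and the two logarithmic bases $d$ and $kp$ agree up to a constant factor.

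Next I would extract from the proof of \Cref{theo:random:pair:linked}, i.e.\ from \cite[Corollary 1.1]{BFSU96}, the per-part failure bound underlying its ``with high probability a random equipartition works'' conclusion: that argument provides a constant $c_1>0$ and a function $q(k)$ that is polynomially small in $k$ such that, with high probability over $H\sim G(k,m)$ (for $m=kd/2$ with $d\geq(1+\varepsilon'')\log k$), the following holds: for every perfect matching $M$ on $[k]$ and every $\ell\leq c_1\,k\log(d)/\log(k)$, a uniformly random sub-matching $N\subseteq M$ of size $\ell$ is non-linkable in $H$ with probability at most $q(k)$. This is precisely the estimate that is union-bounded over the $\Theta(\log k/\log d)$ parts in that proof. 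The point is that in a uniformly random equipartition of $M$ into $r$ parts, each part is, marginally, a uniformly random sub-matching of $M$ of size $\lfloor k/(2r)\rfloor$ or $\lceil k/(2r)\rceil$, so the displayed estimate bounds the failure probability of any single part as soon as $\lceil k/(2r)\rceil\leq c_1 k\log(d)/\log(k)$.

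Finally I would fix the constant $\beta$ in \Cref{cor:BFSU_gnp}. On $\mathcal E$ we have $\log(kp)=\Theta(\log d)$, hence $k/(2r)=\Theta(k\log(kp)/\log k)=\Theta(k\log(d)/\log k)$ for $r=\lceil\beta\log(k)/\log(kp)\rceil$, and a sufficiently large choice of $\beta$ (depending only on $c_1$, hence only on $\varepsilon'$) guarantees $\lceil k/(2r)\rceil\leq c_1 k\log(d)/\log(k)$ for all large $k$. With this $\beta$, every part of a random equipartition into $r$ parts is non-linkable with probability at most $q(k)$, and a union bound over the $r\leq\beta\log k+1$ parts yields that with probability at least $1-r\,q(k)=1-o(1)$ over the random equipartition, every part $M_i$ admits an $M_i$-linkage in $H$. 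Combining this with the high-probability event over $H\sim G(k,m)$ from \Cref{theo:random:pair:linked}, with the event $\mathcal E$, and with the identity that $\mathcal G(k,p)$ conditioned on $\{|E(H)|=m\}$ equals $G(k,m)$, gives exactly the statement of \Cref{cor:BFSU_gnp}.

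The coupling itself is routine; the delicate point is the mismatch between the number of parts produced by \Cref{theo:random:pair:linked}, namely $\Theta(\log k/\log d)$ with $d$ the realized average degree, and the number $\Theta(\log k/\log(kp))$ demanded by \Cref{cor:BFSU_gnp}. Although $d=\Theta(kp)$, these two quantities need not agree up to lower-order terms when $p$ is near the connectivity threshold (there $d$ and $kp$ are only within a constant factor of each other), so one cannot invoke the ``random equipartition works'' statement verbatim. Instead one has to pass through the per-sub-matching failure probability, which is polynomially small in $k$ rather than merely $o(1)$, and redo the union bound over the slightly different number of parts. This is the step I expect to require the most care in a full write-up.
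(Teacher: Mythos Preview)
Your approach is correct but takes a genuinely different route from the paper. The paper does not condition on the realized edge count and does not open up \cite{BFSU96}; instead it exploits that the property $\mathcal P$ in question (``for every perfect matching $M$, a random equipartition into a fixed number of parts has each part linkable with high probability'') is monotone increasing in the edge set, so that $\Pr_{G(k,m_1)}[\overline{\mathcal P}]\leq\Pr_{G(k,m_2)}[\overline{\mathcal P}]$ whenever $m_1\geq m_2$. It fixes a single threshold $m^*=\frac{1+\varepsilon}{1+2\varepsilon}\,p\binom{k}{2}$ with $\varepsilon=\varepsilon'/3$, bounds
\[
\Pr_{\mathcal G(k,p)}[\overline{\mathcal P}]\;\leq\;\Pr[|E(H)|<m^*]\;+\;\Pr_{G(k,m^*)}[\overline{\mathcal P}]
\]
(the first term by Chernoff, the second by conditioning on $|E(H)|=t\geq m^*$ and then using monotonicity to replace $G(k,t)$ by $G(k,m^*)$), and finally invokes \Cref{theo:random:pair:linked} \emph{once} at $m^*$ as a black box, using that $d^*=2m^*/k\geq(1+\varepsilon)\log k$ for large $k$.

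The mismatch in the number of parts that you flag is real, and your remedy---extracting the polynomially small per-sub-matching failure bound from inside the proof in \cite{BFSU96} and redoing the union bound over the corollary's $r$---is a valid fix. The paper, by contrast, simply works at the deterministic $m^*$ where $d^*=\Theta(kp)$, so $\log d^*=\log(kp)+O(1)$, and absorbs the resulting constant-factor discrepancy into the constant $\beta$; it treats this point loosely rather than reaching inside the cited result. The advantage of the monotonicity route is that it uses \Cref{theo:random:pair:linked} purely as a black box and never needs to know that the per-part failure rate is polynomially small. Your route is more explicit about the constant bookkeeping but pays for it by depending on more than the stated theorem.
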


\begin{proof}
Let $\mathcal{P}$ be the graph property specified in \Cref{theo:random:pair:linked} that $G(k,m)$ satisfies with high probability, and $\overline{\mathcal{P}}$ be the negation of $\mathcal{P}$.  
We show that, in the setting of this corollary, any graph drawn from the $\mathcal G(k,p)$ model satisfies $\mathcal{P}$ with high probability. 

Note that $\mathcal{P}$ is a monotone increasing property, meaning if $H$ satisfies $\mathcal{P}$ then $H+e$ satisfies $\mathcal{P}$ too. 
Therefore, $\overline{\mathcal{P}}$ is monotone decreasing. 
By coupling, it holds that 
\begin{equation}\label{equ:monotone-property}
\Pr_{H\sim G(k,m_1)}[H\in\overline{\mathcal{P}}]\leq\Pr_{H\sim G(k,m_2)}[H\in\overline{\mathcal{P}}]
\qquad\text{for $m_1\geq m_2$.}
\end{equation}

Take $\varepsilon=\varepsilon'/3$, and set $m^*:=\frac{1+\varepsilon}{1+2\varepsilon} p\cdot\binom{k}{2}$. 
Draw a graph $H\sim\mathcal G(k,p)$. By the law of total probability, we have
\begin{equation} 
\Pr_{H\sim\mathcal G(k,p)}[H\in\overline{\mathcal{P}}]\leq \underbrace{\Pr_{H\sim\mathcal G(k,p)}[|E(H)|<m^*]}_{\text{\circlednum{1}}}+\underbrace{\Pr_{H\sim\mathcal G(k,p)}[H\in\overline{\mathcal{P}}\land |E(H)|\geq m^*]}_{\text{\circlednum{2}}}. 
\end{equation}
We bound term \circlednum{1} by a standard Chernoff bound
\[
\text{\circlednum{1}}=\Pr\left[\mathrm{Bin}\left(\binom{k}{2},p\right)<\left(1-\frac{\varepsilon}{1+2\varepsilon}\right)\binom{k}{2}p\right]\leq\exp\left\{-\binom{k}{2}p\cdot\frac{\varepsilon^2}{2(1+2\varepsilon)^2}\right\}=O(k^{-k}). 
\]
We bound term \circlednum{2} by the following. 
\begin{align*}
\text{\circlednum{2}}&=\sum_{t=m^*}^{\binom{k}{2}}\Pr_{H\sim\mathcal G(k,p)}\left[H\in\overline{\mathcal{P}} \;\big|\; |E(H)|=t\right]\cdot\Pr_{H\sim\mathcal G(k,p)}[|E(H)|=t]\\
&=\sum_{t=m^*}^{\binom{k}{2}}\Pr_{H'\sim G(k,t)}[H'\in\overline{\mathcal{P}}]\cdot\Pr_{H\sim\mathcal G(k,p)}[|E(H)|=t]\\
&\leq\sum_{t=m^*}^{\binom{k}{2}}\Pr_{H'\sim G(k,m^*)}[H'\in\overline{\mathcal{P}}]\cdot\Pr_{H\sim\mathcal G(k,p)}[|E(H)|=t] \tag*{(by \eqref{equ:monotone-property})}\\
&=\Pr_{H'\sim G(k,m^*)}[H'\in\overline{\mathcal{P}}]\cdot\sum_{t=m^*}^{\binom{k}{2}}\Pr_{H\sim\mathcal G(k,p)}[|E(H)|=t]\leq \Pr_{H'\sim G(k,m^*)}[H'\in\overline{\mathcal{P}}]. 
\end{align*}
Because
\[
\frac{2m^*}{k}\geq \left(\frac{1+3\varepsilon}{1+2\varepsilon}\cdot\frac{k-1}{k}\right)\cdot(1+\varepsilon)\log k, 
\]
we invoke \Cref{theo:random:pair:linked} with constant $\varepsilon$ for large enough $k$, to see that \circlednum{2} is also negligible. 
\end{proof}

\begin{proof}[Proof of \Cref{thm:BFSU-gnp}]
Assume $k$ is even. We extend the matching $M$ to a perfect matching $M'$ by pairing the unmatched vertices. 
By \Cref{cor:BFSU_gnp}, a random equipartition of $M'$ into $M'_1,\dots,M'_r$ satisfies the desired property. 
We then drop the edges in $M'\setminus M$ from this partition to obtain $M_1,\dots,M_r$ with the desired property. 

Assume $k$ is odd. Given the matching $M$, we find an arbitrary unmatched vertex $w\in V(H)$. The induced subgraph $H-w$ is subject to the uniform distribution $\mathcal{G}(k-1,p)$. 
In the regime of \Cref{cor:BFSU_gnp}, if $p=p(k)\geq (1+\varepsilon')\log(k)/k$, then $p(k-1)\geq (1+\varepsilon'')\log(k)/k$ for some other constant $\varepsilon''>0$. 
Therefore, we can invoke \Cref{cor:BFSU_gnp} again, and the rest of the argument is the same as the even $k$ case. 
\end{proof}

\section{Counting Small Induced Subgraphs}
\label{sec:app-indsub}

We give a proof of \cref{thm:indsub-hardness-eth}, relying on \cite[Lemma 3.3 \& A.3]{CurticapeanN24}, stated below.

\begin{lemma}[{\cite[Lemma 3.3]{CurticapeanN24}}]
    \label{lem:phi=sub}
    Given a $k$-vertex graph invariant $\Phi$, for $k \geq 1$, we have
    \begin{equation}
        \numindsub{\Phi}{\star} = \sum_H \widehat{\Phi}(H) \cdot \numsub{H}{\star},
    \end{equation}
    where $H$ ranges over all unlabelled $k$-vertex graphs.
\end{lemma}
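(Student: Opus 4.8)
The plan is to prove the identity pointwise in the host graph, by first reducing to indicator invariants and then running an inclusion--exclusion over edge subsets. First I would fix a host graph $G$ and observe that both sides of the claimed identity are linear (over the target ring) in the invariant $\Phi$: the left side because $\numindsub{\Phi}{G} = \sum_{X} \Phi(G[X])$ with $X$ ranging over $k$-subsets of $V(G)$, and the right side because $\widehat{\Phi}(H) = (-1)^{|E(H)|}\sum_{S\subseteq E(H)}(-1)^{|S|}\Phi(H[S])$ is a fixed linear functional of $\Phi$ (every $H[S]$ is a $k$-vertex graph). Writing $\mathbf{1}_F$ for the invariant that is $1$ on graphs isomorphic to a fixed $k$-vertex graph $F$ and $0$ otherwise, and using $\Phi = \sum_F \Phi(F)\,\mathbf{1}_F$ (the sum over isomorphism types of $k$-vertex graphs), it suffices to prove the lemma for $\Phi = \mathbf{1}_F$.

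Next I would make two preliminary observations. Since $H[S]$ has vertex set $V(H)$, we have $H[S]\cong F$ precisely when $(V(H),S)\cong F$, which forces $|S| = |E(F)|$; hence $\widehat{\mathbf{1}_F}(H) = (-1)^{|E(H)|+|E(F)|}\,\numsub{F}{H}$. Second, I would record the elementary ``splitting by induced subgraph'' identity
\[
\numsub{H}{G} \;=\; \sum_{X} \numsub{H}{G[X]},
\]
where $X$ ranges over $k$-subsets of $V(G)$: a subgraph of $G$ isomorphic to the $k$-vertex graph $H$ spans its vertex set $X$, and is then a (necessarily spanning) subgraph of $G[X]$ isomorphic to $H$, and conversely every such spanning subgraph of $G[X]$ is a subgraph of $G$.

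The core of the argument is the purely combinatorial identity
\[
\sum_{H} (-1)^{|E(H)|+|E(F)|}\,\numsub{F}{H}\,\numsub{H}{F'} \;=\; [\,F\cong F'\,]
\]
for all $k$-vertex graphs $F,F'$ --- equivalently, the triangular matrices $\bigl(\numsub{F}{H}\bigr)$ and $\bigl((-1)^{|E(F)|+|E(H)|}\numsub{F}{H}\bigr)$, with rows and columns indexed by $k$-vertex types ordered by number of edges, are mutually inverse. I would prove it by fixing a labelled representative of $F'$ on vertex set $[k]$ and rewriting the left-hand side, after grouping the product of subgraph counts as a count of chains $T\subseteq S\subseteq E(F')$, as $\sum_{T\subseteq E(F')\colon ([k],T)\cong F}\ \sum_{T\subseteq S\subseteq E(F')}(-1)^{|S|}$. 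The inner sum collapses via $\sum_{T\subseteq S\subseteq E(F')}(-1)^{|S|} = (-1)^{|T|}[\,T=E(F')\,]$ (the Möbius function of a Boolean lattice), leaving $(-1)^{|E(F')|}[\,([k],E(F'))\cong F\,]$, and $|E(F)|=|E(F')|$ whenever $F\cong F'$. Getting the translation between labelled and unlabelled counts exactly right here is the step I expect to demand the most care.

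Finally I would assemble the pieces: for the host graph $G$,
\begin{align*}
\sum_{H} \widehat{\mathbf{1}_F}(H)\,\numsub{H}{G}
&= \sum_{X}\sum_{H} (-1)^{|E(H)|+|E(F)|}\,\numsub{F}{H}\,\numsub{H}{G[X]}\\
&= \sum_{X} [\,F\cong G[X]\,] \;=\; \numindsub{\mathbf{1}_F}{G},
\end{align*}
using the splitting identity in the first step and the combinatorial identity with $F'=G[X]$ in the second. Linearity in $\Phi$ then yields the lemma for arbitrary $\Phi$.
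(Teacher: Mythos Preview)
Your proof is correct. The reduction to indicator invariants by linearity, the computation $\widehat{\mathbf{1}_F}(H) = (-1)^{|E(H)|+|E(F)|}\,\numsub{F}{H}$, the splitting of $\numsub{H}{G}$ over $k$-subsets, and the M\"obius collapse of the double sum over $T\subseteq S\subseteq E(F')$ all go through as written. The one place you flag as delicate---the labelled/unlabelled bookkeeping---is fine: when both $H$ and $F'$ have $k$ vertices, $\numsub{H}{F'}$ counts exactly the edge subsets $S\subseteq E(F')$ with $([k],S)\cong H$, so summing over unlabelled $H$ just partitions the $S$'s by isomorphism type, and the chain count is exactly $\sum_{T\subseteq S\subseteq E(F')}$ with the constraint $([k],T)\cong F$.

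There is nothing to compare against here: the paper does not prove this lemma. It is imported verbatim from \cite[Lemma~3.3]{CurticapeanN24} and used as a black box in the proof of \cref{thm:indsub-hardness-eth}. Your argument is the standard M\"obius-inversion proof that the subgraph basis and the induced-subgraph basis of $k$-vertex graph motif parameters are related by the matrix $\bigl((-1)^{|E(F)|+|E(H)|}\numsub{F}{H}\bigr)$, and it would serve perfectly well as a self-contained replacement for the citation.
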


Let $G = (V,E,c)$ be a colored graph where $c\colon V(G) \to C$.
We define $G^{\circ} = (V,E)$ to be the uncolored version of $G$.
For $i\in C$, we write $V_{i}(G)$ for the vertices of color $i$, and for $i,j\in C$, we write $E_{ij}(G)$ for the edges in $G$ with one endpoint of color $i$ and another of color $j$.
For $X \subseteq C$ and $Y \subseteq \binom{C}{2}$, let $G_{\setminus X,Y}$ be the graph obtained from $G$ by deleting all vertices with colors from $X$ and all edges whose endpoints have a color pair from $Y$, i.e,
\[G_{\setminus X,Y}=\left(V\setminus\bigcup_{i\in X}V_{i},\ E\setminus\bigcup_{ij\in Y}E_{ij}\right).\]
The following observation is immediate.

\begin{observation}
    \label{obs:colsub-preprocess}
    For graphs $H$ and $G$ with canonically colored $H$, we have 
    \[\numsub{H}{G}=\numsub{H}{G_{\setminus\emptyset,\overline{E(H)}}}.\]
\end{observation}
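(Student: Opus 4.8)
The plan is to show that $G$ and $G' \coloneqq G_{\setminus\emptyset,\overline{E(H)}}$ admit exactly the same color-preserving copies of the canonically colored $H$; the equality $\numsub{H}{G} = \numsub{H}{G'}$ is then immediate from the definition of $\numsub{\cdot}{\cdot}$ for colored graphs. First I would unfold what a color-preserving copy is: since $H$ carries its canonical coloring, the coloring of $G$ has the form $c\colon V(G)\to V(H)$, i.e.\ $C = V(H)$, and a copy of $H$ in $G$ is the image of an injection $\varphi\colon V(H)\to V(G)$ with $c(\varphi(v)) = v$ for every $v\in V(H)$ and $\varphi(u)\varphi(v)\in E(G)$ for every $uv\in E(H)$.

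The key step is the observation that every edge used by such a copy, namely $\varphi(u)\varphi(v)$ with $uv\in E(H)$, joins a vertex of color $u$ to a vertex of color $v$, and $uv\in E(H)$ means $uv\notin\overline{E(H)}$. Hence none of the edges deleted in forming $G'$ can appear in the image of $\varphi$; since moreover $X = \emptyset$, no vertices are deleted, so $\varphi$ also witnesses a color-preserving copy of $H$ in $G'$. For the converse I would simply note that $G'$ is a subgraph of $G$ on the same vertex set and with the same coloring $c$, so any color-preserving copy of $H$ in $G'$ is trivially one in $G$. Combining the two inclusions shows that the color-preserving $H$-copies in $G$ and in $G'$ coincide, which gives the identity.

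There is essentially no obstacle here; the statement really is immediate. The only points worth a sentence of care are (i) that the canonical coloring of $H$ makes its vertex-coloring the identity map, so "the vertex of color $v$" is meaningful and $C = V(H)$, and (ii) that choosing $X = \emptyset$ leaves the vertex set and the coloring of $G$ unchanged, so $G$ and $G'$ differ only in a set of edges that the argument above shows to be disjoint from every color-preserving copy of $H$.
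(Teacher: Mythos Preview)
Your proof is correct and spells out precisely the reasoning the paper leaves implicit: the paper gives no argument beyond declaring the observation ``immediate.'' Your unpacking of a color-preserving copy via the injection $\varphi$ with $c(\varphi(v))=v$, together with the observation that every edge $\varphi(u)\varphi(v)$ used by such a copy has color pair $uv\in E(H)$ and is therefore not among the deleted edges, is exactly the intended justification.
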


Also, we write $G \cong H$ to denote that two (colored or uncolored) graphs $G,H$ are isomorphic.

\begin{lemma}[{\cite[Lemma A.3]{CurticapeanN24}}]
    \label{lem:sub-monotonicity}
    Let $k\in\mathbb{N}$ and let the following be given:
    \begin{itemize}
        \item Numbers $\alpha_{1},\ldots,\alpha_{s}\in\mathbb{Q}$ and pairwise non-isomorphic uncolored graphs $H_{1},\ldots,H_{s}$ with $|V(H_{i})|=k$ for all $i\in[s]$, which define the graph invariant
        \[
        f(\star) \coloneqq \sum_{i=1}^{s}\alpha_{i} \cdot \numsubstar{H_{i}},
        \]
        \item a canonically colored graph $H$ with $V(H) = [k]$ and $H^{\circ} \cong H_{b}$ for some $b\in[s]$, and 
        \item a colored graph $G$ with coloring $c:V(G)\to[k]$ satisfying $E_{ij}(G)=\emptyset$ for $ij\notin E(H)$. 
    \end{itemize}
    Then we have
    \[
    \alpha_{b}\cdot\numsub{H}{G}=\sum_{\substack{X\subseteq V(H)\\
    Y\subseteq E(H)
    }
    }(-1)^{|X|+|Y|}f(G_{\setminus X,Y}^{\circ}).
    \]
\end{lemma}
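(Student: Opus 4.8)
The plan is to expand the right-hand side by linearity of $f$ and reduce to a per-graph computation. Writing out the definition of $f$, the right-hand side equals $\sum_{i=1}^{s}\alpha_{i}\cdot\Sigma_{i}$, where $\Sigma_{i}:=\sum_{X\subseteq V(H)}\sum_{Y\subseteq E(H)}(-1)^{|X|+|Y|}\numsub{H_{i}}{G^{\circ}_{\setminus X,Y}}$. It then suffices to show that $\Sigma_{b}=\numsub{H}{G}$ (the color-preserving count, which is what $\numsub{H}{G}$ denotes since $H$ is canonically colored) and $\Sigma_{i}=0$ for every $i\neq b$; the lemma follows at once, since $\sum_i\alpha_i\Sigma_i=\alpha_b\numsub{H}{G}$.

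To compute $\Sigma_i$, I would interchange the order of summation, summing first over the subgraphs $(A,B)$ of $G^{\circ}$ — i.e. pairs with $A\subseteq V(G)$ and $B\subseteq E(G[A])$ — that are isomorphic to $H_i$. The elementary input is the survival criterion: such a subgraph occurs in $G^{\circ}_{\setminus X,Y}$ exactly when $c(A)\cap X=\emptyset$ and $\{c(u),c(v)\}\notin Y$ for all $uv\in B$ (once $A$ avoids the deleted colors, no edge of $B$ is removed for incidence reasons, so the vertex and edge conditions become independent). Hence $\Sigma_i=\sum_{(A,B)\cong H_i}\bigl(\sum_{X\subseteq V(H)\setminus c(A)}(-1)^{|X|}\bigr)\bigl(\sum_{Y\subseteq E(H)\setminus C(B)}(-1)^{|Y|}\bigr)$, where $C(B):=\{\{c(u),c(v)\}\mid uv\in B\}$ and the outer sum ranges over subgraphs of $G^{\circ}$ isomorphic to $H_i$. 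This is where the hypothesis $E_{ij}(G)=\emptyset$ for $ij\notin E(H)$ is used: it forces $C(B)\subseteq E(H)$ (and, applied with $i=j$, that every edge of $G$ joins two differently-colored vertices), so the set difference $E(H)\setminus C(B)$ behaves as it should. By the identity $\sum_{S'\subseteq S}(-1)^{|S'|}=0$ unless $S=\emptyset$, each inner sum vanishes unless $c(A)=V(H)$ and $C(B)=E(H)$, in which case both equal $1$.

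The remaining task is to identify which copies $(A,B)\cong H_i$ satisfy $c(A)=[k]$ and $C(B)=E(H)$. Since $|A|=|V(H_i)|=k$, the condition $c(A)=[k]$ means $c|_A\colon A\to[k]$ is a bijection; being a bijection on the $k$ vertices of the copy, it is injective on the edge set $B$, so $|B|=|C(B)|=|E(H)|$ and pushing $B$ forward along $c|_A$ gives $(A,B)\cong(V(H),E(H))=H^{\circ}$. Thus only copies isomorphic to $H^{\circ}\cong H_b$ can survive, and since $H_1,\dots,H_s$ are pairwise non-isomorphic, $\Sigma_i=0$ whenever $i\neq b$. For $i=b$, the surviving copies are precisely the color-preserving copies of the canonically colored $H$ in $G$: the bijection $(c|_A)^{-1}\colon[k]\to A$ sends vertex $j$ to a vertex of color $j$, and — using $C(B)=E(H)$ together with the injectivity of $c|_A$ on $B$ — each edge of $H$ is realized by exactly one edge of $B$, so $(c|_A)^{-1}$ is an isomorphism $H\to(A,B)$. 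Summing the (now trivial) signs yields $\Sigma_b=\numsub{H}{G}$, completing the argument.

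The main obstacle, such as it is, lies in the step showing that $c(A)=[k]$ and $C(B)=E(H)$ force $(A,B)\cong H^{\circ}$: one must be careful that the color map, though only a homomorphism in general, is here a genuine bijection on the vertices of the copy and therefore collapses no two distinct edges, which is exactly what yields $|E(H_i)|=|B|=|C(B)|=|E(H)|$ and pins down the image as $(V(H),E(H))$. Everything else — linearity of $f$, the interchange of the two finite sums, and the alternating-sum identity — is routine bookkeeping.
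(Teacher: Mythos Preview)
Your proof is correct. The paper itself does not prove this lemma; it is quoted verbatim from \cite[Lemma A.3]{CurticapeanN24} and used as a black box, so there is no ``paper's own proof'' to compare against. Your argument---expand $f$ by linearity, swap the sums so that each $H_i$-copy $(A,B)$ in $G^\circ$ is weighted by the product of two alternating sums over $X\subseteq V(H)\setminus c(A)$ and $Y\subseteq E(H)\setminus C(B)$, then observe that only copies with $c(A)=[k]$ and $C(B)=E(H)$ survive and that these are exactly the color-preserving $H$-copies---is the standard inclusion--exclusion derivation of such identities and is what one would expect the cited proof to do as well. The care you take in the final step (that $c|_A$ being a bijection on vertices forces it to be injective on edges, hence $|B|=|C(B)|=|E(H)|$ and $(A,B)\cong H^\circ$) is exactly the point that needs attention, and you handle it cleanly.
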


With these tools at our disposal, we are ready to prove \cref{thm:indsub-hardness-eth}.

\begin{proof}[Proof of \cref{thm:indsub-hardness-eth}]
    Let $\Phi$ be a $k$-vertex graph invariant and suppose $H$ is a graph with $\widehat{\Phi}(H) \neq 0$ and $E(H) \geq k \cdot \ell \geq N_0$.
    Without loss of generality assume $V(H) = [k]$.
    We give an algorithm for $\ccolsub{H}$ that uses an algorithm for $\sindsub{\Phi}$ as a subroutine.
    
    Let $G$ be the input graph for the problem $\ccolsub{H}$.
    We may assume that $G = G_{\setminus\emptyset,\overline{E(H)}}$ by \cref{obs:colsub-preprocess}.
    We wish to determine $\numsub{\can{H}}{G}$. 
    By Lemma~\ref{lem:phi=sub} we have
    \[\numindsub{\Phi}{\star} = \sum_F \widehat{\Phi}(F) \cdot \numsub{F}{\star} =: f(\star),\]
    where $F$ ranges over all $k$-vertex graphs.
    Invoking Lemma~\ref{lem:sub-monotonicity}, we obtain that
    \begin{equation}
        \label{eq:using-submon-in-reduction}
        \widehat{\Phi}(H) \cdot \numsub{\can{H}}{G} = \sum_{\substack{X\subseteq V(H)\\Y\subseteq E(H)}}(-1)^{|X|+|Y|}f(G_{\setminus X,Y}^{\circ}).
    \end{equation}
    So we can compute $\numsub{\can{H}}{G}$ by evaluating the right-hand side of \eqref{eq:using-submon-in-reduction} and dividing by $\widehat{\Phi}(H) \neq 0$.
    Note that all relevant values $f(G_{\setminus X,Y}^{\circ})$ can be obtained by the oracle calls $\numindsub{\Phi}{G^\circ_{\setminus X,Y}}$ without parameter increase in overall time $2^{|V(H)|+|E(H)|} \cdot n^{O(1)}$.
    The value $\widehat{\Phi}(H)$ can be computed by brute-force by evaluating $\Phi$ on $2^{O(k^2)}$ many $k$-vertex graphs.
    Hence, an $O(n^{\beta \cdot \ell})$ algorithm for $\sindsub{\Phi}$ gives an $O(n^{c \cdot \beta \cdot \ell})$ for $\ccolsub{H}$ for some suitable fixed constant $c$.
    Now, the theorem follows from Theorem \ref{thm:dense-hard}. 
\end{proof}

\end{document}